\documentclass[a4paper,USenglish,cleveref]{lipics-v2021}

\hideLIPIcs
\nolinenumbers

\usepackage[numbers,sort&compress]{natbib}
\makeatletter
\def\NAT@spacechar{~}
\makeatother

\usepackage[ruled]{algorithm}

\usepackage[skip=0pt]{subcaption}
\usepackage{sidecap}

\usepackage[inline]{enumitem}

\newlist{thmenum}{enumerate}{1}
\setlist[thmenum]{label=\roman*), ref=\thetheorem\,(\roman*)}
\crefalias{thmenumi}{theorem}

\usepackage{array}

\usepackage{pgfplots}
\usepgfplotslibrary{units,groupplots}
\pgfplotsset{compat=newest}

\definecolor{mlblue}{RGB}{170,190,224}
\definecolor{mlbrown}{RGB}{237,162,86}
\definecolor{mlgreen}{RGB}{124,178,83}
\definecolor{mlred}{RGB}{211,94,96}
\definecolor{mlgray}{RGB}{110,115,115}
\definecolor{mlpurple}{RGB}{188,164,202}
\definecolor{mlox}{RGB}{171,104,87}
\definecolor{mlmoss}{RGB}{204,194,16}

\definecolor{mdblue}{RGB}{57,106,177}
\definecolor{mdbrown}{RGB}{230,134,58}
\definecolor{mdgreen}{RGB}{59,147,78}
\definecolor{mdred}{RGB}{204,37,41}
\definecolor{mdgray}{RGB}{83,81,84}
\definecolor{mdpurple}{RGB}{107,76,154}
\definecolor{mdox}{RGB}{146,36,40}
\definecolor{mdmoss}{RGB}{148,139,61}

\pgfplotscreateplotcyclelist{cons}{%
  {mdblue,every mark/.append style={fill=mlblue},mark=*},%
  {mdred,every mark/.append style={fill=mlred},mark=*},%
  {mdbrown,every mark/.append style={fill=mlbrown},mark=square*},%
  {mdgray,every mark/.append style={fill=mlgray},mark=diamond*},%
  {mdpurple,every mark/.append style={fill=mlpurple},mark=diamond*}%
}
\pgfplotscreateplotcyclelist{query}{%
  {mdred,every mark/.append style={fill=mlred},mark=*},%
  {mdbrown,every mark/.append style={fill=mlbrown},mark=square*},%
  {mdgreen,every mark/.append style={fill=mlgreen},mark=square*},%
  {mdgray,every mark/.append style={fill=mlgray},mark=diamond*}%
}

\definecolor{plotcolor0}{RGB}{255,0,0}
\definecolor{plotcolor1}{RGB}{0,0,255}
\definecolor{plotcolor2}{RGB}{64,170,0}

\pgfplotscreateplotcyclelist{mylist}{%
  plotcolor0, every mark/.append style={solid,fill opacity=0.6,scale=1}, mark=diamond* \\%
  plotcolor1, every mark/.append style={solid,fill opacity=0.6,scale=0.75}, mark=square* \\%
  plotcolor2, every mark/.append style={solid,fill opacity=0.6,scale=1}, mark=* \\%
  plotcolor0, every mark/.append style={solid,fill opacity=0.2,scale=1}, mark=diamond*, dashed \\%
  plotcolor1, every mark/.append style={solid,fill opacity=0.2,scale=0.75}, mark=square*, dashed \\%
  plotcolor2, every mark/.append style={solid,fill opacity=0.2,scale=1}, mark=*, dashed \\%
  plotcolor0, every mark/.append style={solid,scale=1}, mark=diamond, densely dotted \\%
  plotcolor1, every mark/.append style={solid,scale=0.75}, mark=square, densely dotted \\%
  plotcolor2, every mark/.append style={solid,scale=1}, mark=o, densely dotted \\%
}

\makeatletter
\newif\ifhidelinks@hidelinks
\newcommand{\hidelinks}{%
  \hidelinks@hidelinkstrue
  \let\hidelinks@ifHy@colorlinks@status\ifHy@colorlinks
  \let\hidelinks@ifHy@ocgcolorlinks@status\ifHy@ocgcolorlinks
  \let\hidelinks@ifHy@frenchlinks@status\ifHy@frenchlinks
  \let\hidelinks@Hy@colorlink\Hy@colorlink
  \let\hidelinks@Hy@endcolorlink\Hy@endcolorlink
  \let\hidelinks@@pdfborder\@pdfborder
  \let\hidelinks@@pdfborderstyle\@pdfborderstyle
  \hypersetup{hidelinks}%
}
\newcommand{\restorelinks}{%
  \ifhidelinks@hidelinks
    \hidelinks@hidelinksfalse
    \let\ifHy@colorlinks\hidelinks@ifHy@colorlinks@status
    \let\ifHy@ocgcolorlinks\hidelinks@ifHy@ocgcolorlinks@status
    \let\ifHy@frenchlinks\hidelinks@ifHy@frenchlinks@status
    \let\Hy@colorlink\hidelinks@Hy@colorlink
    \let\Hy@endcolorlink\hidelinks@Hy@endcolorlink
    \let\@pdfborder\hidelinks@@pdfborder
    \let\@pdfborderstyle\hidelinks@@pdfborderstyle
  \fi
}
\makeatother

\makeatletter
\pgfplotsset{
    groupplot xlabel/.initial={},
    every groupplot x label/.style={
        at={($({\pgfplots@group@name\space c1r\pgfplots@group@rows.west}|-{\pgfplots@group@name\space c1r\pgfplots@group@rows.outer south})!0.5!({\pgfplots@group@name\space c\pgfplots@group@columns r\pgfplots@group@rows.east}|-{\pgfplots@group@name\space c\pgfplots@group@columns r\pgfplots@group@rows.outer south})$)},
        anchor=north,
    },
    groupplot ylabel/.initial={},
    every groupplot y label/.style={
            rotate=90,
        at={($({\pgfplots@group@name\space c1r1.north}-|{\pgfplots@group@name\space c1r1.outer
west})!0.5!({\pgfplots@group@name\space c1r\pgfplots@group@rows.south}-|{\pgfplots@group@name\space c1r\pgfplots@group@rows.outer west})$)},
        anchor=south
    },
    execute at end groupplot/.code={%
      \node [/pgfplots/every groupplot x label]
{\pgfkeysvalueof{/pgfplots/groupplot xlabel}};
      \node [/pgfplots/every groupplot y label]
{\pgfkeysvalueof{/pgfplots/groupplot ylabel}};
    }
}
\def\endpgfplots@environment@groupplot{%
    \endpgfplots@environment@opt%
    \pgfkeys{/pgfplots/execute at end groupplot}%
    \endgroup%
}
\makeatother

\newenvironment{customlegend}[1][]{%
  \begingroup\csname pgfplots@init@cleared@structures\endcsname\pgfplotsset{#1}%
}{\csname pgfplots@createlegend\endcsname\endgroup}%
\def\addlegendimage{\csname pgfplots@addlegendimage\endcsname}

\usepackage{placeins}

\usepackage{xspace}
\usepackage{booktabs}

\newcommand{\Tstart}{\alpha}
\newcommand{\Tword}{\beta}
\newcommand{\eps}{\ensuremath{\varepsilon}}
\newcommand{\ie}{\emph{i.e.,}\xspace}
\newcommand{\eg}{\emph{e.g.,}\xspace}
\newcommand{\cf}{\emph{cf.}\xspace}

\newcommand{\wmin}{w_{\min}}
\newcommand{\wmax}{w_{\max}}
\newcommand{\Tsel}{T_{\textit{sel}}}
\newcommand{\rand}[1][]{\FuncSty{rand}(#1)}

\newdimen\slantmathcorr
\def\oversl#1{%
\setbox0=\hbox{$#1$}
\slantmathcorr=\wd0
\hskip 0.2\slantmathcorr \overline{\hbox to 0.8\wd0{%
\vphantom{\hbox{$#1$}}}}
\hskip-\wd0\hbox{$#1$}
}

\def\undersl#1{%
\setbox0=\hbox{$#1$}
\slantmathcorr=\wd0
\underline{\hbox to 0.8\wd0{%
\vphantom{\hbox{$#1$}}}}
\hskip-0.8\wd0\hbox{$#1$}
}

\newcommand{\imin}{\undersl{i}}
\newcommand{\imax}{\oversl{i}}
\newcommand{\jmin}{\undersl{j}}
\newcommand{\jmax}{\oversl{j}}
\newcommand{\kmin}{\underline{k}}
\newcommand{\kmax}{\overline{k}}

\newcommand{\set}[1]{\left\{ #1\right\}}
\newcommand{\setsmall}[1]{\{ #1\}}
\newcommand{\setGilt}[2]{\set{#1: #2}}
\newcommand{\seq}[1]{\left\langle #1\right\rangle}
\newcommand{\seqGilt}[2]{\seq{#1: #2}}

\newcommand{\Def}{:=}
\newcommand{\Is}{:=}
\newcommand{\Oh}[1]{\mathcal{O}\!\left( #1\right)}
\newcommand{\Ohsmall}[1]{\mathcal{O}(#1)}
\newcommand{\Ohsmash}[1]{\smash{\Oh{#1}}}
\newcommand{\oh}[1]{\mathrm{o}\!\left( #1\right)}
\newcommand{\Th}[1]{\Theta\!\left( #1\right)}
\newcommand{\Thsmall}[1]{\Theta(#1)}
\newcommand{\Om}[1]{\Omega\left(#1\right)}
\newcommand{\Omsmall}[1]{\Omega(#1)}

\newcommand{\prob}[1]{{\mathbf{P}}\!\left[#1\right]}
\newcommand{\expect}[1]{{\mathbf{E}}\!\left[#1\right]}
\newcommand{\ceil}[1]{\left\lceil #1\right\rceil}
\newcommand{\floor}[1]{\left\lfloor #1\right\rfloor}
\newcommand{\card}[1]{\left\vert{#1}\right\vert}
\newcommand{\punkt}{\enspace .}

\newcommand{\real}{\mathbb{R}}
\newcommand{\rplus}{\mathbb{R}_+}
\newcommand{\nat}{\mathbb{N}}
\newcommand{\Increment}{\raisebox{.12ex}{\hbox{\tt ++}}}

\usepackage[algo2e, ruled, vlined, linesnumbered, norelsize]{algorithm2e}
\DontPrintSemicolon
\SetAlCapHSkip{0pt}
\IncMargin{-\parindent}
\SetProgSty{}
\SetFuncSty{}
\SetArgSty{}
\crefname{algocf}{Algorithm}{Algorithms}

\SetKwComment{tcp}{---\,}{}%
\SetCommentSty{textit}
\newcommand{\Comment}[1]{\tcp*{#1}}
\newcommand{\Commenti}[1]{\tcp*[f]{#1}}

\SetKwProg{Function}{Function}{}{}
\SetKwProg{Procedure}{Procedure}{}{}
\SetKwFor{ParFor}{for}{dopar}{}
\SetKwBlock{Loop}{loop}{}
\SetKw{To}{to}
\SetKw{And}{and}
\SetKw{Break}{break}

\newcommand{\POne}{\makebox{WRS--1}\xspace}
\newcommand{\PWith}{WRS--R\xspace}
\newcommand{\PWithout}{WRS--N\xspace}
\newcommand{\PSubset}{WRS--P\xspace}
\newcommand{\PRes}{WRS--B\xspace}
\newcommand{\PPerm}{WRP\xspace}
\newcommand{\BPOne}{\textbf{\POne}\xspace}
\newcommand{\BPWith}{\textbf{\PWith}\xspace}
\newcommand{\BPWithout}{\textbf{\PWithout}\xspace}
\newcommand{\BPSubset}{\textbf{\PSubset}\xspace}
\newcommand{\BPRes}{\textbf{\PRes}\xspace}
\newcommand{\BPPerm}{\textbf{\PPerm}\xspace}

\newcommand{\twolvl}{\textsf{2lvl}\xspace}
\newcommand{\twoclassic}{\textsf{2lvl-classic}\xspace}
\newcommand{\twosweep}{\textsf{2lvl-sweep}\xspace}
\newcommand{\osens}{\textsf{OS}\xspace}
\newcommand{\osnd}{\textsf{OS-ND}\xspace}
\newcommand{\psa}{\textsf{PSA}\xspace}
\newcommand{\psag}{\textsf{PSA+}\xspace}

\hypersetup{
  colorlinks=true,
  pdftitle={Parallel Weighted Random Sampling},
  pdfauthor={Lorenz Hübschle-Schneider and Peter Sanders},
  pdfsubject={}
}

\title{Parallel Weighted Random Sampling}

\author{Lorenz Hübschle-Schneider}{Karlsruhe Institute of Technology, Germany}{huebschle@kit.edu}{}{}
\author{Peter Sanders}{Karlsruhe Institute of Technology, Germany}{sanders@kit.edu}{}{}
\authorrunning{L. Hübschle-Schneider and P. Sanders}
\acknowledgements{This work was performed on the supercomputer ForHLR funded by the Ministry of
  Science, Research and the Arts Baden-Württemberg and by the Federal Ministry
  of Education and Research.}

\Copyright{Lorenz Hübschle-Schneider and Peter Sanders}
\begin{CCSXML}
<ccs2012>
<concept>
<concept_id>10003752.10003809.10010055.10010057</concept_id>
<concept_desc>Theory of computation~Sketching and sampling</concept_desc>
<concept_significance>500</concept_significance>
</concept>
<concept>
<concept_id>10003752.10003809.10010170</concept_id>
<concept_desc>Theory of computation~Parallel algorithms</concept_desc>
<concept_significance>500</concept_significance>
</concept>
<concept>
<concept_id>10003752.10003809.10010031</concept_id>
<concept_desc>Theory of computation~Data structures design and analysis</concept_desc>
<concept_significance>300</concept_significance>
</concept>
</ccs2012>
\end{CCSXML}

\ccsdesc[500]{Theory of computation~Sketching and sampling}
\ccsdesc[500]{Theory of computation~Parallel algorithms}
\ccsdesc[300]{Theory of computation~Data structures design and analysis}

\keywords{categorical distribution, multinoulli distribution, parallel algorithm, alias method, PRAM, communication efficient algorithm, subset sampling, reservoir sampling}

\relatedversion{Preliminary versions of this paper were presented at the 27th
  European Symposium on Algorithms (ESA 2019) \cite{HubSan19c} and the 32nd ACM
  Symposium on Parallelism in Algorithms and Architectures (SPAA 2020)
  \cite{HubSan20}.  We also draw on material from the first author's
  dissertation \cite{lorenzdiss}.}

\supplement{The code and scripts used for our experiments are available under
  the GPLv3 at \url{https://github.com/lorenzhs/wrs} and
  \url{https://github.com/lorenzhs/reservoir}.}

\begin{document}

\maketitle

\begin{abstract}
  Data structures for efficient sampling from a set of
  weighted items are an important building block of many applications.
  However, few parallel solutions are known.  We close many of these
  gaps both for shared-memory and distributed-memory machines.  We
  give efficient, fast, and practicable parallel algorithms for
  building data structures that support sampling single items (alias
  tables, compressed data structures).  This also yields a simplified
  and more space-efficient sequential algorithm for alias table
  construction. Our approaches to sampling $k$ out of $n$ items
  with/without replacement and to subset (Poisson) sampling are
  \emph{output-sensitive}, i.e., the sampling algorithms use work
  linear in the number of \emph{different} samples. This is also
  interesting in the sequential case.  Weighted random permutation can
  be done by sorting appropriate random deviates.  We show that this
  is possible with linear work using a nonlinear transformation of
  these deviates.  Finally, we give a communication-efficient, highly
  scalable approach to (weighted and unweighted) reservoir
  sampling. This algorithm is based on a fully distributed model of
  streaming algorithms that might be of independent interest.
  Experiments for alias tables and sampling with replacement show near
  linear speedups both for construction and queries using up to 158
  threads of shared-memory machines.  An experimental evaluation of
  distributed weighted reservoir sampling on up to 256 nodes (5120
  cores) also shows good speedups.
\end{abstract}

\section{Introduction}\label{intro}

Weighted random sampling (WRS) asks for sampling items (elements) from a set
such that the probability of sampling item $i$ is proportional to a
given weight $w_i$.  Several variants of this fundamental
computational task appear in a wide range of applications in
statistics and computer science, \eg for computer simulations, data
analysis, database systems, and online ad auctions (see, \eg
Refs. \cite{motwani1995randomized,olken1995survey}). These methods build
on (pseudo)random number generators that provide uniformly distributed numbers (\eg
Ref. \cite{MatNis98}).

Continually growing data volumes (``Big Data'') imply that
the input sets and even the sample itself can become large.
Since actually processing the sample is often fast,
sampling algorithms can easily become a performance bottleneck.
As a result of persistently stagnating clock speeds,
\emph{parallel algorithms} are required to extract performance gains from
advances in hardware.  This includes \emph{shared-memory} algorithms that
exploit current multi-core processors as well as \emph{distributed-memory} algorithms
that split the work across multiple processors without incurring too much overhead
for communication.

However, there has been surprisingly little work on parallel weighted
sampling.  This paper closes many of these gaps.
 Below, the input set $A$ is
assumed to be $1..n$ without loss of generality (throughout this paper we use $i..j$ as a
shorthand for $\set{i,\ldots,j}$).
Item $i$ has weight $w_i$ and let $W \Def \sum_{i=1}^{n}{w_i}$ be the sum of all
weights.  The item's \emph{relative weight} is its share of the total weight,
$w_i/W$.
We consider the following weighted sampling problems:
\begin{description}
\item[\BPOne:] Weighted sampling of \emph{one item} $X$ from a categorical (or multinoulli) distribution
  (equivalent to \PWith and \PWithout for $k=1$), \ie $\prob{X=i}=w_i/W$.
\item[\BPWith:] Sample $k$ items from $A$ \emph{with replacement}, \ie the samples are independent and
  for each sample $X$, $\prob{X=i}=w_i/W$. Let $s \leq k$ denote the number of \emph{different}
  items in the sample $S$. Note that we may have $s\ll k$ for skewed input distributions.
\item[\BPWithout:] Sample $k$ pairwise unequal items $s_1\neq\cdots\neq s_k$
  \emph{without replacement} such that for any sample index $j \in 1..k$ and
  item $i\notin\set{s_\ell\mid \ell<j}$, \kern0.3em
  $\prob{s_j=i}=w_i/(W-\sum_{\ell<j}{w_{s_\ell}})$.%
  \footnote{Some publications use a different definition of \PWithout where
    items are sampled with probability $kw_i/W$. This requires special consideration of items with $w_i>W/k$, \eg Ref. \cite[Example 2]{efraimidis2015stream}.\label{fn:probdef}}
\item[\BPPerm:] Permute the items with the same process as for \PWithout using $k=n$.
\item[\BPSubset:] Sample a \emph{subset} $S \subseteq A$ where $\prob{i\in S}=w_i\leq 1$.
  Also known as Poisson sampling.
\item[\BPRes:] \emph{Batched reservoir sampling}. Repeatedly solve
  \PWithout when batches of new items arrive. Only
  the current sample and batch may be stored. Let $b$ denote the batch size.
\end{description}

\newcommand{\isort}[3]{\ensuremath{\mathrm{isort}^{#3}_{#1}(#2)}}
\newcommand{\iso}[2]{\isort{#1}{#2}{*}}
\newcommand{\mc}[2]{\multicolumn{2}{#1}{#2}}
\newcommand{\mcna}{--- & ---}

\begin{table}[t]
  \caption{Result overview (expected asymptotic running times).  Input size~$n$,
    output size $s$, sample size~$k$, startup latency of point-to-point
    communication $\Tstart$, time for communicating one machine word $\Tword$,
    log-weight ratio $u=\log U=\log \wmax/\wmin$, mini-batches of $b$ items per
    PE.  The complexity of sorting $n$ integers with keys from $0..x$ is
    $\mathrm{isort}_x(n)$ ($\mathrm{isort}^*=$ parallel, $\mathrm{isort}^1=$
    sequential).  Distributed-memory results assume randomly distributed inputs
    for ease of presentation in this table.}
  \centering
    \begin{tabular}{lrllll<{\hspace{2mm}}rll}
      & \multicolumn{3}{c<{\hspace{2mm}}}{Shared memory} & \multicolumn{3}{c}{Distributed mem. w. random data distrib.}\\
      Problem  & §             & Preprocessing  & Query &  §  & Preprocessing                          & Query     \\
      \cmidrule(r){1-1}\cmidrule(r{2mm}){2-4}\cmidrule(r){5-7}
      \POne     & \ref{se}     & $\frac np+\log p$  & $1$          & \ref{dist}       & $\frac np + \Tstart \log p$   & $\Tstart$ \\
      \PWith    & \ref{kwith}  & \iso{u}{n} & $\frac sp + \log n$  & \ref{kwith:dist} & $\isort{u}{\frac np}{1}+\Tstart\log p$   & $\frac sp + \log p$ \\
      \PWithout & \ref{kno}    & \iso{u}{n} & $\frac kp + \log n$  & \ref{ss:nr_dist} & $\isort{u}{\frac np}{1}+\Tstart\log p$   & $\frac kp + \Tstart\log^2 n$ \\
      \PWithout &              &            &                      & \ref{ss:nr_dist} & $\frac np + \Tword u+\Tstart\log p$   & $\frac kp + \Tstart\log n$ \\
      \PPerm    & \ref{perm}   & ---        & \iso{n(u+\log n)}{n} & \ref{perm}       & ---                                   & \iso{n(u+\log n)}{n}\\
      \PSubset  & \ref{subset} & $\frac np + \log p$ & $\frac sp +\log p$ & \ref{subset:dist} & $\frac np + \log p$  & $\frac{s}{p}+\log p$     \\
      \PRes     &              &            &                      & \ref{res} & \mc{r}{\hspace*{2mm}---\hspace*{2mm}\hfill$(b+1)\log(b+k)+\!\Tstart\log^2kp$}     \\
    \end{tabular}
  \label{tbl:results}
\end{table}

Table~\ref{tbl:results} summarizes our results.  When applicable, our
algorithms build a data structure once which is later used to support
fast sampling queries. Basically, the results say that we can solve
most problems with (optimal) linear work and logarithmic (or even
constant) latency. More complicated formulas stem from the fact that
some of the results reduce subproblems to sorting sets of integers.
The exact bounds for integer sorting depend on the machine model and
the range of the sorted keys.  To quantify that, we define
$u\Def \log U$ where\footnote{Throughout this paper, we use $\log{x}$
  to denote $\log_2{x}$, and $\ln{x}$ to denote the natural logarithm
  $\log_e(x)$.}
 $U \Def \wmax/\wmin\Def \max_i w_i/\min_i w_i$ and use
black box formulas explained in Section~\ref{ss:models}.

We are not aware of previous competitive parallel algorithms or more efficient
sequential algorithms.  The distributed algorithms are
refinements of the shared-memory algorithms with the goal to reduce
communication costs compared to a direct distributed implementation.
As a consequence, each PE mostly works
on its local data (the owner-computes approach). Communication -- if
at all -- is only performed to coordinate the PEs and is sublinear in
the local work except for extreme corner cases. The owner-computes
approach introduces the complication that differences in local work
introduce additional parameters into the analysis that characterize
the local work in different situations.  The distributed part of Table~\ref{tbl:results}
therefore covers the case when items are randomly assigned to PEs.
This simplifies the exposition and is often possible in practice.

\paragraph*{Outline and Contributions.}
First, in \cref{prel}, we review the models of computation used
in this paper as well as known techniques we are building on.
We discuss additional related work  in \cref{rel}.
In \cref{s:alias}, we consider problem \POne.
We first give an improved sequential algorithm for constructing
\emph{alias tables} -- the most widely used data structure for problem
\POne that allows sampling in constant time.
Then we parallelize that algorithm for shared and distributed memory.
We also present parallel construction for a more space efficient variant \cite{BLK13}.

Sampling $k$ items with replacement (problem \PWith) seems to be
trivially parallelizable with an alias table.  However, using a
distributed alias table does not lead to a communication-efficient
distributed algorithm, and we can generally do better than repeatedly
sampling a single item for skewed input distributions where the number
of \emph{distinct} output items $s$ can be much smaller than~$k$.
\Cref{kwith} develops an \emph{output-sensitive} algorithm, \ie its
work is linear with respect to the output size. The approach combines
a table-based approach \cite{BLK13} with divide-and-conquer trees
\cite{SandersLHSD18}.  This is interesting both as a parallel and as a
sequential algorithm.

\Cref{kno} employs the algorithm for problem \PWith to
solve problem \PWithout.  The main difficulty here is to estimate the
right number of samples with replacement to obtain a sufficient number
of distinct samples. Then an algorithm for \PWithout without preprocessing
is used to reduce the ``weighted oversample'' to the desired exact output size.

The weighted permutation problem \PPerm can be
reduced to sorting (see \cref{prel:exp}). We show in
\cref{perm} that this is actually possible with linear
work by appropriately defining the (random) sorting keys so that we
can use integer sorting with a small number of different keys. Since
previous linear-time algorithms are fairly complicated~\cite{lang2014riffle},
this may also be interesting for a sequential
algorithm. Indeed, a similar approach might also work for other
problems where sorting can be a bottleneck, \eg smoothed analysis of
approximate weighted matching \cite{MauSan07}.

For subset sampling (problem \PSubset), we parallelize the approach of
\citet{bringmann2017subset} in \cref{subset}.
Once more, the preprocessing requires integer sorting.  However, only
$\Oh{\log n}$ different keys are needed so that linear work sorting
works with logarithmic latency even deterministically on an EREW PRAM.
In \cref{res}, we approach \PRes by introducing a scalable model for
distributed streaming computations. We then adapt the sequential
streaming algorithm of \citet{efraimidis2006reservoir} to this model
by applying distributed priority queues \cite{HubSan16topk}.

\Cref{exp} gives a detailed experimental evaluation of our algorithms for \POne, \PWith, and \PRes. \Cref{concl} summarizes the results and discusses possible future
directions.

\section{Preliminaries}\label{prel}

\subsection{Models and Basic Tools for Parallel Computation}\label{ss:models}

We give the complexity of algorithms using the usual asymptotic notation with
$\Oh{\cdot}$ for upper bounds, $\Om{\cdot}$ for lower bounds, and $\Th{\cdot}$
for simultaneous upper and lower bounds, and $o(\cdot)$ for lower order terms -- see, \eg Ref. \cite{MehSanPar}.

We strive to present our parallel algorithms in a model-agnostic way,
\ie we largely describe them in terms of standard operations such as
prefix sums for which efficient parallel algorithms are known on
various models of computation.  We analyze the algorithms for two
simple models of computation, the PRAM (Parallel Random Access Machine) and a distributed message-passing
model.  In each case, $p$ denotes the number of
processing elements (PEs).  Most of our algorithms achieve
running time polylogarithmic in the input size for a sufficiently large number of
PEs. This is a classical goal in parallel algorithm theory, and we
believe that it is now becoming practically important with the advent
of massively parallel computing and fine-grained parallelism in GPGPUs (general purpose graphics processing units).

For shared-memory algorithms, we use the CREW PRAM model (Concurrent
Read Exclusive Write PRAM) \cite{Jaj92}.  We
will use the concepts of \emph{work} and \emph{span} of a
computation to analyze these algorithms.
The work is the total number of clock cycles invested by all~PEs.
The span is the time needed by a parallel algorithm with an unbounded number of~PEs.
In this paper, the work is usually linear in the input size $n$ and the span logarithmic in~$n$.
This is a textual way to say that the running time is $\Oh{n/p+\log p}$
for inputs of size~$n$ and $p$ PEs.

For distributed-memory computations, we use point-to-point
communication between PEs where passing a message of length $\ell$
takes time $\Tstart+\ell\Tword$.
We assume $1\leq \Tword\leq\Tstart$ where ``1'' is the unit of time for internal computation in the RAM model (basically counting machine instructions).
We will use that \emph{prefix sums} and \emph{(all-)reductions}
can be computed in time
$\Oh{\Tword\ell+\Tstart\log p}$ for vectors of size~$\ell$ \cite[Chapters~13.1--13.3]{MehSanPar}. The \emph{all-gather} operation collects a value
from each PE and delivers all values to all PEs. It can be implemented
to run in time $\Oh{\Tword p+\Tstart\log p}$ \cite[Chapter~13.5]{MehSanPar}.
We will particularly strive to obtain
\emph{communication-efficient algorithms}~\cite{SSM13} where
total communication cost is sublinear in the local computation time
and local input size.
Some of our algorithms are even \emph{communication free} (\eg Section~\ref{kwith:dist}).

We need one basic toolbox operation where the concrete machine model
has some impact on the complexity. Sorting $n$ items with integer
keys from $1..K$ can be done with linear work in many relevant
cases. Sequentially, this is possible if $K$ is polynomial in
$n$ (radix sort).  Radix sort can be parallelized even on a
distributed-memory machine with linear work and span $n^{\eps}$ for
any constant $\eps>0$.  Logarithmic span is possible for
$K=\Oh{\log^c n}$ for any constant $c$, even on an EREW PRAM
\cite[Lemma~3.1]{raja1989sort}.  For a CRCW PRAM, linear work
and logarithmic span can be achieved with high probability when
$K=\Oh{n\log^c n}$ \cite{raja1989sort}\footnote{The paper gives the constraint $K=n$ in
its Theorem~3.1 but the generalization is important for us in \cref{perm} and
already implicitly proven in the paper: simply increase the number of most
significant bits sorted by \emph{Fine\_Sort} from $3 \log \log n$ to
$(3+c+\oh{1}) \log \log n$, which does not change the algorithm's asymptotic
running time~\cite[Lemma~3.3]{raja1989sort}.}.  Resorting to comparison-based
algorithms, we get work $\Oh{n\log n}$ and $\Oh{\log n}$ span on an EREW
PRAM~\cite{Col88}.

\subsection{Bucket-Based Sampling}\label{prel:alias}

The basic idea behind several solutions to problem~\POne is to build a
table of $m\in\Th{n}$ buckets where each bucket represents a total
weight of $\Oh{W/m}$.  Sampling then chooses a bucket uniformly at
random and uses the information stored in the bucket to determine the
actual item \cite{devroye1986,olken1995survey}.  For example, if item weights differ only by a constant
factor $a$, we can use $n$ buckets of width $aW/m$ and simply store
one item per bucket.
Then we can use rejection sampling to obtain constant expected query time.
We use this method in \cref{kwith}. In \cref{ss:succinct} we describe another method
based on rejection sampling that uses $\leq 2n$ buckets of width $W/n$,
splitting each item between $\ceil{nw_i/W}$ buckets.

Sampling in worst-case constant time with only a single memory probe is possible using
Walker's alias table method \cite{walker1977alias}, and its improved
construction due to \citet{vose1991alias}.
An alias table consists of $m\Is n$ buckets of capacity $W/n$ each, where bucket
$b[i]$ represents some part $w_i'$ of the weight of item $i$. We have
$w_i'=w_i$ for \emph{light} items with weight $w_i\leq W/n$.
The remaining weight $w_i-w_i'$ of the \emph{heavy} items with $w_i>W/n$ is
distributed to the remaining capacity $W/n-w_i'$ of the buckets such that
each bucket only represents at most one other item (the \emph{alias} $a_i$).
Algorithm~\ref{alg:vose} gives high-level pseudocode for the approach proposed
by Vose, storing $w_i'$ and $a_i$ as $b[i].w$ and $b[i].a$, respectively.
The items are first classified into light and heavy items.
Each heavy
item is distributed over light items until its residual weight drops below $W/n$.
It is then treated in the same way as a light item.
To sample an item from an alias table, pick a
bucket index $r$ uniformly at random, toss a biased coin that comes up
heads with probability $w_r' \cdot n/W$, and return item $r$ for heads, or its alias $a_r$
for tails.

\begin{algorithm2e}[h]
  \caption{Classical construction of alias tables similar to Vose's approach
    \cite{vose1991alias}.}\label{alg:vose}
  \KwIn{$\seq{w_1,\ldots,w_n} \in \real^n$ the weights of the $n$ input items}
  \KwOut{$b$, an alias table consisting of $n$ pairs $(w,a)$ of (partial) weight
    $w$ and alias $a$}
  \Function{\FuncSty{voseAliasTable}$(\seq{w_1,\ldots,w_n})$}{
    $W \Def \sum_{i=1}^{n}{w_i}$ \Comment{total weight}
    $h \Def \setGilt{i\in 1..n}{w_i>W/n}$ as stack \Comment{heavy items}
    $\ell \Def \setGilt{i\in 1..n}{w_i \leq W/n}$ as stack \Comment{light items}
    $b \Def \seq{(w_1,\bot), (w_2,\bot), \ldots, (w_n,\bot)}$ \Comment{init
      table with item weights, dummy aliases}
    \While(\Commenti{consume heavy items}){$h \neq \emptyset$}{
      $j \Def h.\FuncSty{pop()}$ \Comment{get a heavy item}
      \While(\Commenti{still heavy}){$b[j].w > W/n$}{
        $i \Def \ell.\FuncSty{pop()}$ \Comment{get a light item}
        $b[i].a \Def j$ \Comment{fill bucket $b[i]$ with a piece of item $j$}
        $b[j].w \Def (b[j].w + b[i].w) - W/n$ \Comment{avoid cancellation in $b[j].w\!-\!(W/n\!-\!b[i].w)$}
      }
      $\ell.\FuncSty{push}(j)$ \Comment{item $j$ is light now}
    }
    \Return $b$
  }
\end{algorithm2e}

\newcommand{\selsize}{g}
\newcommand{\selrank}{r}

\subsection{Weighted Sampling using Exponential Variates}\label{prel:exp}

It is well-known that a uniform sample without replacement of size $k$ out of
$n$ items $1..n$ can be obtained by associating with each item a uniform variate
from the interval $(0,1]$, $v_i \Def \rand$, and selecting the $k$ items with
the smallest associated variates $v_i$ (\eg
Refs.~\cite{sunter1977list,fan62res}).  This method can be generalized to
generate a \emph{weighted} sample without replacement by raising uniform
variates to the power of the inverse of the items' weights, that is,
$v_i \Def \rand^{1/w_i}$ and selecting the $k$ items with the \emph{largest}
associated values $v_i$
\cite{efraimidis1999par,efraimidis2006reservoir,efraimidis2015stream}.
Equivalently, one can generate \emph{exponentially distributed random variates} $v_i \Def
-\ln(\rand)/w_i$ and select the $k$ items with the
\emph{smallest} associated $v_i$ \cite{arratia2002expclock}
(``\emph{exponential clocks method}''), which amounts to a simple
$x \mapsto -\ln(x)$ mapping of the above, but is numerically more stable and
easier to generate in a vectorized fashion.

\subsection{Divide-and-Conquer Sampling}\label{prel:daq}

Uniform sampling with and without replacement can be done using a
divide-and-conquer algorithm~\cite{SandersLHSD18}.   To sample $k$
out of $n$ items uniformly and \emph{with replacement}, split the set into
two subsets with $n'$ (left) and $n-n'$ (right) items, respectively. Then the
number of items $k'$ to be sampled from the left has a binomial distribution
($k$ trials with success probability~$n'/n$).
We can generate $k'$ accordingly and then recursively sample $k'$
items from the left and $k-k'$ items from the right.  This can be used
for a communication-free parallel sampling algorithm. We have a tree
with $p$ leaves. Each leaf represents a subproblem of size about $n/p$
-- one for each PE. Each PE descends this tree to the leaf assigned to
it (time $\Oh{\log p}$) and then generates the resulting number of
samples (time $\Oh{k/p+\log p}$ with high probability).  Different PEs
have to draw the same random variates for the same interior node of
the tree.  This can be achieved by seeding a pseudo-random number
generator with an ID of this node.

\subsection{Selection from Sorted Sequences}\label{prel:sel}

Our reservoir sampling algorithm relies on the basic operation of
selecting the $\selrank$ smallest elements from a set.  More concretely, we use
the setting where the elements are distributed over the
PEs but sorted locally. Which selection algorithm to use depends
on the data and requirements.  Let $\selsize$ be the maximum input size at any
PE and $\selrank$ be the rank of the desired item.

\begin{enumerate}[nosep,wide,label=(\arabic*)]
\item \label{selapprox}%
  If $\selrank$ is allowed to be in some range
  $\underline{\selrank}..\overline{\selrank}$ with
  $\overline{\selrank}-\underline{\selrank}\in\Omsmall{\overline{\selrank}/d}$
  for some integer $d$, it is possible to find an item with rank between
  $\underline{\selrank}$ and $\overline{\selrank}$ in expected time
  $\Oh{d \log \selsize + \Tword d + \Tstart \log p}$ \cite[Lemma 3]{HubSan16topk}.

\item \label{selgeneral}%
  If $\selrank$ is fixed, we can still use amsSelect, supplying it with exact
  bounds $\overline{\selrank}=\underline{\selrank}=\selrank$.  This requires
  expected time $\Oh{\log^2 \selrank + \Tstart \log \selrank \log p}$
  \cite[Theorem 2.7]{lorenzdiss}.

\item \label{selrandom} We can also adapt a selection algorithm for unsorted
  inputs, and with high probability, time
  $\Ohsmall{\log^2 \selsize / \log p + \Tword \min(\selsize,
    \sqrt{p}(1+\log_p\selsize))+\Tstart\log(\selsize p)}$ will suffice
  \cite[Chapter 3.1.5]{lorenzdiss}.  For randomly distributed items, this
  reduces to time $\Oh{\log^2 \selsize/\log p + \Tstart \log(\selsize p)}$ with
  high probability.
\end{enumerate}

\section{Related Work}\label{rel}

This paper extends the conference paper \cite{HubSan19c} and the brief announcement \cite{HubSan20}.
It also draws some material from the first author's dissertation \cite{lorenzdiss}.

For the more general field of sampling with unequal probabilities, we refer to
\citet{brewer1983unequal} and \citet{tille2006sampling}, who provide a
comprehensive overview.  For the most part, the methods surveyed therein focus
on estimating the total of a measure that is assumed to correlate with the
items' weights.  Here, we focus on the literature on (efficiently) computing
various kinds of samples from categorical distributions.

\subsection{Sampling one Item (Problem \POne)}

Extensive work has been done on generating discrete random variates from a
fixed categorical distribution
\cite{walker1977alias,vose1991alias,marsaglia2004fixedbase,devroye1986,bringmann2017subset}.
All these approaches use preprocessing to construct a data structure that
subsequently supports very fast (constant time) sampling of a single item.
The best-known approach is the alias method
\cite{walker1977alias,vose1991alias}, which computes in linear time a data
structure that allows solving problem \POne in constant time, \cf
\cref{prel:alias}.
\citet{BLK13} explain how to achieve expected time $r$ using only $\Oh{n/r}$
bits of space beyond the input distribution itself.
Data structures for \POne can also be dynamized \cite{hagerup1993dynamic,vitter2013theory,berenbrink2020dynalias}
to support weight updates and insertions for integer weights.

\subsection{Sampling Without Replacement (Problems \PWithout and \PPerm)}\label{rel:no}

The exponential clocks method of \cref{prel:exp} (see also
Refs.~\cite{efraimidis1999par,efraimidis2006reservoir,efraimidis2015stream,arratia2002expclock,cohen2007bottomk})
is a simple $\Oh{n}$ algorithm for sampling without replacement.  This approach also
lends itself towards use in streaming settings (\emph{reservoir sampling}) and
can be combined with a skip value distribution to reduce the number of required
random variates from $\Oh{n}$ to $\Oh{k \log \frac n k}$ in
expectation~\cite{efraimidis2006reservoir}.  \citet{cohen2007bottomk} mention that this
approach can also be used in the model of \emph{bottom-$k$ sketches}.
A related algorithm for \PWithout with given
inclusion probabilities instead of relative weights is described by \citet{chao1982reservoir}.
\citet{braverman2015float} present another sequential algorithm using a
reduction to sampling \emph{with} replacement (\emph{cascade sampling}).

More efficient algorithms for \PWithout repeatedly
sample an item and remove it from the distribution using a
dynamic data structure \cite{wong1980norep,olken1995survey,hagerup1993dynamic,vitter2013theory}.
With the most efficient such algorithms \cite{hagerup1993dynamic,vitter2013theory} we achieve
time $\Oh{k}$, albeit at the price of an inherently sequential and rather complicated algorithm
that might have considerable hidden constant factors.

It is also possible to combine techniques for sampling \emph{with} replacement
with a rejection method.  However, the performance of these methods depends
heavily on $U$, the ratio between the largest and smallest weight in the input,
as the rejection probability rises steeply once the heaviest items are removed.
\citet{lang2014riffle} gives an analysis and experimental evaluation of such
methods for the case of $k=n$ (\cf ``Permutation'' below).
A recent practical evaluation of approaches that lend themselves towards
efficient implementation is due to \citet{muller2016eval}.

The \emph{permutation} problem \PPerm can be seen as special case of
sampling without replacement with $k=n$. In particular, sorting the
exponential variates from \cref{prel:exp} computes such a permutation.
\citet{lang2014riffle} compares different approaches to problem
\PPerm.

\subsection{Poisson Sampling / Subset Sampling (Problem \PSubset)}

Poisson sampling \cite{hajek1964poisson} is often used in statistical
estimation, \eg Ref.~\cite{uscensusasm}.
It also has applications in graph generation, such as generating Chung-Lu random
graphs \cite{chung2003randgraph}.
Refer to \citet{tille2006sampling} for an overview of methods used for
statistical purposes.  \citet{bringmann2017subset} present an efficient
sequential algorithm.  In computer science literature, Poisson sampling is
sometimes also called \emph{subset sampling}, with equivalent definitions.

\subsection{Parallel Sampling}
Uniform (pseudo)random number generators such as Mersenne Twisters \cite{MatNis98}
can generate random numbers in parallel by using different seeds on each PE.
We use this as a basis for our algorithms.
Beyond that, there is surprisingly little work on parallel sampling.  Even uniform
unweighted sampling had many loose ends until recently
\cite{SandersLHSD18}.  Parallel uniformly random permutations are
covered in~Refs.~\cite{Hag91,San98c}.
Efraimidis and Spirakis note that \PWithout can be solved in parallel with span $\Oh{\log n}$ and
work $\Oh{n\log n}$~\cite{efraimidis1999par}.
They also note that solving the selection problem suffices if the output need not be sorted.
The optimal dynamic data structure for \POne~\cite{vitter2013theory}
admits a parallel bulk update in the (somewhat
exotic) combining-CRCW-PRAM model, where concurrent
write operations are \emph{combined} in a Fetch\&Add step
\cite{gottlieb1983ultracomp}.  This step adds all concurrently written values to
the existing contents of the memory location and also records all prefix sums
that would occur if this process was to be executed sequentially.  Even so, this
does not help with problem \PWithout since batch sizes are one.

\subsection{Reservoir Sampling}
Uniform sampling from data streams has been studied since at least the early
1960s \cite{fan62res} and several asymptotically optimal algorithms are known
\cite{Vit85,li1994reservoir}.  Their key insight is that it is possible to
compute the distance between two samples in constant
time~\cite{devroye1986,li1994reservoir}.  More recently, sampling from the union
of multiple data streams has also received some attention
\cite{chung2016simple,cormode2012continuous,woodruff2011optimal,cormode2010optimal}.
However, in addition to assuming synchronous operation of the nodes and the
network, the \emph{distributed streaming} (or \emph{continuous distributed
  monitoring}) \emph{model} used therein relies on a centralized coordinator
node which stores the entire sample and is the exclusive communication partner
of all other nodes, severely limiting scalability in practice.  An algorithm in
a shared-memory mini-batch model was presented recently~\cite{TT19parstream}.

For \emph{weighted} items, Chao \cite{chao1982reservoir} presents an elegant
algorithm for an alternative definition of weighted sampling (see footnote 1
on page \pageref{fn:probdef}).  The exponential clocks method (see
\cref{prel:exp}) allows for a simple selection of the smallest values
\cite{arratia2002expclock,efraimidis2006reservoir,efraimidis2015stream}.  A
reduction to sampling with replacement eliminates the effects of numerical
inaccuracies of floating-point representations~\cite{braverman2015float}.  To
our knowledge, only Ref.~\cite{JSTW19wres} considers weighted distributed reservoir
sampling.  It uses the distributed streaming model, resulting in challenges
orthogonal to those we face here. %

\section{Alias Table Construction (Problem \POne)}\label{s:alias}

Before discussing parallel alias table construction algorithms
we introduce a simpler and more space efficient sequential
algorithm that is a better basis for parallelization.

\subsection{Improved Sequential Alias Table Construction}\label{seq}
\enlargethispage*{2\baselineskip}

Previous methods need auxiliary arrays/queues of size $\Th{n}$ in order to decide in which order the buckets are filled. \citet{vose1991alias} mentions that this can be avoided but does not give a concrete algorithm.  \citet{GLLD2012gabornoise} provide an implementation of such an algorithm, but do not describe or analyze it.
We now independently describe a method with this property.

The idea is that
two indices $i$ and $j$ sweep the input array with respect to light and heavy items, respectively.
The loop invariant is that the weight of items corresponding to light (heavy) items preceding $i$ ($j$) has already been distributed over some buckets and that their corresponding buckets have
already been constructed.
Variable $w$ stores the weight of the part of item $j$ that has not yet been assigned to buckets.
Each iteration of the main loop
advances one of the indices and initializes one bucket.
When the residual weight $w$ exceeds $W/n$, item $j$ is used to fill bucket $i$,
the residual weight $w$ is reduced by $W/n-w_i$.
Otherwise, the remaining weight of heavy item $j$ fits into bucket $j$ and
the remaining capacity of bucket $j$ is filled from the next heavy item.
Algorithm~\ref{alg:simpleSweep} gives pseudocode that emphasizes the
high degree of symmetry between these two cases.
\Cref{fig:example} gives an example, where \cref{fig:splitexample} shows a
snapshot of the state of the algorithm after processing 7 of 13 items, and
\cref{fig:aliasexample} shows the resulting alias table.
\clearpage

\begin{algorithm2e}[p]
  \caption{A sweeping algorithm for building alias tables.}\label{alg:simpleSweep}
  \KwIn{$\seq{w_1,\ldots,w_n} \in \real^n$ the weights of the $n$ input items\\ \quad assume sentinel items $w_{n+1}=\infty$ and $w_{n+2}=0$ to avoid some special case treatments}
  \KwOut{$b$, an alias table consisting of $n$ pairs $(w,a)$ of (partial) weight
    $w$ and alias $a$}
  \Function{\FuncSty{sweepingAliasTable}$(\seq{w_1,\ldots,w_n})$}{
    $W \Def \sum_{i=1}^{n}{w_i}$ \Comment{total weight}
    $i \Def \min\setGilt{k>0}{w_k \leq W/n}$ \Comment{first light item}
    $j \Def \min\setGilt{k>0}{w_k > W/n}$ \Comment{first heavy item}
    $w \Def w_j$ \Comment{current heavy item}
    \lIf(\Commenti{All weights are equal}){$j=n+1$}{$\forall k = 1..n: b[k].p = w_k; b[k].a=k;$}
    \While{$j \leq n$}{
      \If(\Commenti{Pack a light bucket.}){$w>W/n$}{

        $b[i].w\Def w_i$ \Comment{Item $i$ completely fits here.}
        $b[i].a\Def j$ \Comment{Item $j$ fills the remainder of bucket $i$.}
        $w \Def (w + w_i) - W/n$ \Comment{Update residual weight of item $j$.}
        $i\Def \min\setGilt{k>i}{w_k \leq W/n}$\Comment{next light item}
      }\Else(\Commenti{Pack a heavy bucket.}){
        $b[j].w\Def w$ \Comment{Now item $j$ completely fits here.}
        $j'\Def \min\setGilt{k>j}{w_k > W/n}$ \Comment{next heavy item}
        $b[j].a\Def j'$;\quad $j \Def j'$\Comment{Proceed with item $j'$}
        $w \Def (w + w_{j'}) - W/n$ \Comment{Compute residual weight avoiding cancellation issues}
      }
    }
  }
\end{algorithm2e}
\begin{figure}[p]
  \begin{subfigure}[t]{\linewidth}\centering
    \includegraphics{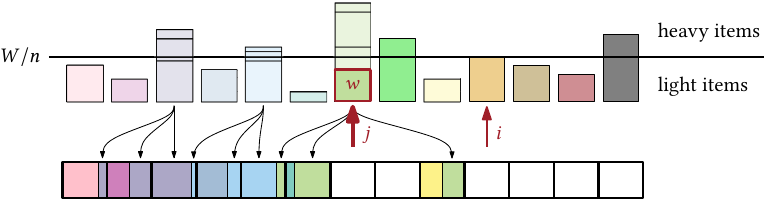}
    \vspace*{0.75em}
    \caption[State after packing half of the buckets]{State of the algorithm
      after packing 7 of 13 buckets.} \label{fig:splitexample}
  \end{subfigure}
  \begin{subfigure}[b]{\linewidth}
    \vspace*{.75em}\centering
    \includegraphics{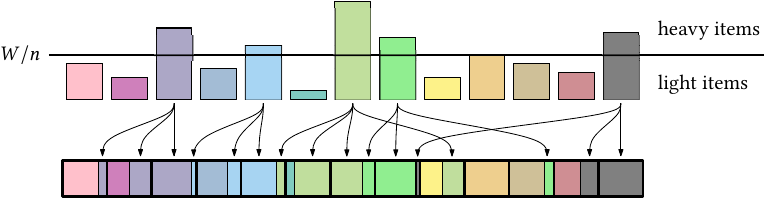}
    \vspace*{0.75em}
    \caption[The finished alias table]{The finished alias table.}
    \label{fig:aliasexample}
  \end{subfigure}
  \caption[Example of alias table construction]{Example of alias table
    construction.  Items' weights are given by their boxes' heights, colors are
    for illustration purposes only.  Each bucket contains a piece of its item,
    aligned to the left in the bucket, and optionally an alias, aligned right
    and marked with an arrow from its item for illustration.}
  \label{fig:example}
\end{figure}
\FloatBarrier

\subsection{Shared-Memory Parallel Alias Tables}\label{se}

The basic idea behind our \emph{splitting based algorithm} is to parallelize the
sequential algorithm of the previous subsection by determining its state at
carefully chosen points during its execution.  In the example of
\cref{fig:example}, splitting the computation roughly in half means
reconstructing the state of \cref{fig:splitexample} without actually executing
the sequential algorithm.  This state-reconstruction can be done in parallel for
$p-1$ positions in logarithmic time, splitting the work into $p$ packets of
equal size, one for each PE\@.  The splitting positions are chosen
in such a way that the work performed by the sequential algorithm
between two neighboring positions is a $1/p$ share of the total work.
This work represents a subproblem that is assigned to one PE\@.
A state is defined by the indices of the next heavy and light item
and by how much of the current heavy item remains to be assigned.
The resulting subproblems consist of consecutive subsets of light and heavy
items, where the heavy items may overlap with preceding or subsequent
subproblems at the boundaries.  A subproblem may receive a piece of a heavy item
from a preceding subproblem, and/or cede a piece of the last heavy item to one
or more subsequent subproblems.  The items of these
subproblems can be allocated precisely within their respective buckets.  We can
thus handle the subproblems completely independently in parallel, requiring no
further synchronization or communication.  The splitting of a heavy item over
two or more subproblems does not complicate this beyond the sequential
algorithm: pieces of it are assigned as aliases of light items in all but the
last subproblem where it occurs.  Thus, the resulting data structure is still an
alias table.

\begin{algorithm2e}[p!]
  \caption{Pseudocode for parallel splitting based alias table construction (PSA).}\label{alg:psa}
  \KwIn{$\seq{w_1,\ldots,w_n} \in \real^n$ the weights of the $n$ input items}
  \KwOut{$b$, an alias table consisting of $n$ pairs $(w,a)$ of (partial) weight
    $w$ and alias $a$}
  \Function{\FuncSty{psaAliasTable}$(\seq{w_1,\ldots,w_n})$}{
    $W \Def \sum_{i=1}^{n}{w_i}$ \Comment{total weight}
    $h \Def \seqGilt{i \in 1..n}{w_i>W/n}$ \Comment{parallel traversal finds heavy items...}
    $\ell \Def \seqGilt{i \in 1..n}{w_i \leq W/n}$ \Comment{... and light items}
    \ParFor{$k \Def 1$ \To $p-1$}{
      $(i_k, j_k, \text{spill}_k) \Def \FuncSty{split}(\ceil{nk/p})$ \Comment{split into $p$ pieces}
    }
    $(i_0,j_0,\text{spill}_0) := (0, 0, 0) \in \nat\times\nat\times\real$ \Comment{cover boundaries}
    $(i_p,j_p) := (n,n) \in \nat \times \nat$\;
    \lParFor{$k\Def 1$ \To $p$}{
      \FuncSty{pack}$(i_{k-1}+1,i_k,\quad j_{k-1}+1, j_k,\quad \text{spill}_{k-1})$
    }
  }

  \BlankLine
  \BlankLine
  \KwIn{$n' \in \nat$, the number of buckets to be filled by the left subproblem}
  \KwOut{$i,j\in\nat$, the last light and heavy items to be filled in the left
    subproblem, $s\in \real$ the partial weight of heavy item $j+1$ \emph{not}
    to be used by the left subproblem}
  \Function{\FuncSty{split}$(n')$}{
    $a \Def 1$; \quad
    $b \Def \min(n', \card{h})$ \Comment{$a..b$ is the search range for $j$}
    \Loop(\Commenti{binary search}){
      $j \Def \floor{(a+b)/2}$ \Comment{bisect search range}
      $i \Def n'-j$ \Comment{Establish the invariant $i+j=n'$}
      $\sigma \Def \sum_{x\leq i}w_{\ell[x]}+\sum_{x\leq j}w_{h[x]}$ \Comment{work to the left; use precomputed prefix sums}
      \If{$\sigma\leq n'W/n$ \And $\sigma+w_{h[j+1]}> n'W/n$}{
        \Return $(i,j,w_{h[j+1]}+\sigma-n'W/n)$
      }
      \lIf*{$\sigma\leq n'W/n$}{
        $a\Is j+1$
      }\lElse{
        $b\Is j-1$ \Commenti{narrow search range}
      }
    }
  }

  \BlankLine
  \BlankLine
  \KwIn{$\imin..\imax$ the range of light items to assign,
    $\jmin..\jmax$ the range of heavy items to assign,
    $s$ the maximum amount of heavy item $\jmin-1$ to use}
  \Procedure{\FuncSty{pack}$(\imin, \imax, \jmin, \jmax, s)$}{
    $i \Def \imin$; \quad  %
    $j \Def \jmin-1$; \quad
    $w \Def s$  \Comment{$\ell[i]$ and $h[j]$ are the current light/heavy items,}
    \If(\Commenti{$w$ is the part of the current heavy item still to be assigned}){$s=0$}{
      $j\Increment$;\quad $w\Def w_{h[j]}$;
    }
    \Loop{
      \If(\Commenti{pack a heavy bucket}){$w \leq W/n$}{
        \lIf{$j>\jmax$}{\Return}
        $b[h[j]].w\Is w$\;
        $b[h[j]].a\Is h[j+1]$\;
        $w \Is (w + w_{h[j+1]}) - W/n$; \quad $j\Increment$\;
      }\Else(\Commenti{pack a light bucket}){
        \lIf{$i>\imax$}{\Return}
        $b[\ell[i]].w\Is w_{\ell[i]}$\;
        $b[\ell[i]].a\Is h[j]$\;
        $w \Is (w + w_{\ell[i]}) - W/n$; \quad $i\Increment$
      }
    }
  }
\end{algorithm2e}

\begin{SCfigure}[][b]
  \includegraphics{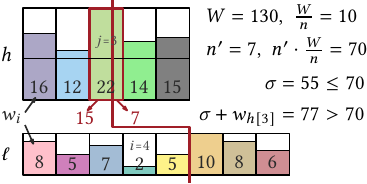}
  \caption{Parallel alias table construction: splitting $n=13$ items into two
    parts of size $n'=7$ and $n-n'=6$.  Continuation of the example of
    \cref{fig:example}.}\label{fig:split}
\end{SCfigure}

We first explain how to split $n$ items into two subsets of size
$n'$ and $n-n'$.  Similar to Vose's algorithm,
we first compute arrays $\ell$ and
$h$ containing the indices of the light and heavy items, respectively.  We
then determine indices $i$ and $j$ such that $i+j=n'$ and
\[ \sigma\Is\sum_{x\leq i}w_{\ell[x]}+\sum_{x\leq j}w_{h[x]}\leq n'W/n
  \quad\text{and}\quad \sigma+w_{h[j+1]}> n'W/n\punkt \] %
These values can be
determined by binary search over the value of $j$.  By precomputing prefix sums
of the weights of the items in $\ell$ and $h$, each iteration of the
binary search takes constant time.
The resulting subproblem then consists of the light items $L\Is\set{\ell[1],\ldots,\ell[i]}$,
the heavy items $H\Is\set{h[1],\ldots,h[j]}$ and a piece of size $n'W/n-\sigma$ of
item $h[j+1]$.   See \cref{fig:split} for a continuation of the example of
\cref{fig:example}.

To split the input into $p$ independent subproblems of near-equal size, we
perform the above two-way-split for the values $n'_k=\ceil{nk/p}$
for $k\in 1..p-1$.  PE $k$ is then responsible for filling a set of
buckets corresponding to sets of light and heavy items, each
represented by a range of indices into $\ell$ and $h$. A piece of a
further heavy item may be used to make the calculation work out. Note
that a subproblem might have an empty set of light or heavy items
and that a single heavy item may be assigned partially to multiple
subproblems, but only the last PE using it
will fill its bucket.

Algorithm~\ref{alg:psa} gives detailed pseudocode.
It uses function \emph{split} to compute $p-1$ different splits in parallel.
The result triple $(i,j,s)$ of \emph{split} specifies that buckets $\ell[1]\cdots\ell[i]$ and
$h[1]\cdots h[j]$ shall be filled using the left subproblem and a part of weight $s$ from item $h[j+1]$
is \emph{not used} on the left side, \ie spilled over to the right side.

This splitting information is then used to make $p$ parallel calls to procedure \emph{pack} -- giving each PE the task to fill at most $\ceil{n/p}$ buckets. \emph{Pack} has input parameters specifying ranges of heavy and light items it should use. The parameter \emph{spill} determines how much weight of item $h[\jmin-1]$ can be used for that.
\emph{Pack} works similar to the sweeping algorithm from Algorithm~\ref{alg:simpleSweep}.
If the residual weight of item $h[\jmin-1]$ drops below $W/n$, this item is also packed in this call. The body of the main loop is dominated by one if-then-else case distinction. When the residual weight of the current heavy item falls below $W/n$, its bucket is filled using the next heavy item.
Otherwise, some of its weight is used to fill the current light item.

\begin{theorem}\label{thm:paralias}
  We can construct alias tables with
  work $\Th{n}$ and span
  $\Thsmall{\log n}$ on CREW PRAMs.
\end{theorem}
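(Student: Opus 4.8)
The plan is to prove correctness and the resource bounds separately, fixing the number of processors to $p=\Thsmall{n/\log n}$ so that both bounds hold simultaneously for \cref{alg:psa}. The bounds are the routine part; the correctness of the cross-subproblem spill bookkeeping is where the actual work lies.

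For the resource bounds I would dispatch the preprocessing first. Computing $W$ is a reduction, and forming the arrays $\ell$ and $h$ is a parallel partition (flag each item light or heavy, prefix-sum the flags, then scatter); together with the prefix sums of the weights in $\ell$ and $h$ that split relies on, this is $\Th{n}$ work and $\Thsmall{\log n}$ span by the primitives of \cref{ss:models}. Each of the $p-1$ calls to split is a binary search over $j\in 1..\card{h}$ whose body reads $\sigma$ off the precomputed prefix sums in constant time, so one split costs $\Thsmall{\log n}$; run in parallel the splitting phase has span $\Thsmall{\log n}$ and work $\Oh{p\log n}$. Finally, since the split invariant gives $i_k+j_k=n'_k$, PE $k$'s call to pack writes exactly the $n'_k-n'_{k-1}=\Thsmall{n/p}$ buckets of $\ell[i_{k-1}{+}1..i_k]$ and $h[j_{k-1}{+}1..j_k]$ with constant work per loop iteration (each iteration writes one bucket), giving span $\Oh{n/p}$ and work $\Th{n}$. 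With $p=\Thsmall{n/\log n}$ the split work $\Oh{p\log n}$ and the pack span $\Oh{n/p}$ both collapse into the $\Th{n}$ work and $\Thsmall{\log n}$ span budget, proving the bounds.

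For correctness I would first read off what split computes: the invariant $i+j=n'$ together with $\sigma\le n'W/n<\sigma+w_{h[j+1]}$ says that the first $i$ light and $j$ heavy items carry weight at most the capacity $n'W/n$ of the first $n'$ buckets, that one further heavy item $h[j+1]$ covers the remaining capacity $n'W/n-\sigma$, and that the returned spill $w_{h[j+1]}+\sigma-n'W/n$ is precisely the unused remainder of $h[j+1]$ handed to the next subproblem. I would then argue that pack is the sweeping loop of \cref{alg:simpleSweep} restricted to a contiguous window of $\ell$ and $h$ and seeded with the residual weight $w$ of its boundary heavy item; the loop invariant that validates \cref{alg:simpleSweep}---that $w$ is the yet-unplaced weight of the current heavy item and that each completed bucket represents exactly $W/n$ of weight split between one primary item and one alias---carries over verbatim, so pack writes valid buckets and halts after exactly its assigned range.

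The crux is to glue the per-PE computations into one globally correct alias table, and I expect the boundary bookkeeping to be the only real obstacle. The clean statement to induct on is a prefix invariant: for every $k$, the processors $1..k$ together fill exactly the buckets of $\ell[1..i_k]\cup h[1..j_k]$ using exactly $n'_kW/n$ of weight, with the single item $h[j_k{+}1]$ left with residual $\text{spill}_k$; the inductive step follows because the full weight of PE $k$'s owned items plus its spill-in minus its spill-out equals its bucket capacity. The delicate case is a heavy item so large that several consecutive splits return the same index $j$: the intervening processors then receive empty heavy ranges, use this item only as an alias fed through the spill, and its residual first drops below $W/n$---triggering the heavy-bucket branch that writes its bucket---only in the last subproblem that owns it. I would verify that in this regime exactly that last processor fills the shared bucket while all earlier ones leave it untouched, so that every bucket is written once and the assembled table is a valid alias table.
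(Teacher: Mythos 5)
Your proposal is correct and follows essentially the same route as the paper's proof: linear-work, logarithmic-span preprocessing (partitioning into $\ell$ and $h$ plus weight prefix sums), $p-1$ parallel logarithmic-time binary-search splits, and an $\Oh{n/p}$ sequential sweep per PE, with the choice $p=\Thsmall{n/\log n}$ balancing the $\Oh{p\log n}$ splitting work against the $\Oh{n/p}$ packing span to give work $\Th{n}$ and span $\Thsmall{\log n}$. The paper's own proof is far terser --- it omits the explicit processor-count choice, the split-work accounting, and any correctness argument (treating correctness as established by the algorithm description in \cref{se}) --- so your prefix invariant and the analysis of the shared heavy item whose bucket is filled only by its last user are sound elaborations of details the paper leaves implicit, not a genuinely different approach.
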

\begin{proof}
  The algorithm requires linear work and logarithmic span for
  identifying light and heavy items and for computing prefix sums
  \cite{Ble89} over them. Splitting works in logarithmic time.
  Then each PE needs time $\Oh{n/p}$ to fill the buckets
  assigned to its subproblem.
\end{proof}

\subsubsection{Performance Optimization in Practice}\label{se:opt}

Observe that we can improve the practical performance of this algorithm by
greedily assigning ranges of items to the PEs before we begin the splitting,
with each PE stopping when it runs out of either light or heavy items in its
allotment, and applying PSA  only to those items that could
not be assigned.  By directly assigning some items, the number of items for
which the auxiliary data structures of Algorithm~\ref{alg:psa} have to be
computed is reduced.  However, the number of items that can be handled this way
is dependent on the input distribution and may be arbitrarily small.
Nonetheless, this
optimization can yield significant improvements for typical inputs (see
experiments in \cref{exp:cons}, where this method is labeled \psag).

More concretely, in a first step, PE $i$ handles items $n(i-1)/p+1$ to $ni/p$.
In this range, the PEs run the
sequential assignment algorithms on their subset of the input, stopping once
the next light or heavy item would be beyond their part of the input (\ie the
item's index exceeding $ni/p$).  Algorithm~\ref{alg:psa} is then only executed on
these remaining items.  To do so, each PE records the index of the first item
that could not be assigned greedily and the number of items in its
allotment that were not assigned.  The PEs then compute a prefix sum over the
latter value.  This gives them the global rank of their unfinished items, which
is required in the index calculations (arrays $h$ and $\ell$).
Algorithm~\ref{alg:psa} then proceeds as before.
The only necessary change is that
finished items are skipped when computing the arrays $h$ and $\ell$.

\subsubsection{Using Block-Wise Prefix Sums}\label{se:block}

We can further optimize memory usage by employing \emph{block-wise} prefix sums
in our parallel alias table construction algorithm.  The idea is for each block
to represent logarithmically many -- \ie $\log(n)$ -- items for prefix sum
computation, resulting in a prefix sum with $n/\log(n)$ entries.  Each entry
additionally stores the number of light (heavy) items thus far, \ie the rank of
the last item in the block.  This allows us to replace both the $\ell$ and $h$
arrays and the full prefix sums over the weight of the light and heavy items.

Consider first the aforementioned prefix sums, which are used to split the input
into subproblems.  We can now use this coarse prefix sum to narrow the search range to at most two
blocks, one of which contains the heavy item that is the true splitter, and at most two
candidate blocks for the matching light item.  To determine the true splitters,
we now need to consider only the items in the candidate blocks.
Extracting these blocks -- \ie refining the weight prefix sums to the item level -- takes
logarithmic time as we only have to compute a prefix sum of the weights of the
items within them, and the search can then proceed on these items using the
regular splitting function of \cref{alg:psa}.  Observe that this does not
increase the asymptotic work or span of the algorithm.  The prefix sums are used
only in the binary search, the modified version of which takes time
$\Oh{\log(n/\log n)+\log n} = \Oh{\log n}$, which is unchanged from before.

Observe that this change also allows us to get rid of the $\ell$ and $h$ arrays
in \cref{alg:psa}, and we only need to store how many heavy items occur up to
each block.  In \emph{pack}, after an initial random access into $h$ -- which we replace with a
binary search on the aforementioned prefix sum -- all subsequent accesses are a
single linear scan and can thus be replaced with a scan over the input weights
as in \cref{alg:simpleSweep}.
Pseudocode and additional analyses are given in the dissertation of the first
author \cite[Chapter~3.3.2.b)]{lorenzdiss}.

These changes should yield a speed increase overall because alias table
construction is not limited by computation but by the throughput of the memory
(see \cref{exp}).  Using this optimization, we reduce memory use for the
auxiliary arrays from $2n$ to $3n/\log n$ words (we only need to store one of
the count prefix sums, the other is trivially derivable because the sum of both
is always the total number of items thus far, which is known).

\subsection{Compressed Data Structures for \POne}\label{ss:succinct}

\citet{BLK13} give a construction similar to alias tables that allows expected
query time $\Oh{r}$ using $2n/r+o(n)$ bits of additional space.  We describe the
variant for $r=1$ in some more detail (a generalization for $r>1$ is straight-forward).  We assign $\ceil{nw_i/W}$ buckets to
each item, \ie at most $2n$ in total.  Item $i$ is assigned to buckets
$k=\sum_{j<i}\ceil{nw_j/W}$ to
$\sum_{j\leq i}{\ceil{nw_j/W}}-1=k+\ceil{nw_i/W}-1$, inclusively.  A query
samples a bucket $j$ uniformly at random.  Suppose bucket $j$ is assigned to
item $i$.  If $j\in k..k+\ceil{nw_i/W}-2$, \ie $j$ is not the last bucket item
$i$ is assigned to, item $i$ is returned.  If $j=k+\ceil{nw_i/W}-1$,
then item $i$ is returned with probability $nw_i/W-\floor{nw_i/W}$.  Otherwise,
bucket $j$ is rejected and the query starts over.  Since the overall success
probability is $\geq 1/2$, the expected query time is constant.

The central observation for compression is that it suffices to store
one bit for each bucket that indicates whether a new item starts at
bucket $b[i]$. When querying bucket $j$, the item stored in it can be
determined by counting the number of 1-bits up to position $j$. This
\emph{rank}-operation can be supported in constant time using an
additional data structure with $o(n)$ bits.  Further reduction in
space is possible by representing $r$ items together as one entry in
$b$.

Both constructing the bit vector and
constructing the rank data structure is easy to parallelize using
prefix sums (for adding scaled weights and counting bits, respectively)
and embarrassingly parallel computations.  \citet{Shun2017improved}
even gives a bit parallel algorithm needing only $\Oh{n/\log n}$ work
for computing the rank data structure. We get the following result:

\begin{theorem}
  Bringmann and Larsen's $n/r+o(n)$ bit data structure can be built using
  $\Oh{n}$ work and $\Oh{\log n}$ span
  allowing queries in expected time $\Oh{r}$.
\end{theorem}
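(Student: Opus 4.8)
The plan is to treat the space bound ($n/r+o(n)$ bits) together with the correctness and the expected running time $\Oh{r}$ of the query as properties already established by Bringmann and Larsen~\cite{BLK13}, and to concentrate on the one genuinely new claim: that the data structure admits a parallel construction with $\Oh{n}$ work and $\Oh{\log n}$ span. Two objects must be produced — the length-$m$ marker bit vector, where $m\leq 2n$ is the number of buckets and a $1$ records the first bucket of each item, and the auxiliary $o(n)$-bit rank index that answers ``how many $1$-bits lie up to position $j$'' in constant time. I build each in turn and observe that both reduce to a prefix sum plus an embarrassingly parallel map. I assume weights are scaled so that a bucket holds weight $W/n$, \ie item $i$ claims $\ceil{nw_i/W}$ buckets, which is what makes $m=\sum_i\ceil{nw_i/W}\leq 2n$.

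First I would compute the marker bit vector. The bucket at which item $i$ begins is precisely the prefix sum $P_{i-1}=\sum_{j<i}\ceil{nw_j/W}$, so one work-optimal prefix-sum pass over the rounded, scaled weights~\cite{Ble89} yields every starting position in $\Oh{n}$ work and $\Oh{\log n}$ span on a CREW PRAM. Since every item with positive weight claims at least one bucket, the $P_{i-1}$ are distinct, so writing the $m\leq 2n$ marker bits — a $1$ at each $P_{i-1}$ and $0$ elsewhere — is a conflict-free, embarrassingly parallel map of $\Oh{n}$ work and $\Oh{1}$ span. The fractional acceptance probability $\ceil{nw_i/W}-nw_i/W$ attached to the last bucket of item $i$ depends only on $w_i$, so it can be recomputed at query time and contributes nothing to the construction cost.

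Second I would build the rank index over this bit vector. The textbook two-level index for a bit vector of length $m=\Thsmall{n}$ stores per-block popcounts together with a prefix sum over the block totals, occupies $o(n)$ bits, and answers rank in $\Oh{1}$; its construction is again a prefix sum preceded by an embarrassingly parallel popcount, hence $\Oh{n}$ work and $\Oh{\log n}$ span, and invoking Shun's bit-parallel variant~\cite{Shun2017improved} even lowers the rank-index work to $\Oh{n/\log n}$. A query then samples a uniform bucket $j$, recovers its owning item by one $\Oh{1}$ rank query, and accepts or rejects exactly as in~\cite{BLK13}; since a trial succeeds with probability $\geq 1/2$, the expected query time is $\Oh{1}$ for $r=1$, and $\Oh{r}$ once $r$ items share an entry. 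I expect the only delicate bookkeeping to lie in the grouping-by-$r$ step underlying the $n/r+o(n)$ space figure — checking that storing a single boundary marker per group of $r$ items still lets one rank query plus an $\Oh{r}$ local resolution recover the sampled item — whereas the construction bounds themselves are immediate from the fact that every phase is a prefix sum or a parallel map.
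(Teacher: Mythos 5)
Your proposal is correct and follows essentially the same route as the paper: both build the marker bit vector via a prefix sum over the scaled weights and the rank index via embarrassingly parallel popcounts plus another prefix sum, citing Shun~\cite{Shun2017improved} for the $\Oh{n/\log n}$-work rank construction, while inheriting the space and expected query bounds from Bringmann and Larsen~\cite{BLK13}. Your explicit scaling $\ceil{nw_i/W}$ is in fact the correct reading of the paper's formula, and your observation that the fractional acceptance probability needs no precomputation matches the paper's implicit treatment.
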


\subsection{Distributed Alias Table Construction}\label{dist}

The shared-memory parallel algorithm described in \cref{se} can also be adapted
to a distributed-memory machine (\eg using PRAM emulation
\cite{ranade91}).  However, this requires information about many items
to be communicated.  Hence, more communication-efficient algorithms are
important for large $n$.  To remedy this problem, we will now view sampling as a
2-level process implementing the owner-computes approach underlying many
distributed algorithms.

Let $E_i$ denote the set of items allocated to PE $i$.  For each PE $i$, we
create a \emph{meta-item} of weight $W_i\Def\sum_{j\in E_i}w_j$, and construct a
\emph{top-level table} for sampling from the meta-items.  Sampling now amounts
to sampling a meta-item and then delegating the task to sample an actual item
from $E_i$ to PE~$i$.  Alias tables for the $E_i$ can be built independently on
each PE. %
We present two approaches to constructing the top-level table:

\begin{theorem}\label{thm:dist_simple}
  Assuming that $\Oh{n/p}$ items are allocated to each PE, we can sample a
  single item in time $\Oh{\Tstart}$ after preprocessing a 2-level alias table,
  which can be done in time $\Oh{n/p}$ plus the following communication overhead:
  \begin{thmenum}
  \item \label{dist:replicated} $\Oh{\Tword p + \Tstart \log p}$ with replicated preprocessing, or
  \item \label{dist:twoalias} $\Oh{\Tstart \log p}$ with a distributed top-level
    data structure based on \emph{2-alias tables} described below.%
    \footnote{In the conference version we describe how to build alias
      tables in time $\Oh{\Tstart\log^2 p}$ using PRAM emulation.}
  \end{thmenum}
\end{theorem}

For \cref{dist:replicated}, we can perform an all-gather operation on the
meta-items and compute an alias table for the meta-items on each PE in a
replicated way.

For \cref{dist:twoalias}, we employ the data redistribution algorithm
of Ref.~\cite[Section~9]{topkTR} to
redistribute the heavy meta-items such that each PE has total weight
$W/p$ from heavy or light meta-items.  This algorithm combines
parallel algorithms for merging and prefix sums and runs in time
$\Oh{\Tstart\log p}$. It has the property that each PE receives at
most two pieces of heavy meta-items.  Thus, after redistribution, each
PE is possibly responsible for one light meta-item and pieces of two
heavy meta-items with weights summing up to $W/p$. We call this a
\emph{2-alias table}. This data structure can be sampled in an analogous fashion to alias-tables -- first sample a bucket (a PE) uniformly at random; there sample one of three possible meta-items.
See Ref.~\cite[Chapter 3.3.3]{lorenzdiss} for a more detailed explanation.

\begin{proof}[Proof (\cref{thm:dist_simple})]
  Constructing the local alias tables takes time
  $\Oh{\max_i\card{E_i}}\subseteq\Oh{n/p}$.

  \cref{dist:replicated} requires all-gathering $p$ items, which can be
  implemented in time $\Oh{\Tword p + \Tstart \log p}$
  \cite[Chapter~13.5.1]{MehSanPar}, resulting in the claimed running time bound.

  For \cref{dist:twoalias}, we need to compute $W/p$ using a
  collective reduction operation. The redistribution algorithm
  \cite{topkTR} then computes a prefix sum and performs a parallel
  merging operation using Batcher's algorithm \cite{Batcher1968sortnet}.
  All this is possible in time $\Oh{\Tstart\log p}$.

  Sampling from the resulting two-level alias table needs an additional
  indirection.  First, a meta-bucket $j$ is drawn uniformly at random.
  Then a request is sent to
  PE $j$ which identifies the subset $E_i$ from which the item should be
  selected and delegates the task of sampling from $E_i$ to PE $i$.\footnote{If
    we ensure that meta-items have similar size (see \cref{ss:redistribute}),
    then we can arrange the meta-items in such a way that $i=j$ most of the
    time.} Computing the meta-bucket from the top-level table of case
  \emph{(ii)} is similar to sampling from a regular alias table, except that the
  buckets are now partitioned into three pieces for items $i$, $A_{i,1}$, and
  $A_{i,2}$, and for $x \Def W/p \cdot \rand{}$ we need to determine whether
  \begin{enumerate*}[label=(\alph*)]
  \item $x \leq W_i$, in which case item $i$ is returned,
  \item $W_i < x \leq W_i+A_{i,1}$, returning the first alias $A_{i,1}$, or
  \item $x > W_i+A_{i,1}$, returning $A_{i,2}$.
  \end{enumerate*}
\end{proof}

\subsubsection{Redistributing Items}\label{ss:redistribute}

As discussed so far, \emph{constructing} distributed-memory 2-level alias tables is communication efficient. However, when large items are predominantly allocated on few PEs, \emph{sampling} many items can lead to an overload on PEs with large $W_i$. We can remedy this problem by moving large items to different PEs or even by splitting them between multiple PEs.
This implies a trade-off between redistribution cost (part of preprocessing) and load balance during sampling.

We now look at the case where an adversary can choose an arbitrarily skewed distribution of item sizes but where the items are randomly allocated to PEs
(or that we actively randomize the allocation implying $\Th{n/p}$ additional communication volume).

\begin{theorem}\label{thm:redistribute}
  If items are randomly distributed over the PEs initially,
  it suffices to redistribute $\Oh{\log p}$ items from each PE
  such that afterwards each PE has total weight $\Oh{W/p}$ in expectation
  and $\Oh{n/p+\log p}$ (pieces of) items.
  This redistribution takes expected time $\Oh{\Tstart\log p}$.
\end{theorem}
\begin{proof}
    Let us distinguish between \emph{super-heavy} items whose weight
    exceeds $cW/(p\log p)$ for an appropriate constant $c$ and the
    remaining \emph{ordinary} items.  The expected maximum weight
    allocated to a PE based on ordinary items is $\Oh{W/p}$
    \cite{San96ahab}.  Thus, we need only redistribute super-heavy
    items.  Yet, by definition of super-heavy items, at most
    $p\log(p)/c$ of them can exist in the entire input.  By standard
    balls-into-bins arguments \cite[Theorem 1, Case 2]{raab1998balls},
    only $\Oh{\log p}$ super-heavy items are initially allocated to
    any PE with high probability.

    As in \cref{thm:dist_simple}~(ii),
    we employ the redistribution algorithm from Ref.~\cite[Section~9]{topkTR}:
    PEs distribute their super-heavy items. They can receive pieces of super-heavy items
    of total weight $W/p$ minus the weight of their local ordinary items.
    The difference now is that a PE may send or receive $\Oh{\log p}$ super-heavy items.
    However, this does not change the asymptotic complexity of $\Oh{\Tstart\log p}$.
\end{proof}

\section{Output Sensitive Sampling With Replacement (Problem \PWith)}\label{kwith}
The algorithm of \cref{se} easily generalizes to sampling $k$
items with replacement by simply executing $k$ queries.  Since the
precomputed data structures are static, these queries can be run in
parallel.  We obtain optimal span $\Oh{1}$ and work $\Oh{k}$.

\begin{corollary}\label{cor:kWR}
  After a suitable alias table data structure has been computed, we can sample
  $k$ items with replacement with work $\Oh{k}$ and span $\Oh{1}$.
\end{corollary}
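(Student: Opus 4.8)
The plan is to derive the work bound directly from the constant cost of a single alias-table query, and the span bound from the mutual independence of the $k$ samples together with the read-only nature of the precomputed data structure. I would introduce no new machinery: the entire argument reuses the query procedure described in \cref{prel:alias} and the constructed table guaranteed by \cref{thm:paralias}.

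First I would pin down the cost of one query on an alias table with $m=n$ buckets. A query picks a bucket index $r$ uniformly at random, performs a single memory probe to read $b[r]$, tosses a biased coin (comparing a fresh uniform variate against the stored threshold $w_r' n/W$), and returns either $r$ for heads or the alias $b[r].a$ for tails. Drawing the two variates, the comparison, and the single lookup are each constant-time, so one query costs $\Oh{1}$ work and $\Oh{1}$ span.

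Next I would exploit that sampling \emph{with} replacement makes the $k$ queries mutually independent, and that the table is never modified during querying. Hence I can assign one query to each of up to $k$ PEs and run them simultaneously. On a CREW PRAM this is legitimate: all PEs may read the shared, immutable table concurrently, while each PE writes its single sampled item into its own private output cell, so the exclusive-write restriction is respected and no two PEs contend for a write. Summing the per-query costs yields total work $k\cdot\Oh{1}=\Oh{k}$, and since every query runs in parallel the span is $\Oh{1}$.

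The only point that genuinely needs checking — rather than a real obstacle — is the memory-access discipline: I must confirm that concurrent reads of the precomputed table are conflict-free under CREW and that the $k$ output cells are pairwise distinct, keeping all writes exclusive. Once this is verified, the $\Oh{k}$ work and $\Oh{1}$ span bounds are immediate, and the corollary follows as an essentially free consequence of the constant-time alias-table query.
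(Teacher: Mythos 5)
Your proposal is correct and matches the paper's own argument: the paper likewise observes that a single alias-table query costs $\Oh{1}$ and that, since the precomputed table is immutable, the $k$ queries can simply be executed in parallel, giving work $\Oh{k}$ and span $\Oh{1}$. Your additional verification of the CREW discipline (concurrent reads of the immutable table, exclusive writes to distinct output cells) just makes explicit what the paper leaves implicit.
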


Yet if the weights are skewed this may not be optimal
since large items will be sampled multiple times. Here, we
describe an \emph{output sensitive} algorithm that outputs only
different items in the sample together with how often they were
sampled, \ie a set $S$ of pairs $(i,k_i)$ indicating that item~$i$
was sampled $k_i$ times. The work will be proportional to the output
size $s$ up to a small additive term.
For highly skewed distributions, even $k\gg n$ may make sense.
Note that outputting multiplicities may be important for appropriately
processing the samples.  For example, let $X$ denote a random variable where
item $i$ is sampled with probability $w_i/W$ and suppose we want a truthful
estimator for the expectation of $f(X)$ for some function~$f$ that is assumed to
correlate with the weights.  Then the Hansen-Hurwitz estimator,
$\smash{\hat f(X)} = W/k \cdot \sum_{(i,k_i)\in S}k_if(i)/w_i$, provides the
desired result \cite{hansen1943estimator}.

We will combine and adapt three previously used techniques for related problems:
the bucket tables from \cref{prel:alias}, the
divide-and-conquer technique from \cref{prel:daq} \cite{SandersLHSD18},
and the Poisson sampling algorithm of \citet{bringmann2017subset}.
We approximately sort the items into $u=\ceil{\log U}=\ceil{\log(\wmax/\wmin)}$
groups of items whose weights differ by at most a factor of two using integer
sorting.  To do this, weight $w_i$ is mapped to group $\floor{\log(w_i/\wmin)}$.

To help determine the number of samples to be drawn from each group,
we build a complete binary tree with one \emph{nonempty} group at each leaf.  Interior
nodes store the total weight of items in groups assigned to their
subtrees.  This \emph{divide-and-conquer tree (DC-tree)}
allows us to generalize the divide-and-conquer
technique described in \cref{prel:daq} to weighted items.  Suppose we
want to sample $k$ items from a subtree rooted at an interior node
whose left subtree has total weight $L$ and whose right subtree has
total weight $R$. Then the number of items $k'$ to be sampled from the
left has a binomial distribution ($k$ trials with success probability
$L/(L+R)$). We can generate $k'$ accordingly and then recursively
sample $k'$ items from the left subtree and $k-k'$ items from the
right subtree. A recursive algorithm can thus split the number of
items to be sampled at the root into numbers of items sampled at each
group. When a subtree receives no samples, the recursion can be
stopped. Since the distribution of weights to groups can be highly
skewed, this stopping rule will be important in the analysis.

For each group $G$, we integrate bucket tables and DC-tree as follows.
Let $n_G$ be the number of items in the group, and let $[a, 2a)$ be
the range of weights mapped to this group.  For the bucket table we
can use a very simple variant that stores these $n_G$ items with
weights from the interval $[a,2a)$ in $n_G$ buckets of capacity
  $2a$. Sampling one item then uses a rejection method that repeats
  the sampling attempt when the random variate leads to an empty part
  of a bucket.%
  \footnote{If desired, we can also avoid rejection sampling by
    mapping the items into up to $2n_G$ buckets of size
    $a$ without gaps. This way there are at most two items in each
    bucket. Note that this is still different from alias tables because we need to map ranges
    of consecutive items to ranges of buckets. This is not possible for alias tables.}

\begin{figure}[t]
  \centering
  \includegraphics{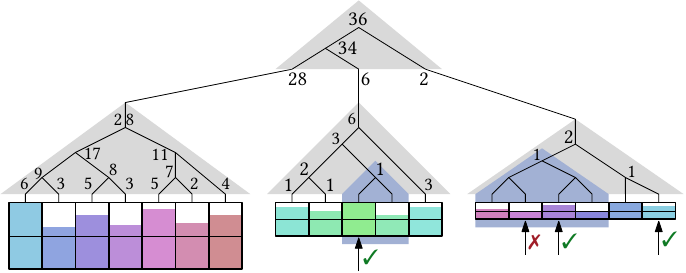}
  \caption{Output-sensitive sampling: assignment of multiplicities with
    $k=36$.}\label{fig:os}
\end{figure}

We also build a DC-tree for each group.  A simple
linear mapping of items into the bucket table allows us to
associate a range of relevant buckets $b_T$ with each subtree $T$ of the
group DC-tree.

For sampling $m$ items from a group $G$,
we use the DC-tree to decide how many samples each subtree has to contribute.
When this number decreases to $1$ for a subtree $T$, we sample this item directly and in constant expected time from the buckets in the range $b_T$.
\footnote{For practical implementations, it may make sense to stop descending
  the tree when $m$ drops below some constant.  We can then join duplicates
  using a small hash table that fits in cache.  This optimization is described
  in more detail on page \pageref{exp:opt}.}

Figure~\ref{fig:os} gives an example.
We obtain the following
complexities:
\begin{theorem}\label{thm:swr}
  Preprocessing for problem \PWith can be done in the time and span
  needed for integer sorting $n$ elements with
  $u=\ceil{\log U}=\ceil{\log(\wmax/\wmin)}$ different keys%
  \footnote{\cref{ss:models}
     discusses the cost of this operation on different models of computation.}
  plus linear work and
  logarithmic span (on a CREW PRAM) for building the actual data
  structure.  Using this data structure, sampling a multiset $S$ with
  $k$ items and $s$ different items can be done with span
  $\Oh{\log n}$ and expected work $\Oh{s+\log n}$ on a CREW PRAM.
\end{theorem}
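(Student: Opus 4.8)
The plan is to analyze preprocessing and query separately, and to charge the query work against the output size $s$ using the stopping rule of the recursion. For preprocessing I would split the construction into three phases. First, give item $i$ the integer key $\lfloor\log(w_i/\wmin)\rfloor\in 0..u-1$ and sort by it; this is exactly integer sorting of $n$ elements with $u$ distinct keys and accounts for the first term of the claimed bound. Second, build the top-level DC-tree over the (at most $u$) nonempty groups and, for each group, its per-group DC-tree over the buckets; each interior node only stores an aggregate weight, so all of them can be filled by prefix sums \cite{Ble89} with $\Oh{n}$ work and $\Oh{\log n}$ span. Third, build the simple bucket tables: within a group all weights lie in $[a,2a)$, so the linear item-to-bucket mapping and the associated ranges $b_T$ are again prefix-sum computations with linear work and logarithmic span. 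This gives the stated preprocessing cost.

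For the span, note that a sample is produced by a single recursive descent of the combined two-level tree, whose depth is $\Oh{\log u}+\max_G \Oh{\log n_G}=\Oh{\log n}$. At each visited node we draw one binomial variate to decide how many of the $m$ samples go left versus right (expected $\Oh{1}$ time), generalizing the recursion of \cref{prel:daq}, and recurse into the nonempty children in parallel; calls on disjoint subtrees are independent. Hence the span is the depth times $\Oh{1}$ plus the rejection work at the leaves. Since each direct sample rejects a geometric number of times with constant success probability (\cref{prel:alias}), the maximum over the $\le s$ leaves is $\Oh{\log n}$ with high probability, so the span is $\Oh{\log n}$.

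The work bound is the crux. I would write the work as (number of recursion nodes visited) plus (leaf/rejection cost) and charge it to $s$ as follows. Every recursion branch that stops at $m=1$ (or at a single bucket) samples exactly one item from a range $b_T$, and distinct stopping branches own disjoint ranges and hence yield distinct items; thus there are exactly $s$ stopping branches, each costing $\Oh{1}$ in expectation, for a total of $\Oh{s}$. It remains to bound the internal nodes that route $\ge 2$ samples. Inside a group the weights lie in $[a,2a)$, so every split is balanced to within a constant factor; the recursion then behaves like a random binary trie on $\min(m,n_G)$ near-uniform keys, whose expected size is $\Oh{s_G}$ because $\ge 2$ samples diverge after $\Oh{1}$ expected levels, and $\sum_G s_G=s$. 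For the top level I would bound the descent to the $g'\le s$ reached groups by its $g'-1$ branching nodes plus the initial shared chain of length $\Oh{\log u}$ that the $+\log n$ term absorbs. Combining the three contributions yields expected work $\Oh{s+\log n}$.

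The main obstacle is exactly this last accounting: a naive bound multiplies the $\le s$ reached leaves by the tree depth and gives $\Oh{s\log n}$. Turning this into the additive $\Oh{s+\log n}$ rests on two ingredients — the within-group near-uniformity, which forces geometric (hence $\Oh{1}$) divergence of samples so that per-group chains do not contribute a $\log n_G$ factor, and the stopping rule, which truncates every descent the moment a subtree carries a single sample. I expect the genuinely delicate step to be the top-level, inter-group descent, where the group weights can be adversarially skewed and spread across the tree; making the chain-length argument rigorous there, so that all but one $\Oh{\log u}$ chain is amortized against the reached groups rather than incurring a per-group logarithmic factor, is where the real work lies.
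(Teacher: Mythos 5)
Your preprocessing and span accounting coincide with the paper's proof (sorting into $u$ weight classes, prefix-sum/bottom-up tree construction, depth $\log u+\log n\leq 2\log n$), and your within-group trie argument is essentially the paper's Lemma~\ref{lem:uniform} in different clothing. One caveat there: your one-line justification that ``$\geq 2$ samples diverge after $\Oh{1}$ expected levels'' is false for samples that hit the \emph{same} item -- they never diverge and travel together down a chain of length $\Th{\log n_G}$. The paper avoids this by a per-level computation: at level $\ell$ the expected number of nodes carrying $\geq 2$ samples is at most $\min\bigl(2^\ell,\Oh{k^2 2^{-\ell}}\bigr)$, and summing the geometric series from level $\log k+\Oh{1}$ downward gives $\Oh{k}$ \emph{including} the duplicate chains (which share nodes, so they cannot blow up the count). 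That step in your argument is repairable along these lines.

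The genuine gap is exactly where you place it -- the top-level, inter-group descent -- but your proposed accounting for it is wrong, not merely unrigorous. The union of root-to-leaf paths to the $g'\leq s$ reached groups is \emph{not} ``$g'-1$ branching nodes plus one initial shared chain'': each reached group can hang at the end of its own private chain of length up to $\log u$ below the last branching node, so the visited subtree can have $\Th{g'\log u}$ nodes, and since one group has no bucket-table shortcut at the top level (rejection sampling across groups has success probability as bad as $1/U$, so the descent cannot stop early at $m=1$), these chains must actually be walked. No deterministic amortization against reached groups can absorb them, because a reached group may contribute only a single output item against a $\log u$-length chain. The repair has to be probabilistic, and this is the paper's central idea: stratify the groups by their probability of yielding a sample. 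For the \emph{heavy} prefix (groups containing items sampled with probability $\geq 1/2$), the complete subtrees branching off the path to the first light group are charged to their leaves, at least half of which contain sampled elements in expectation; the at most $2\log n$ \emph{middle} groups with yield probability in $[1/n,1/2]$ fit into subtrees of total size $\Oh{\log n}$, absorbed by the additive term; and the \emph{very light} groups contribute expected cost at most $(\text{number}\leq n)\times(\text{depth}\leq\log n)\times(\text{yield probability}\leq 1/n)=\Oh{\log n}$. Without some such expectation-based charging, your outline proves only $\Oh{s\log u+\log n}$ -- the bound you set out to beat.
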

\begin{proof}
  Besides sorting the items into groups, we have to build binary trees
  of total size $\Oh{n}$.  This can be done with logarithmic span and
  linear work in a way analogous to a reduction operation \cite[Section~13.2]{MehSanPar}.
  The main difference is that the computed intermediate sums of item weights
  are stored in an implicit representation of the tree \cite[Section~6.1]{MehSanPar}.
  The bucket-tables which
  have total size $n$ can be constructed as in \cref{se}.

  The span of a query is essentially the depth of the trees,
  $\log u +\log n\leq 2\log n$.
  Bounding the work for a query is more complicated since, in the
  worst case, the algorithm can traverse paths of logarithmic length
  in the DC-trees for just a constant number of samples taken at its
  leaves. However, this is unlikely to happen, and we show that in
  expectation the overhead is a constant factor plus an additive
  logarithmic term. We are thus allowed to charge a constant amount of
  work to each different item in the output and can afford a leftover
  logarithmic term.

  We first consider the top-most DC-tree
  $T$ that divides samples between groups.  For the analysis
  we distinguish three ranges of groups.
  To the left, there are \emph{heavy} groups whose items are sampled with probability at least $1/2$.
  To the right, there are \emph{light} groups whose items are sampled with probability at most $1/n$.
  We call the groups between heavy and light groups \emph{medium groups}.

  Consider the path $P$ in $T$ leading to the leftmost medium group.  Subtrees
  branching from $P$ to the left are complete subtrees that lead to heavy groups
  only. Since all leaves represent non-empty groups, we can charge the cost for
  traversing the left trees to the items that are actually sampled with probability at least $1/2$.

  Since item weight bounds decrease geometrically from group to group, there can be at most
  $\Oh{\log n}$ medium groups.
  Because the medium groups are consecutive in $T$, they fit into subtrees
  of logarithmic total size.  Therefore, traversing the paths to them
  causes $\Oh{\log n}$ work in total.

  Each item from the light groups yield a sample with probability at most $1/n$.
  There are at most $u\leq n$ such
  groups.  Fix an arbitrary one of them, $G$, and let $n_G$ be the number of
  items in this group.  Then the expected cost of traversing the path to $G$ is
  the path's length ($\leq \log u \leq \log n$) times the likelihood that the
  path is taken ($\leq n_G/n$).  Summing this over all light groups yields total
  expected work $\log n \sum_{G} n_G/n \leq \log n$.

  Finally, Lemma~\ref{lem:uniform} shows that the work for traversing
  DC-trees within a group is linear in the output size from each
  group. Summing this over all groups yields the desired bound.
\end{proof}

\begin{lemma}\label{lem:uniform}
  Consider a DC-tree plus bucket array for sampling with replacement
  of $k$ out of $n$ items where weights are in
  the range $[a,2a)$. Then the expected work for sampling is
    $\Oh{s}$ where $s$ is the number of different sampled items.
\end{lemma}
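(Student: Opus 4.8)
The plan is to charge the total work to the number of distinct sampled items by accounting over the nodes of the DC-tree, exploiting that the near-uniformity of the weights keeps the binomial splits well-balanced.

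First I would reduce the work to the number of DC-tree nodes \emph{visited} during the recursion. Recall that a subtree rooted at $v$ is entered only when it is allocated $N_v\geq 1$ samples; when $N_v\geq 2$ we draw one binomial variate and recurse into the (one or two) children that receive samples, and when $N_v=1$ (or $v$ is a DC-leaf) we draw directly from the bucket range $b_v$. The direct draw costs $\Oh{1}$ in expectation: since every item has weight in $[a,2a)$ and sits in a bucket of capacity $2a$, the rejection step accepts with probability $\geq 1/2$. Hence each visited node contributes $\Oh{1}$ expected work and $\expect{\text{work}}=\Oh{\expect{\#\text{visited}}}$, so it suffices to show $\expect{\#\text{visited}}=\Oh{\expect{s}}$.

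Second, I would classify the visited nodes as \emph{recursion leaves} (those with $N_v=1$ or DC-leaves reached with $N_v\geq2$), \emph{branching} nodes (internal, $N_v\geq2$, both children nonempty), and \emph{unary} nodes (internal, $N_v\geq2$, one child empty). The key structural fact is that recursion leaves are in bijection with the distinct sampled items: distinct recursion leaves live in disjoint subtrees, each yields exactly one distinct item, and every sampled item is produced at exactly one of them, so $\#\text{recursion leaves}=s$. In a tree where every internal node has one or two children, $\#\text{leaves}=\#\text{branching}+1$, so there are exactly $s-1$ branching nodes and the total is $\#\text{visited}=2s-1+Y$, where $Y$ is the number of unary nodes. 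Thus everything reduces to proving $\expect{Y}=\Oh{\expect{s}}$.

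The crux — and the main obstacle — is bounding $\expect{Y}$, because unary nodes form chains that contribute no distinct items, and a naive charge to the items below a chain costs $\Oh{\log n}$ per item. Here near-uniformity saves the day. Since the tree is balanced and weights lie in $[a,2a)$, the left-split probability satisfies $p_v\in[\tfrac15,\tfrac45]$ at every internal node, so conditioned on $N_v=c\geq2$ the probability of a unary split is $p_v^c+(1-p_v)^c\leq p_v^2+(1-p_v)^2\leq \rho\Def\tfrac{17}{25}<1$, uniformly in $c$. Therefore $\prob{v\text{ unary visited}}\leq \rho\,\prob{v\text{ visited},\,N_v\geq2}$ for each internal $v$. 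Summing over internal nodes and using that internal visited nodes with $N_v\geq2$ are exactly the branching and unary ones gives the self-referential inequality $\expect{Y}\leq \rho\bigl(\expect{s}-1+\expect{Y}\bigr)$, which rearranges to $\expect{Y}\leq\frac{\rho}{1-\rho}\expect{s}=\Oh{\expect{s}}$. Combining with $\#\text{visited}=2s-1+Y$ yields the claimed $\Oh{s}$ expected work. The only delicate points to get right are verifying the constant range for $p_v$ from the balance of the tree and justifying the rearrangement of the self-referential bound (legitimate since the tree is finite).
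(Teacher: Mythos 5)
Your proof is correct, but it takes a genuinely different route from the paper. The paper argues level by level: the top $\log k+\Oh{1}$ levels of the tree cost $\Oh{k}$ wholesale, and for a node at depth $\ell$ below that it bounds the probability of receiving $\geq 2$ samples by roughly $(k\rho)^2/2$ with $\rho\leq 4\cdot 2^{-\ell}$, so the expected cost per level decays geometrically and sums to $\Oh{k}$; the lemma's $\Oh{s}$ then rests on the (implicit) fact that $\expect{s}=\Om{\min(k,n)}$ for near-uniform weights. You instead make a structural charging argument: visited nodes decompose as $2s-1$ (recursion leaves biject with distinct outputs, branching nodes number $s-1$) plus unary nodes $Y$, and $Y$ is controlled by the per-node unary-split probability $p_v^c+(1-p_v)^c\leq p_v^2+(1-p_v)^2\leq 17/25$, yielding $\expect{Y}\leq\frac{\rho}{1-\rho}\expect{s}$ via the self-referential inequality (whose rearrangement is fine, since $Y$ is bounded by the tree size and each node is visited at most once, the binomial draw at $v$ being independent of the arrival count). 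Your route buys two things: it proves $\expect{\mathrm{work}}=\Oh{\expect{s}}$ directly, without the side fact $\expect{s}=\Th{k}$ that the paper leaves unstated, and it avoids the paper's informal ``$\approx$ by series development'' step (which, to be rigorous, should be replaced by $\prob{X\geq 2}\leq\binom{k}{2}\rho^2$). What the paper's route buys is robustness to tree shape: it uses only \emph{upper} bounds on subtree weights, so it tolerates the padded complete tree its proof describes (``an interior node at level $\ell$ represents $2^{L-\ell}$ leaves'' with $L=\ceil{\log n}$), where boundary nodes straddling position $n$ may have an empty child and hence $p_v\in\set{0,1}$ --- there your per-node bound fails, and one must either contract these $\Oh{\log n}$ structurally unary nodes or build the tree by even splits. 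For the even-split tree your constants check out: sibling leaf counts are within a factor of two and weights within a factor of two, so $p_v\geq n_L/(n_L+4n_L)=1/5$ and symmetrically $p_v\leq 4/5$, exactly as you claim; you rightly flagged this as the delicate point, and it is the one assumption your argument needs beyond the paper's.
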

\begin{proof}
  If $k\geq n$, $\Om{n}$ items are sampled in expectation at a total cost
  of $\Oh{n}$. So assume $k<n$ from now on.  The
  first $\log k+\Oh{1}$  levels of $T$ may be traversed
  completely, contributing a total cost of $\Oh{k}$.
  For the lower
  levels, we count the number $Y$ of visited nodes from which at least
  2 items are sampled.  This is proportional to the total number of
  visited nodes since nodes from which only one item is sampled
  contribute only constant expected cost (for directly sampling from
  the array) and since there are at most $2Y$ such nodes.

  Let $X$ denote the number of items sampled at a
  node at level $\ell$ of tree $T$.  An interior node at level $\ell$
  represents $2^{L-\ell}$ leaves with total weight $W_{\ell}\leq 2a2^{L-\ell}$
  where $L=\ceil{\log n}$. $X$ has a binomial distribution with $k$ trials and
  success probability
  $$
    \rho=\frac{W_{\ell}}{W}\leq \frac{2a2^{L-\ell}}{a2^{L-1}}=4\cdot 2^{-\ell}\punkt
  $$
  Hence, the probability of sampling at least two items at a node at level
  $\ell$ can be estimated as
  \[
    \prob{X\geq 2}=1-\prob{X=0}-\prob{X=1}=1-(1-\rho)^k-k\rho(1-\rho)^{k-1}\approx (k\rho)^2/2
  \]
  where the ``$\approx$'' holds for $k\rho\ll 1$ and
  was obtained by series development in the variable $k\rho$.

  The expected cost at level $\ell>\log k+\Oh{1}$ is thus estimated as the
  product of the above probability and the number of nodes at this level, \ie
  \[
    2^\ell\prob{X\geq 2}\approx 2^\ell(k\rho)^2/2\leq 2^{\ell}(k\cdot 4\cdot2^{-\ell})^2/2=8k^22^{-\ell}\punkt
  \]
  At level $\ell=\ceil{\log k}+3+i$ we thus get expected cost
  $\leq 8k^2 2^{-\ceil{\log k}}\cdot 2^{-3}\cdot 2^{-i} \leq k \cdot 2^{-i}$.
  Summing this over $i$ to cover all lower levels of the tree yields expected
  total cost $Y\in\Oh{k}$.
\end{proof}

\subsection{Distributed Case}\label{kwith:dist}
The batched character of sampling with replacement makes this
setting attractive for a distributed implementation using the
owner-computes approach.  Each PE builds the data structure described
above for its local items.  Furthermore, we build a top-level DC-tree
that distributes the samples between the PEs, \ie with one leaf for
each PE.  We will see below that this can be done using a bottom-up
reduction over the total item weights on each PE, \ie no PRAM
emulation or replication is needed.  Each PE only needs to store the
partial sums appearing on the path in the reduction tree leading to
its leaf. Sampling itself can then proceed without communication --
each PE simply descends its path in the top-level DC-tree analogous to the
uniform case \cite[Section~3.2]{SandersLHSD18}.  Afterwards, each PE knows how many
samples to draw from its local items. Note that we assume $k$ to be
known on all PEs and that communication for computing results from the
sample is not considered here.

\begin{theorem}\label{thm:pwithD}
  Sampling $k$ out of $n$ items with replacement (problem \PWith) can
  be done in a communication-free way with processing overhead
  $\Oh{\log p}$ in addition to the time needed for taking the local
  sample.  Building and distributing the DC-tree for distributing the
  samples is possible in time $\Oh{\Tstart\log p}$.
\end{theorem}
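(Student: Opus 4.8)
The plan is to verify the two stated time bounds, relying on the correctness of the generalized divide-and-conquer scheme of \cref{prel:daq} for the distributional claim. I would first handle the sampling overhead. The top-level DC-tree is a balanced binary tree with one leaf per PE, hence of depth $\Oh{\log p}$. Each PE descends the unique root-to-leaf path leading to its own leaf. At an interior node $v$ on this path that has received $m_v$ samples and whose two subtrees have weights $L$ and $R$, the number continuing into the on-path child is drawn from $\mathrm{Binomial}(m_v, L/(L+R))$, exactly as in \cref{prel:daq}. To make these draws agree across all PEs that share the ancestor $v$ without any communication, each PE seeds its pseudo-random generator with the identifier of $v$. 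Since $k=m_{\text{root}}$ is known to every PE, the counts $m_v$ are determined inductively down the path, and the weights $L,R$ at each on-path node are available locally once the PE has stored the subtree sums along its path (handled next). The descent therefore costs $\Oh{\log p}$ work per PE and is communication-free; upon reaching its leaf, the PE knows the number $m_i$ of local samples to draw and invokes the local sampling routine, whose cost is the ``local sample'' time of the statement.

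Next I would account for building and distributing the tree. The value stored at each interior node is the sum of the meta-weights $W_i=\sum_{j\in E_i}w_j$ over the leaves in its subtree, so all node values are obtained by a bottom-up reduction over the $p$ single-word values $W_1,\ldots,W_p$ on the balanced tree, costing $\Oh{\Tword+\Tstart\log p}=\Oh{\Tstart\log p}$ since $\Tword\leq\Tstart$. For each PE to be able to descend its path it must additionally know, at every on-path node, the weights of both children, equivalently the $\Oh{\log p}$ sibling subtree weights along its path. These are delivered by a downward pass over the same tree in which each node forwards the relevant child sums to its children (a segmented broadcast), again costing $\Oh{\Tstart\log p}$. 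No PRAM emulation or replication is needed, so the total preprocessing for the top-level tree is $\Oh{\Tstart\log p}$.

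The main obstacle is the sampling part: arguing that purely local computation -- shared node seeds plus the stored path sums -- reproduces precisely the splits that a global divide-and-conquer would make, so that the per-PE counts $(m_1,\ldots,m_p)$ carry the correct joint multinomial distribution and sum to $k$. This reduces to the correctness of the scheme in \cref{prel:daq}: any two PEs passing through a common node $v$ use the identical seed (the identifier of $v$) and the identical child weights, hence compute the identical split, and the binomial recursion preserves the invariant that the counts reaching the descendants of $v$ sum to $m_v$, so summing over the leaves yields exactly $k$. The remaining steps are the routine reduction and broadcast analyses above.
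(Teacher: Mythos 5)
Your proposal is correct in substance and its query-side argument is exactly what the paper intends, but your construction of the tree takes a genuinely different route from the paper's, and one cost claim in it needs a patch. The paper builds and distributes the DC-tree in a \emph{single} pass using the hypercube all-reduce: in round $i \in 1..d$ each PE exchanges one word (its current $(i{-}1)$-dimensional subcube sum) with its neighbor along dimension $i$ and simply records the received value; after $d=\floor{\log p}$ rounds every PE has itself computed all of its ancestors' sums and recorded all sibling subtree weights along its root-to-leaf path, so construction and distribution cost $\Oh{(\Tstart+\Tword)\log p}=\Oh{\Tstart\log p}$ together, followed by a small fix-up for non-power-of-two $p$ (PE $j=i+2^d$ sends its aggregate weight to PE $i$, which returns its path). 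You instead propose two passes: a bottom-up reduction, then a top-down pass delivering to each leaf its $\Oh{\log p}$ sibling sums. The reduction is fine, but the downward pass does not cost $\Oh{\Tstart\log p}$ as stated: the path data a node at depth $j$ must forward to its children grows to $\Theta(j)$ words, so naive level-by-level forwarding costs $\sum_{j\leq\log p}\bigl(\Tstart+j\,\Tword\bigr)=\Oh{\Tstart\log p+\Tword\log^2 p}$, which is $\Thsmall{\Tstart\log^2 p}$ when $\Tword=\Thsmall{\Tstart}$ --- a $\log p$ factor above the claim. This is fixable by pipelining the downward messages (the paper invokes precisely this observation in the proof of Theorem~\ref{thm:redistribute}: messages of size $\Oh{\log p}$ can be broadcast in time $\Oh{\Tstart\log p}$), after which your two-pass scheme meets the bound; the paper's one-pass all-reduce sidesteps the issue entirely because each round moves only a single word and no separate distribution phase exists.

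Your handling of the sampling overhead --- depth-$\Oh{\log p}$ descent, per-node seeding of the pseudo-random generator so that all PEs sharing an ancestor compute identical binomial splits, and the invariant that the counts at the children of $v$ sum to $m_v$ so the leaf counts sum to $k$ with the correct joint distribution --- matches what the paper delegates to \cref{prel:daq} and the surrounding text of \cref{kwith:dist}; the paper's own proof text addresses only the reduction, so nothing is missing there.
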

\begin{proof}
  It remains to explain how the reduction can be done in such a way
  that it can be used as a DC-tree during a query and such that each
  PE knows the path in the reduction tree leading to its leaf.  First
  assume that $p=2^d$ is a power of two. Then we can use the well-known
  hypercube algorithm for all-reduce (\eg \cite{KumEtAl94}).
  In iteration $i\in 1..d$ of this algorithm, a PE knows the sum for
  its local $i-1$ dimensional subcube and receives the sum for the
  neighboring subcube along dimension $i$ to compute the sum for its
  local $i$ dimensional subcube. For building the DC-tree, each PE
  simply records all these values.
  For general (\ie non-power-of-two) values of $p$, we first build the DC tree for
  $d=\floor{\log p}$.  Then, each PE $i$ with $i<2^d$ and $j\Def i+2^d<p$ receives the
  aggregate local item weight from PE $j$ and then sends its path to PE $j$.
\end{proof}

Observe that the way \cref{thm:pwithD} is stated, this sample distribution
algorithm can also be applied to the naive local sampling method of repeatedly
sampling one item from an alias table.

Similar to \cref{dist}, it depends on the assignment of the
items to the PEs whether this approach is load balanced for the local
computations. Before, we needed a balanced distribution of both the number
of items and the items' weights. Now the situation is better
because items may be sampled multiple times but
require work only once. On the other hand, we do not want to split
heavy items between multiple PEs since this would increase the amount
of work needed to process the sample. It would also undermine the idea
of communication-free sampling if we had to collect samples of the
same item assigned to different PEs.  Below, we once more analyze the
situation for items with arbitrary weight that are allocated to the
PEs randomly.

\begin{theorem}\label{thm:pwithDR}
  Consider an arbitrary set of item sizes and let $u=\log(\max_iw_i/\min_iw_i)$.
  If items are randomly
  assigned to the PEs initially, then preprocessing takes expected
  time $\Oh{\isort{u}{n/p}{1}+\Tstart\log p}$ where
  $\isort{u}{x}{1}$ denotes the time for sequential integer
  sorting of $x$ elements using keys from the range $0..u$.%
  \footnote{Note that this will be linear in all practically relevant situations.}
  Using this
  data structure, sampling a multiset $S$ with $k$ items and $s$
  different items can be done in expected time $\Oh{s/p+\log p}$.
\end{theorem}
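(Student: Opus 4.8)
The plan is to combine the communication-free query machinery of Theorem~\ref{thm:pwithD} with the shared-memory analysis of Theorem~\ref{thm:swr}, and to control the per-PE load using the random assignment. For the preprocessing bound I would first invoke a standard balls-into-bins argument: when $n$ items are thrown into $p$ PEs uniformly at random, each PE receives $\Oh{n/p+\log p}$ items, which is $\Oh{n/p}$ in expectation. Each PE then builds its local per-group bucket tables and DC-trees exactly as in \cref{kwith}; this is dominated by the local integer sort into the $u+1$ weight classes and costs $\Oh{\isort{u}{n/p}{1}}$, since building the trees and bucket tables is linear. On top of this we build and distribute the top-level DC-tree over the $p$ PEs, which by Theorem~\ref{thm:pwithD} is a single bottom-up reduction over the local total weights costing $\Oh{\Tstart\log p}$. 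Summing the two terms yields the claimed preprocessing time.

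For the query, Theorem~\ref{thm:pwithD} lets every PE descend its path in the top-level DC-tree without communication in time $\Oh{\log p}$, after which PE~$i$ knows how many samples $k_i$ to draw locally. Running the shared-memory query of Theorem~\ref{thm:swr} on the local data structure then costs expected work $\Oh{s_i+\text{(local leftover)}}$, where $s_i$ is the number of \emph{distinct} local items sampled; crucially this work depends on $s_i$ and not on $k_i$, so a PE owning a very heavy item pays only once for it even though it may sample it many times. It remains to bound $\max_i s_i$.

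The key observation that I would build on is that the output multiset $S$ is \emph{independent} of the random assignment. For every fixed assignment the divide-and-conquer construction reproduces exactly the distribution of $k$ i.i.d.\ draws from $(w_i/W)_i$, so the conditional law of $S$ given the assignment does not depend on it; hence the set $D^*$ of distinct sampled items is independent of the assignment as well. Conditioning on $D^*$ (and thus on $s=\card{D^*}$), the items of $D^*$ are therefore placed into the $p$ PEs by an independent uniform assignment, i.e.\ as $s$ balls into $p$ bins. Since $s_i$ is precisely the number of items of $D^*$ owned by PE~$i$ and $\sum_i s_i=s$, the standard maximum-load bound gives $\max_i s_i=\Oh{s/p+\log p}$ in expectation (and with high probability).

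The main obstacle is the bookkeeping that turns these per-PE statements into the stated \emph{time} bound $\Oh{s/p+\log p}$, and two points need care. First, the local leftover term inherited from Theorem~\ref{thm:swr} must be shown to be dominated: repeating that argument for PE~$i$ bounds it in terms of the \emph{local} top-level DC-tree over PE~$i$'s nonempty groups, whose number is at most $\min(u,n_i)$, so the additive overhead is governed by $\Oh{\log u}$ rather than $\log n$ and is absorbed into the $\Oh{\log p}$ descent in all relevant regimes; making this reduction from $\log n$ to a term dominated by $\log p$ precise is the delicate part. Second, $s_i$ depends on the sampling randomness while the local work does too, so one must first take expectation over the sampling inside each PE and only then apply the balls-into-bins concentration over the assignment — and the independence established above is exactly what legitimises this two-stage argument. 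Combining the $\Oh{\log p}$ descent, the concentrated bound $\max_i s_i=\Oh{s/p+\log p}$, and the dominated leftover then yields the claimed expected query time.
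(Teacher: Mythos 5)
Your proposal is correct and takes essentially the same route as the paper's proof: Chernoff/balls-into-bins gives the $\Oh{n/p+\log p}$ per-PE load so that preprocessing reduces to local integer sorting plus the $\Oh{\Tstart\log p}$ reduction of Theorem~\ref{thm:pwithD}, and for the query you use exactly the paper's key observation that the random allocation is independent of which items end up sampled, so the $s$ distinct samples are spread $\Oh{s/p+\log p}$ per PE by another Chernoff bound -- your conditioning on the distinct sample set $D^*$ merely makes rigorous what the paper states in one line (``their allocation is independent of the choice of the sampled items''). The additive leftover term from Theorem~\ref{thm:swr} that you flag as delicate is silently glossed over in the paper's (very terse) proof, so your extra discussion goes beyond, rather than against, the published argument.
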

\begin{proof}
  For preprocessing, standard balls-into-bins arguments tell us that a random
  assignment of $n\geq p\log p$ items to $p$ PEs results in no more than
  $n/p + \Thsmall{\sqrt{n/p \cdot \log p}} = \Th{n/p}$ items at any PE with high
  probability (see \eg Ref.~\cite{raab1998balls}), and $\Oh{\log p}$ items if
  $n<p\log p$.  Thus, we have at most $\Oh{n/p + \log p}$ items at any PE with
  high probability.

  Since sorting is now a local operation, we only need an efficient sequential
  algorithm for approximately sorting integers. The term $\Tstart\log p$ is for
  the global DC-tree as in Theorem~\ref{thm:pwithD}.

  A sampling operation will sample $s$ items. Since their allocation is
  independent of the choice of the sampled items, we can apply the same
  balls-into-bins bounds as above to conclude that no more than
  $\Oh{s/p+\log p}$ sample items are allocated to any PE with high probability.
\end{proof}

\section[Sampling Without Replacement (Problem \PWithout)]{Sampling Without Replacement (Problem \PWithout)}\label{kno}

A common approach to sampling without replacement is to sample \emph{with}
replacement and reject items that were already sampled.  Using a hash table to
quickly identify duplicate samples works well for small sample sizes in the
uniform (\ie unweighted) sampling setting \cite{SandersLHSD18}.  It would,
however, not work well for weighted inputs with skewed distribution, where few
items would dominate the sample with replacement, leading to an extremely high
number of rejected items.  But presume that we knew an
$\ell > k$ so that a sample of size $\ell$ \emph{with replacement} contains at
least $k$ and no more than $\Oh{k}$ unique items with sufficiently high
probability.  Then we could use the output-sensitive algorithm for \PWith of
\cref{kwith} to obtain a sample $S$ of size $\geq k$
and subsequently draw an appropriate subsample of size $k$ from $S$ using
an algorithm that has work linear in $\card{S}$.

A crucial step is to find $\ell$.  This depends heavily on the
distribution of the input: if all items have similar weight, $\ell$ will be
little larger than $k$, but if the weights are heavily skewed, it could have to
be much larger.  Our main observation is that by having distributed the
items into groups of similar weight (see \cref{kwith}), we already have enough
information to compute a good value for $\ell$ in logarithmic time.  The basis
of this estimation is to assume that sufficiently heavy items are sampled once
and lighter items are sampled with probability proportional to their weight. We
precompute the data needed for the estimation for each group and then, at query
time, perform a binary search over the groups.  Suppose the currently considered
group stores items with weights in the range $[a, 2a)$. Then we try the value
$\ell= \ceil{W/a}$. We (over-)estimate the resulting number of unique samples as
  $\card{\setGilt{i}{w_i\geq a}}+\ell\cdot\sum_{i:w_i<a}w_i/W$.

This is the precomputed \emph{number} of items with weight $\geq a$ (those in heavier
groups, including the current group) plus $\ell$ times the relative
\emph{weight} of the items with weight $<a$.  Below we exploit that aiming for an
overestimate of $2k$ different samples will actually yield at least $k$ with high probability. First, we state the above estimation as a lemma -- which we shall
prove later -- providing both an over- and an underestimate.

\begin{lemma}%
  \label{lem:nr:exp}
  Let $X$ be the random variable describing the number of unique items in a
  weighted sample with replacement of size $\ell$.  Then,
  \[\left(1-\frac1e\right)t_{\ell} \leq \expect{X} \leq t_{\ell}
    \kern.5em\text{where}\kern.5em
  t_{\ell} \Def \ell \cdot \sum_{i:\, \frac{w_i}{W} < \frac 1\ell} {\frac{w_i}{W}}
  + \card{\set{i \mathrel{\Big|} \frac{w_i}{W} \geq \frac1\ell}}\punkt \]
\end{lemma}
Before we dive into the technical parts, we now describe the algorithm in more
detail.

\paragraph*{Preprocessing}
To support fast queries, we need to do some additional precomputation during
construction.  First, compute $W$ and apply the preprocessing of
\cref{kwith}. Then, compute each group's relative total weight,
$g_i \Def \sum_{j \in G_i}{w_j/W}$, where $G_i$ denotes the group with index
$i$, and their exclusive prefix sum, $h_i \Def \sum_{j < i}g_i$.  Further, let
$c_i \Def \sum_{j < i}{\card{G_j}}$ be the number of items in all groups before
group $i$.

\paragraph*{Finding a Sample $S$ with $\card{S}\geq k$}
At query time, we can find a suitable value of $\ell$ using binary search over
the non-empty groups.  These are exactly the leaves of the top-level DC-tree
constructed by the preprocessing for \PWith.  Let $i$ be the index of the group
currently under consideration and $[a_i, 2a_i)$ be the interval of probabilities
assigned to the group.  Lemma~\ref{lem:nr:exp} gives upper and lower bounds for
the expected number of unique items in a sample of size $\ell$, which are only a
constant factor apart.  With the data structures computed during construction,
we can evaluate $t_\ell$ for $\ell \Def \ceil{W/a_i}$ as
$t_{\ell} \Is \ell \cdot h_i + (n - c_i)$.  Lemma~\ref{lem:nr:group} states that
we can find a group $i$ whose minimum weight $a_i$ gives us a value of $\ell$
such that sampling $\ell$ items with replacement yields $2k$ unique items
in expectation, but at the same time not too many more (\ie $\Oh{k}$).  This
ensures that the output contains at least $k$ unique items with sufficient
probability.  However, this group may be empty and thus not considered in the
binary search.  Let $G_i$ and $G_j$ be the non-empty groups with items of
next-smaller and next-larger weights, respectively.  Then we are free to choose
$\ell$ in the range $2a_i..a_j$.  We do this by solving the inequality
\( \smash{\expect{X} \geq (1-1/e)t_\ell \stackrel{!}{=} 2k} \) of
\cref{lem:nr:exp} for $\ell$.  As there are no items with weights in the range
considered, the sum of weights and the set cardinality in the equation of the lemma
remain constant, and we can solve for $\ell$ in constant time to find the most
suitable value of~$t_\ell$, obtaining
\( \ell = h_i^{-1} \left( 2k\frac{e}{e-1} - (n - c_i) \right) \).
We then draw a sample $S$ of size $\ell$ \emph{with} replacement.
If $S$ contains fewer than $k$ different items, we discard it and
repeat the procedure until the number of different sampled items
is $\geq k$.

\paragraph*{Subsampling}
A difficulty with using the output sensitive algorithm for sampling with replacement is that we obtain no information about the order in which items are sampled while the definition of sampling without replacement uses this order to decide which items are included in the sample. However, we can exploit that every order is equally likely in the following sense: Consider $S$ as the sequence of $\ell$ sampled items in an arbitrary order. Any permutation of $S$ is equally likely to
be the result of an equivalent sequential sampling with replacement with the same overall result.
The probability of any sampled item $e$ to be the first in a randomly permuted  sequence is proportional to its multiplicity in $S$. After removing $e$, the same is true for the remaining items, and so on. Thus, we can obtain a subsample with the right probabilities using \PWithout on $S$ with the multiplicities in $S$ as the new weights. Since this new problem has size $\Oh{k}$ in expectation,
we can use the exponential clocks method from \cref{prel:exp}:
We generate appropriate random variates for each of $\card{S}$ items and perform  a parallel selection for the items that receive rank $\leq k$.
We get:
\begin{theorem}\label{thm:without}
  Preprocessing for problem \PWithout is possible with the same work and time
  bounds as for problem \PWith (Theorem~\ref{thm:swr}).  Sampling $k$ items
  without replacement is then possible with expected span $\Oh{\log n}$ and expected work
  $\Oh{k}$ on a CREW PRAM.
\end{theorem}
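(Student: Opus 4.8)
The plan is to analyze the three-phase query sketched before the theorem — binary search for $\ell$, oversampling with replacement, and exponential-clocks downsampling — treating preprocessing and query separately, and to reduce everything to Theorem~\ref{thm:swr} plus one probabilistic sizing lemma. \emph{For preprocessing}, I would first observe that every structure \PWith needs (Theorem~\ref{thm:swr}) is built exactly as in \cref{kwith}: the partition into the $u=\ceil{\log U}$ weight groups, the per-group bucket tables, and the DC-trees. The only addition is to precompute, for each group, its number of items and its total weight, together with prefix sums of both quantities over the at most $u\leq n$ groups. That is a pair of prefix-sum computations, adding only $\Oh{n}$ work and $\Oh{\log n}$ span, so the preprocessing bounds coincide with those of Theorem~\ref{thm:swr}.

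\emph{For the query span}, the binary search ranges over the $\Oh{\log n}$ groups, and each probe evaluates the overestimate $\card{\setGilt{i}{w_i\geq a}}+\ell\sum\setGilt{w_i}{w_i<a}/W$ with $\ell=\ceil{1/(2a)}$ in constant time from the precomputed prefix sums, costing $\Oh{\log n}$ work and span. The \PWith query of Theorem~\ref{thm:swr} for the chosen $\ell$ contributes span $\Oh{\log n}$, and the exponential-clocks downsampling — assigning a variate to each of the $k'$ distinct oversampled items and selecting the $k$ smallest — is a parallel selection of span $\Oh{\log n}$. Thus the overall span is $\Oh{\log n}$ once the work bound below holds.

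\emph{The crux} is to show that the $\ell$ returned by the search yields an oversample whose number $k'$ of distinct items satisfies $k\leq k'=\Oh{k}$. Writing $D$ for this count, $\expect{D}=\sum_i\bigl(1-(1-w_i/W)^\ell\bigr)$. The upper bound is immediate: bounding each heavy term ($w_i\geq a$) by $1$ and each light term by $1-(1-w_i/W)^\ell\leq \ell w_i/W$ shows $\expect{D}$ is at most the overestimate, which the search keeps at $\Th{k}$; hence $k'=\Oh{k}$ in expectation, and the \PWith query costs expected work $\Oh{k'+\log n}=\Oh{k}$ by Theorem~\ref{thm:swr}, as does the downsampling. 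For the matching lower bound I would show the overestimate is tight up to a constant: since $\ell=\ceil{1/(2a)}$, any heavy item has $\ell\cdot(w_i/W)\geq \ell a=\Th{1}$, so $1-(1-w_i/W)^\ell\geq 1-e^{-\Om{1}}=\Om{1}$, while any light item has $\ell w_i/W\leq 1/2$, so $1-(1-w_i/W)^\ell\geq\bigl(1-\ell w_i/(2W)\bigr)\,\ell w_i/W=\Om{\ell w_i/W}$. Summing gives $\expect{D}=\Om{k}$. Because the occupancy indicators $[\,i\in S\,]$ of a balls-into-bins process are negatively associated, $D$ concentrates, so $D\geq k$ holds with at least constant probability; on the rare failure we simply repeat, which costs only an expected constant factor.

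\emph{The main obstacle} is exactly this two-sided lemma — pinning $\expect{D}$ between $k$ and $\Oh{k}$ — because both estimates must survive arbitrary, possibly extremely skewed, weight distributions, where essentially all the weight may sit in the boundary group and the constant in the lower bound is the one thing at risk; aiming the search at a suitable constant times $k$ (the paper uses $2k$) absorbs this slack. The remaining point is that the downsampling must reproduce the \PWithout distribution exactly, not merely a sample from the candidate set. For this I would argue that selecting the $k$ smallest exponential clocks among the $\Th{k}$ candidates coincides with selecting them among all $n$ items precisely when the candidate set contains every item of the $k$ smallest clocks; the sizing guarantee above makes this event hold with the stated probability, and enlarging $\ell$ by a constant factor (or retrying on detected failure) makes it overwhelmingly likely without affecting the asymptotic work or span.
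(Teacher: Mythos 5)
Your proposal follows the paper's proof in all essentials: the same preprocessing additions (per-group item counts, group weights, and prefix sums over both), the same binary search that evaluates the overestimate in constant time per probe, a re-derivation of the two-sided bound $(1-1/e)\,t_\ell\leq\expect{D}\leq t_\ell$ (the paper's Lemma~\ref{lem:nr:exp}) and of the factor-two loss from group granularity (Lemma~\ref{lem:nr:group}), and the same reject-and-retry followed by exponential-clocks selection of $k$ out of $\Th{k}$ candidates. Your negative-association argument for constant success probability is sound and, if anything, more careful than the paper's one-line Markov claim. Two steps, however, do not hold up.

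First, you never handle empty groups, and the bounds fail without that. Your search evaluates $t_\ell$ only at the boundary values $\ell=\ceil{1/(2a_i)}$ of groups that actually occur, and the factor-two continuity you invoke holds between adjacent \emph{indices}, not between adjacent \emph{non-empty} groups: across a run of empty groups, $\ell$ jumps by an unbounded factor while the light-weight sum and the heavy count in $t_\ell$ stay fixed, so $t_\ell$ can leap from far below $2k$ to far above it, and no non-empty boundary achieves $t_\ell=\Th{k}$. Landing on the low side gives $\expect{D}<k$, so by Markov the retry loop needs $\om{1}$ expected rounds; landing on the high side makes the oversample cost $\om{k}$. The paper closes exactly this hole: inside such a gap $t_\ell$ is linear in $\ell$ with fixed coefficients, so it solves the linear equation of Lemma~\ref{lem:nr:exp} for $\ell$ between the two neighboring non-empty groups in constant time. (Relatedly, there are $u=\ceil{\log U}$ groups, which need not be $\Oh{\log n}$; the $\Oh{\log n}$ search cost comes from restricting to the at most $n$ non-empty groups, the leaves of the top-level DC-tree.)

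Second, your closing coupling argument is not valid as written. The candidate set is produced by with-replacement sampling and is independent of any exponential clocks; the algorithm draws clocks only for the candidates. The event ``the candidate set contains all items with the $k$ globally smallest clocks'' therefore lives in no probability space the algorithm can observe: it cannot be detected for a retry, and enforcing it by rejection would itself bias the output, since containment is positively correlated with the sample consisting of heavy items. Your instinct that exactness of the output distribution needs an argument is reasonable---the paper's proof is silent on this point and simply downsamples among the distinct candidates---but the containment coupling you sketch does not supply that argument and should be dropped or replaced by something genuinely different.
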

\begin{proof}%
  The computation of $W$ as well as the $g_i$, $h_i$, and $c_i$ is possible with
  linear work and logarithmic span.  Thus, the running time of the construction
  phase is dominated by the preprocessing for \PWith of Theorem~\ref{thm:swr}.
  The binary search for a query takes time $\Oh{\log n}$, as each step of the
  search requires constant time using the values computed during the
  construction phase.

  For the remaining cost of a query, note that $\card{S}$ can be
  written as the sum of independent indicator random variables $X_i$
  where $X_i=1$ if item $i$ is sampled. Thus, using Chernoff bounds,
  we can conclude that $\card{S}$ is sharply concentrated around its
  expectation.%
  \footnote{More concretely, one useful form of Chernoff bounds states
    that the probability of $\card{S}$ to deviate by more than a factor
    of~$c$ from its expectation $s=\expect{|S|}$ is at most
    $e^{-c^2s/3}$. Hence, for $s=\Om{\log n}$, the probability of
    deviations by more than a constant factor is polynomially small with
    a power of the polynomial we are free to choose.}  We thus calculate
  the complexities based on expected values since large deviations are
  unlikely. In particular, the case that too few samples are drawn,
  which requires another iteration of the query, is sufficiently rare
  to ignore its effect on overall complexity.

  Downsampling the
  resulting sample needs $\Oh{k}$ work and $\Oh{\log k}$ span for a selection of
  $k$ out of $\Th{k}$ items \cite[Section 4]{HubSan16topk}.  In summary, we
  obtain the claimed bound.
\end{proof}

\begin{figure}
  \centering
  \includegraphics{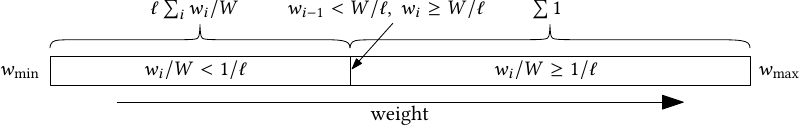}
  \caption{How the expected number of samples $\expect{X}$ is influenced by the
    choice of $\ell$ depends on the distribution of the inputs (\cref{lem:nr:exp}).  Items with
    small probability of being sampled contribute linearly in their weight,
    whereas large items are expected to be sampled.}\label{fig:lsel}
\end{figure}

\begin{proof}[Proof (\cref{lem:nr:exp})]
  We can bound the number of unique items in a sample with replacement of size
  $\ell$ by considering the probability of any item to \emph{not} be sampled.
  Item $i$ is \emph{not} part of the sample with probability $(1-w_i/W)^\ell$,
  and we obtain $\expect{X} = \sum_{i=1}^{n}{1-(1-w_i/W)^\ell}$ where $X$ is the
  random variable describing the number of unique items in the sample.

  We split the formula for the expectation of $X$ into two parts: the items for
  which $w_i/W<1/\ell$, and those with $w_i/W \geq 1/\ell$.  For the first
  group, by Bernoulli's inequality, we obtain
  $1-(1-w_i/W)^\ell \leq \ell w_i/W$.  We find the following lower bound for the
  inclusion probability of items in the first group:
  \begin{align*}
    1-(1-w_i/W)^\ell & =1- (1-b_i/n)^{c\cdot n}            &  & \text{where } b_i \Def nw_i/W \text{ and } c \Def \ell/n \\
                     & \geq 1-e^{-c b_i} = 1-e^{-\ell w_i/W} &  & \text{by } e^x \geq (1+x/n)^n                            \\
                     & \geq \left(1-1/e\right)\ell w_i/W   &  & \text{because } e^{-x}\leq 1 - (1-1/e)x \text { for } x \in (0,1)
  \end{align*}
  and thus the first term of $t$ is explained.  Consider now the items with
  $w_i/W \geq 1/\ell$, \ie the items which yield at least one sample in
  expectation.  Clearly, $1-(1-w_i/W)^\ell \leq 1$.  In the other direction,
  because $w_i/W \geq 1/\ell$, we obtain
  $1-(1-w_i/W)^\ell \geq 1-(1-1/\ell)^\ell \geq 1-1/e$ by the inequality
  $(1+x/n)^n \leq e^x$ for $\ell>1$.  Otherwise, if $\ell\leq 1$, either no item
  with $w_i/W\geq 1/l=1$ exists, or we are in the trivial case with $n=1$ and
  $w_1=W$.  Summing the results for the first and second group, we obtain the
  claimed inequalities.
\end{proof}

An example of this is illustrated in \cref{fig:lsel}.
By applying the above estimation to the groups used by the algorithm of
\cref{kwith}, we can quickly obtain an estimate for the output size that
is at most a factor of two worse.

\begin{lemma}\label{lem:nr:group}
  Applying the estimation of Lemma~\ref{lem:nr:exp} to groups of items of
  similar weight as in \cref{kwith} loosens the bound on $\expect{X}$ by
  at most a factor of two.
\end{lemma}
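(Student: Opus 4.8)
The plan is to compare the group-based estimate with the exact estimate $t_\ell$ of \cref{lem:nr:exp} and show that the two never differ by more than a factor of two; the claim then follows by composing this with the bound $(1-1/e)\,t_\ell \le \expect{X} \le t_\ell$ already established. Recall that the exact estimate classifies each item individually at the threshold $w_i/W = 1/\ell$: a \emph{light} item (with $w_i/W < 1/\ell$) contributes $\ell w_i/W$, while a \emph{heavy} item (with $w_i/W \geq 1/\ell$) contributes $1$. The group-based estimate instead classifies \emph{whole groups} at a group boundary, using the exact summed relative weight for the light groups (the term $\ell h_i$) and the exact item count for the heavy groups (the term $n-c_i$).

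First I would observe that the two estimates agree \emph{exactly} on every group lying entirely on one side of the threshold $1/\ell$: for a group all of whose items are light, summing the per-item linear contributions $\ell w_j/W$ reproduces the group's linear term, and for a group all of whose items are heavy each item contributes $1$ in both estimates. Since the groups are disjoint dyadic intervals of the relative weights, the threshold $1/\ell$ falls in the interior of at most one of them, so the only possible discrepancy comes from a single \emph{boundary group} whose probability range $[a,2a)$ satisfies $a \le 1/\ell < 2a$, equivalently $1/(2a) < \ell \le 1/a$.

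The heart of the argument is then to bound the discrepancy on this boundary group using the fact that its weights span at most a factor of two. If the algorithm places the boundary group on the light side, its light items are scored correctly, while each of its heavy items is over-counted by the factor $\ell w_j/W$; since $w_j/W < 2a$ and $\ell \le 1/a$, this factor is below $2$. Symmetrically, if the boundary group is placed on the heavy side, its heavy items match and each light item is over-counted by $1/(\ell w_j/W)$; since $w_j/W \ge a$ and $\ell > 1/(2a)$ we have $\ell w_j/W > 1/2$, so again the factor is below $2$. In either case the group estimate over-counts, but never undercounts, the boundary group, and does so by at most a factor of two per item and hence for the whole group: writing $E_s$ and $G_s$ for the exact and group contributions of the boundary group, $E_s \le G_s \le 2E_s$.

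Finally I would assemble the pieces. Since all other groups contribute identically, the group estimate $t^{\mathrm{g}}_\ell = (t_\ell - E_s) + G_s$ obeys $t_\ell \le t^{\mathrm{g}}_\ell \le t_\ell + E_s \le 2t_\ell$, where the last step uses $E_s \le t_\ell$. Combining with \cref{lem:nr:exp} yields $\frac{1}{2}(1-1/e)\,t^{\mathrm{g}}_\ell \le \expect{X} \le t^{\mathrm{g}}_\ell$, so grouping costs at most one additional factor of two in the estimate, as claimed. I expect the boundary-group accounting to be the main obstacle: one must verify that the two over-counting regimes above are the only ones that can occur and that the spread-by-two property really caps both the light-side and heavy-side errors at a factor of two, so that the clean ``agree except on one straddling group'' decomposition can be pushed through; the disjoint dyadic structure of the groups is exactly what keeps the error confined to a single term.
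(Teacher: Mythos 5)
Your proof is correct, but it establishes the factor-two claim along a different axis than the paper does. The paper's proof only ever evaluates the estimate at the admissible values $\ell = \ceil{1/(2a_i)}$, where the threshold $1/\ell$ (essentially) aligns with a group boundary, and instead compares two \emph{consecutive} admissible values $\ell$ and $\ell' = \ell/2$: halving $\ell$ scales the light-side contribution by $1/2$, leaves groups above the threshold unchanged, and moves exactly one group across the threshold, whose contribution changes by at most a factor of two because its weights span less than a factor of two; this gives $t_\ell \leq 2t_{\ell'}$, i.e., the admissible values form a factor-two net over all $\ell$ of Lemma~\ref{lem:nr:exp}. You instead fix an \emph{arbitrary} $\ell$ and compare the group-granularity estimator against the exact per-item $t_\ell$, localizing the discrepancy to the single dyadic group straddling $1/\ell$ and bounding its per-item over-count by $2$, which yields the one-sided sandwich $t_\ell \leq t^{\mathrm{g}}_\ell \leq 2t_\ell$. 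Both arguments hinge on the identical structural fact (weights within a group differ by less than a factor of two), but the statements differ: the paper certifies that the \emph{discretization of $\ell$} is fine enough, which is what directly backs the binary search over non-empty groups in Theorem~\ref{thm:without}, whereas your pointwise version certifies the \emph{group-level formula} $\ell\cdot h_i + (n-c_i)$ at every $\ell$ -- which is actually what that theorem's proof uses when the ideal group is empty and $\ell$ is chosen by solving the linear equation in the gap between non-empty groups, and which also cleanly absorbs the rounding in $\ell=\ceil{1/(2a_i)}$ that slightly misaligns the threshold with a group boundary (a detail the paper glosses over). Your one-sidedness observation (the group estimate never undercounts) is a small bonus the paper's formulation does not state; the only cost is the bookkeeping constant in $\frac12(1-1/e)t^{\mathrm{g}}_\ell \leq \expect{X} \leq t^{\mathrm{g}}_\ell$, which is harmless since the theorem only needs a constant-probability success per attempt.
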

\begin{proof}
  Item $i$ is in group $\floor{\log (w_i/\wmin)}$.  Label the groups $G_1..G_u$
  where $u \Def \ceil{\log U} = \ceil{\log (\wmax/\wmin)}$ and let the value
  range of group $i$ be $[a_i, 2a_i)$.  The difference to Lemma~\ref{lem:nr:exp}
  now is that we can only choose $\ell$ as $\ceil{W/a_i}$ for some group $i$.
  This necessitates moving entire groups between the left and right parts of the
  estimation (with left and right as illustrated in \cref{fig:lsel}).  Pick a
  group $i$, \ie $\ell \Def \ceil{W/a_i}$, and consider the effects of choosing
  group $i+1$ instead, \ie
  $\ell' \Def \ceil{W/a_{i+1}} = \ceil{W/(2a_i)} = \ell/2$.  This results in
  moving group $i$ from the right part of the estimation to the left part.
  The contribution of any group $j>i$ is the same for $t_\ell$ and
  $t_{\ell'}$ because their items are expected to be sampled in both instances.
  Furthermore, the contribution of groups $1..i-1$ is halved, as $\ell'=\ell/2$.
  It remains to bound the contribution of group $i$, which is moved from the
  right part of the estimation to the left part: it contributes $\card{G_i}$ to
  $t_\ell$, and $\xi \Def \ell' \sum_{j \in G_i}{w_j/W}$ to~$t_{\ell'}$.
  However, as all items in $G_i$ are in the weight range $[a_i, 2a_i)$ and
  $\ell' = \ceil{W/a_{i+1}} = \ceil{W/(2a_i)}$, we have
  $\xi \geq \ell' a_i \card{G_i}/W \geq a_i/a_{i+1} \cdot \card{G_i} =
  \card{G_i}/2$, and thus the expected number of samples contributed by group
  $i$ at most halves, too.  Thus, $t_{\ell} \leq 2 t_{\ell'}$, and for any
  $\ell$ of Lemma~\ref{lem:nr:exp} we can find an $\hat \ell \geq \ell$ that
  yields no fewer, but also at most twice as many unique items in expectation
  while only relying on information about the groups, not the $w_i$.
\end{proof}

\subsection{Distributed Sampling without Replacement}\label{ss:nr_dist}

We again use the owner-computes approach and adapt the distributed
data structure for problem \PWith from
\cref{kwith:dist}. However, to find the right estimate for the
number $\ell$ of samples with replacement, we need to perform a global
estimation taking all items into account. This can be achieved by
finding the global sum of all the local prefix sums in each step of
the binary search. This increases the latency of the algorithm to
$\Oh{\Tstart\log(p)\log(n)}$. Also, in each iteration we need a nested
binary search to find corresponding buckets in the local
bucket arrays.

If the global number of groups $u$ is not too large, we can
consider a different trade-off between preprocessing cost and query
cost. We precompute replicated arrays of global group sizes and weights. This
allows finding the right value of $\ell$ without communication at query time.
Local sorting is then performed using
bucket sort with $u$ groups in time $\Oh{n/p +u}$.
The global group sizes can then be computed as a sum all-reduction on the
local arrays of group sizes in time $\Oh{\Tword u +\Tstart\log p}$.

During a query, after sampling with replacement, we need a global parallel
selection algorithm to reduce the sample size to $k$. This can be done in
expected time $\Oh{\Tword k/p+\Tstart\log kp}$ using the unsorted selection
algorithm from Ref.~\cite[Section~4]{HubSan16topk}.  The term $\Tword k/p$ stems from
the need to redistribute samples taken within the selection algorithm. For
randomly distributed data this is not necessary and the term becomes a local
work term $k/p$.  We get the following result:

\begin{theorem}\label{thm:pwithoutDR}
  Consider an arbitrary set of item sizes and let $u=\log(\max_iw_i/\min_iw_i)$.
  If items are randomly assigned to the PEs initially, then preprocessing takes
  expected time $\Oh{\isort{u}{n/p}{1}+\Tstart\log p}$ where $\isort{u}{n/p}{1}$
  denotes the time for sequential integer sorting of $n/p$ elements using keys
  from the range $0..u$.  Using this data structure, sampling $k$ items without
  replacement can be done in expected time $\Oh{k/p+\Tstart\log(n)\log(p)}$.

  A variant that uses a replicated array of global group sizes works with
  $\Oh{n/p + \Tword u+\Tstart\log p}$ preprocessing time and query time
  $\Oh{k/p + \log u + \Tstart\log kp}$.
\end{theorem}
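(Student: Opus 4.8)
The plan is to build on the distributed \PWith data structure of Theorem~\ref{thm:pwithDR} and to lift the reduction from \PWithout to \PWith developed in the shared-memory proof of Theorem~\ref{thm:without} into the distributed owner-computes setting. For preprocessing I would first run the preprocessing of Theorem~\ref{thm:pwithDR}, which already establishes the claimed bound $\Oh{\isort{u}{n/p}{1}+\Tstart\log p}$ and in particular produces, on each PE, the local groups of geometrically spaced weights together with the top-level DC-tree used to distribute samples across PEs. On top of this I would compute, for each group, its local relative total weight and item count; these are exactly the ingredients needed to evaluate the overestimate $t_\ell$ of Lemma~\ref{lem:nr:exp}. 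Since the Chernoff argument from the proof of Theorem~\ref{thm:pwithDR} guarantees that each PE holds $\Oh{n/p+\log p}$ items with high probability, all this extra local bookkeeping stays within the stated preprocessing bound.

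For a query I would proceed in three phases. First, \emph{find a suitable $\ell$}: run the binary search over the non-empty groups exactly as in Theorem~\ref{thm:without}, but because $t_\ell$ aggregates weights and counts over \emph{all} PEs, each of the $\Oh{\log n}$ binary-search steps (there are at most $u\leq n$ groups) requires one global reduction to combine the local partial sums, costing $\Oh{\Tstart\log p}$ latency each; a nested local binary search locates the relevant group boundary inside each PE's bucket arrays and contributes only local work. This is the source of the $\Oh{\Tstart\log(n)\log(p)}$ term. Second, \emph{draw the oversample} of size $\ell$ with replacement via the communication-free DC-tree descent of Theorem~\ref{thm:pwithDR}; since $\ell$ is chosen so that the number of distinct items is $\Th{k}$ in expectation (Lemma~\ref{lem:nr:group} costs at most a factor two over Lemma~\ref{lem:nr:exp}) and items are randomly assigned, a Chernoff bound shows each PE produces $\Oh{k/p+\log p}$ distinct samples. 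Third, \emph{downsample} to exactly $k$ items using the distributed unsorted selection (exponential clocks) of \cite[Section~4]{HubSan16topk} in expected time $\Oh{\Tword k/p+\Tstart\log p}$, which collapses to $\Oh{k/p}$ for randomly distributed data. Correctness of the output size follows as in Theorem~\ref{thm:without}: $\ell$ targets $\geq 2k$ distinct items in expectation, so by Markov's inequality a draw fails to reach $k$ distinct items with probability at most $1/2$ and is retried a constant number of times in expectation. Summing the three phases gives the query bound $\Oh{k/p+\Tstart\log(n)\log(p)}$.

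For the replicated variant I would shift the global aggregation out of query time into preprocessing. After bucketing locally into the $u$ groups with a sequential bucket sort in time $\Oh{n/p+u}$, I would form the global arrays of per-group counts and weights by a single all-reduction on vectors of length $\Oh{u}$, costing $\Oh{\Tword u+\Tstart\log p}$; this yields preprocessing time $\Oh{n/p+\Tword u+\Tstart\log p}$. With the global group statistics replicated on every PE, the search for $\ell$ becomes purely local in $\Oh{\log u}$ time with no per-step communication, and the remaining query cost is again the communication-free oversampling followed by the selection, giving query time $\Oh{k/p+\log u+\Tstart\log p}$.

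I expect the main obstacle to be controlling the query \emph{latency} rather than the work. The work terms are routine given Theorem~\ref{thm:pwithDR} and the Chernoff concentration for the random allocation, but the requirement to recompute a \emph{global} estimate of the distinct-sample count at every step of the binary search forces a logarithmic number of reductions and produces the $\Oh{\Tstart\log(n)\log(p)}$ factor. The replicated variant exists precisely to trade this repeated communication for a one-off $\Oh{\Tword u}$ preprocessing cost, and the delicate point is arguing that replicating only the coarse per-group statistics (not the items themselves) is enough to evaluate $t_\ell$ and thus to locate $\ell$ entirely locally.
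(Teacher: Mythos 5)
Your proposal is correct and follows essentially the same route as the paper: the paper's (inline) argument likewise reuses the preprocessing of Theorem~\ref{thm:pwithDR}, performs the binary search for $\ell$ with one global reduction per step plus a nested local binary search (yielding the $\Oh{\Tstart\log(n)\log(p)}$ latency), samples with replacement communication-free, and downsamples via the unsorted selection of \cite[Section~4]{HubSan16topk}, whose $\Tword k/p$ term becomes local work $k/p$ for randomly distributed data. Your replicated variant -- local bucket sort in $\Oh{n/p+u}$, an all-reduction on length-$u$ vectors in $\Oh{\Tword u+\Tstart\log p}$, and a purely local $\Oh{\log u}$ search for $\ell$ -- is exactly the trade-off the paper describes.
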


A proof is provided in Ref.~\cite[Chapter 3.5.1]{lorenzdiss}.

\section{Permutation (Problem \PPerm)}\label{perm}

As already explained in \cref{prel:exp}, weighted permutation
can be reduced to sorting random variates of the form $-\ln(R)/w_i$,
where $R$ is a uniform random variate.  This is helpful because a lot is known
about parallel sorting.  The downside is that sorting may need
superlinear work in the worst case.  However, since we are sorting
\emph{random} numbers, we may still get linear expected work.
This is well-known when sorting uniform random variates; \eg
Ref.~\cite[Theorem~5.9]{MehSanPar}.  The idea is to do a linear-time mapping of the
random variates to a small number of buckets such that the occupancy of a
bucket has a binomial distribution with constant
expectation.  Then the buckets can be sorted using a comparison-based
algorithm without getting more than linear work in total.
Here we explain how to achieve the same
for the highly skewed distribution needed for \PPerm.

We use the monotonous transformation of the above geometric deviate based key mapping function to
$f(R,w_i)\Is n\ln(-\ln(R)n\wmax/w_i)$,
where $\wmax=\max_jw_j$.
Except for an expected constant number of items, the random number $R$ will be in the range
$[1/n,1-1/n]$. This corresponds to a range of
\[\Big[n\ln-\frac{\ln(1-1/n)n\wmax}{\wmax}  ,n\ln-\frac{\ln(1/n)n\wmax}{\wmin}\Big]\approx
\left[0, n\ln(nU\ln n)\right]
\]
for $f$ where
$\wmin=\min_jw_j$ and $U=\wmax/\wmin$.
Values outside this range will be mapped to keys $-1$ and  $\ceil{n\ln(nU\ln n)}$. The remaining items will be mapped to the integer key $\floor{f(R,w_i)}$.
We then perform integer sorting on the truncated keys and apply comparison-based sorting to the resulting buckets of items with the same integer keys.

\begin{theorem}\label{thm:perm}
  Problem \PPerm can be solved in the work and span needed for integer sorting of $n$ items
  with keys from the range $0..\Oh{n\ln(nU)}$.
\end{theorem}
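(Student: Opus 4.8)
The plan is to reduce \PPerm to integer sorting of the truncated keys $\kappa_i\Def\floor{f(r_i,w_i)}$ and then to resolve ties within each bucket of equal keys by a comparison-based sort, arguing that this clean-up costs only linear expected work and fits into the span of the integer sort. Correctness is immediate from monotonicity: since $f(r,w_i)=n\ln(n\wmax)+n\ln(v_i)$ with $v_i=-\ln(r)/w_i$ the exponential variate of \cref{prel:exp}, the map $v_i\mapsto f(r,w_i)$ is strictly increasing, so the order induced by $f$ is exactly the exponential-clocks order -- provided ties among equal integer keys are broken by the exact $f$ (equivalently $v_i$), which is precisely what the within-bucket sort does.

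Next I would confirm the key range. For $r\in[1/n,1-1/n]$ the value $f(r,w_i)$ lies in $[0,n\ln(nU\ln n)]$, and since $\ln\ln n\le\ln(nU)$ this is contained in $0..\Oh{n\ln(nU)}$. The expected number of items with $r_i\notin[1/n,1-1/n]$ is $\Oh{1}$ (as $\prob{r_i<1/n}=\prob{r_i>1-1/n}=1/n$); these are clamped to the two extra keys $-1$ and $\ceil{n\ln(nU\ln n)}$ and sorted among themselves at $\Oh{1}$ expected cost. Integer sorting the remaining keys is exactly the operation the theorem charges for.

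The heart of the argument, and the step I expect to be the main obstacle, is to show that despite the skew every bucket has $\Oh{1}$ expected occupancy, so the buckets behave like those in the uniform case. Writing $t_i\Def w_i/(n\wmax)$ and inverting the monotone $f$, the probability that item $i$ receives key $m$ is $p_i(m)=e^{-t_ie^{m/n}}-e^{-t_ie^{(m+1)/n}}$. Substituting $u\Def t_ie^{m/n}$ and using $1-e^{-x}\le x$, together with $e^{1/n}-1\le e^{1/n}/n\le e/n$ and $ue^{-u}\le 1/e$, gives
\[
  p_i(m)=e^{-u}\bigl(1-e^{-u(e^{1/n}-1)}\bigr)\le e^{-u}\,u\,(e^{1/n}-1)\le\frac1e\cdot\frac en=\frac1n .
\]
Hence each indicator ``item $i$ lands in bucket $m$'' has probability at most $1/n$, the occupancy $N_m$ is stochastically dominated by $\mathrm{Bin}(n,1/n)$, and, crucially, $\expect{N_m}=\sum_ip_i(m)\le n\cdot\frac1n=1$ uniformly over all buckets $m$.

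Finally I would assemble work and span. Comparison-sorting bucket $m$ costs $\Oh{N_m\log(N_m+1)}=\Oh{N_m^2}$, and independence of the indicators gives $\expect{N_m^2}\le\mu_m+\mu_m^2\le 2\mu_m$ for $\mu_m\Def\expect{N_m}\le 1$; summing over all buckets yields expected within-bucket work $\Oh{\sum_m\mu_m}=\Oh{n}$, which is dominated by the integer sort. All buckets are sorted in parallel, and each has size $\Oh{\log n}$ with high probability, so a comparison sort of span $\Oh{\log n}$ (\eg Cole's) keeps the total span within that of the integer sort. Combining the monotone reduction, the key-range bound, and this occupancy analysis establishes the claimed complexity.
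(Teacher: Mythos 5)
Your overall route is the paper's own: the same key function $f(r,w_i)=n\ln(-\ln(r)\,n\wmax/w_i)$, truncation of extreme values to two sentinel keys, integer sorting on $\floor{f}$, per-bucket comparison-based cleanup, and a uniform $\Oh{1/n}$ per-bucket hitting probability as the crux. Where you differ, you are actually more self-contained than the paper: you bound $p_i(m)=e^{-u}\bigl(1-e^{-u(e^{1/n}-1)}\bigr)\le ue^{-u}(e^{1/n}-1)\le 1/n$ by direct computation, uniformly in $w_i$, whereas the paper's proof reduces to the equal-weights case via a mixture-of-shifts argument and maximizes the density of $\ln(-\ln r)$ at $0$ (obtaining $1/(ne)$), and then cites the uniform-case analysis of \cite{MehSan08} for the $\Oh{n}$ cleanup cost; your explicit second-moment bound $\expect{N_m^2}\le\mu_m+\mu_m^2\le 2\mu_m$ replaces that citation. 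These are fine, arguably cleaner, substitutes.

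There is, however, one genuine flaw: you clamp the items whose uniform variate satisfies $r_i\notin[1/n,1-1/n]$, but the clamping must be done by the \emph{value} of $f$, not by $r$ (the paper writes ``values outside this range will be mapped to keys $-1$ and $\ceil{n\ln(nU\ln n)}$''). For skewed weights the two criteria disagree, and clamping by $r$ breaks correctness of the permutation. Concretely, take item $A$ with $r_A$ slightly below $1/n$ and $w_A=\wmax$, so $v_A\approx\ln(n)/\wmax$ and $f(r_A,w_A)\approx n\ln(n\ln n)$, well inside the key range; and item $B$ with $r_B=1/2$ and $w_B=\wmin$, so $v_B=\ln(2)/\wmin$. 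If $U>\ln(n)/\ln 2$ then $v_A<v_B$, so $A$ must precede $B$ in the output, yet your rule sends $A$ to the maximal sentinel key $\ceil{n\ln(nU\ln n)}$, after $B$ -- and sorting the clamped items ``among themselves'' cannot repair a misordering across buckets. The symmetric failure occurs at the bottom: $r_C>1-1/n$ with $w_C=\wmin$ gives $v_C\approx 1/(2n\wmin)$, which for $U>2n\ln 2$ exceeds $v_D=\ln(2)/\wmax$ of an in-range item $D$, yet $C$ is clamped to key $-1$ and sorts first. The fix is exactly the paper's: truncate precisely when $f(r_i,w_i)\notin[0,\,n\ln(nU\ln n)]$. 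This truncation is a monotone function of $v_i$ and hence order-consistent (every top-clamped item genuinely has larger $v_i$ than every in-range item, symmetrically at the bottom), and since an out-of-range $f$-value implies $r_i\notin[1/n,1-1/n]$, your $\Oh{1}$ expected-count bound and the remainder of your analysis go through unchanged.
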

\begin{proof}
  The main proof obligation is to show that buckets have constant
  expected occupancy regardless of the weight distribution. First note
  that $f$ can be written as $n(\ln(-\ln R)+\ln(n\wmax/w_i))$. The
  factor $n$ scales the bucket size.  The two terms multiplied
  with $n$ separate the influence of the uniform variate $R$ and of the
  weights.  This means that the weight term just shifts the
  distribution produced by the term $\ln(-\ln(R))$, \ie the overall
  distribution is a mix of shifted versions of the same distribution
  -- one for each weight value. This means that the maximum bucket occupancy
  is maximized if all weights are the same. Let us ignore the scaling
  factor $n$ and the shift for now and concentrate on the mapping
  $g(R)=\ln(-\ln R)$. The value of $R$ needed to produce a value $x$
  is $e^{-e^x}$.  The value of $R$ needed to produce a value
  $x+\delta$ is $e^{-e^{x+\delta}}$. Hence, the probability to produce
  a value in $[x,x+\delta]$ is $e^{-e^x}-e^{-e^{x+\delta}}$. In other
  words, the probability density is maximized at $x$ when the
  derivative of $e^{-e^x}$ is minimized. Using calculus, it can be
  shown that this is the case at $x=0$.  Hence, the probability that
  an item is mapped to a bucket is bounded by the width of the bucket
  times $e^{-e^0}=1/e$. Taking the scaling factor $n$ into account
  now, we see that the bucket width is $1/n$, yielding a probability
  of $\leq 1/(ne)$ for any item to be mapped to any particular bucket.
  Hence, the occupancy of any bucket is bounded by a binomial
  distribution with success probability $\leq 1/(ne)$ and $n$ trials,
  \ie with an expected occupancy of $1/e$.
  The proof of Ref. \cite{MehSanPar} for the uniform case transfers for
  the cost of sorting the buckets with overall work $\Oh{n}$.  The
  span for that part is $\Oh{\log n}$ with high probability. This can
  easily be shown using Chernoff bounds.
\end{proof}

Depending on $U$ and the machine model, this result can
yield very efficient algorithms.  If span $n^{\epsilon}$ for
some constant $\epsilon$ is acceptable, we get linear work if $\log U$
is polynomial in $n$ using radix sort.  If $U$ itself is polynomial in $n$ we get
logarithmic span on a CRCW PRAM \cite{raja1989sort}.

\section{Poisson Sampling (Problem \PSubset)}\label{subset}

Poisson sampling, also known as subset sampling, is a generalization of Bernoulli sampling to the weighted case.
The unweighted case can be solved
in  linear expected time with regard to the output size by computing the geometrically distributed
distances between items in the sample \cite{AhrDie85}.  The na\"ive algorithm for the weighted problem, which consists of throwing a biased coin
for each item, requires $\Oh{n}$ time.  \citet{bringmann2017subset} show that
this is optimal if only a single subset is desired, and present a sequential
algorithm that is also optimal for multiple queries.

The difference between \PSubset on the one hand and \POne/\PWith on the other hand is that
we do not have a fixed sample size but rather treat the item weights as
independent inclusion probabilities in the sample (this requires $w_i\leq 1$ for
all $i$). Hence, different algorithms are required.
Observe that the expected sample size is $W \leq n$.  Then our goal is to devise
a parallel preprocessing algorithm with work $\Oh{n}$ which subsequently permits
sampling with work $\Oh{1+W}$.

We now parallelize the approach of Ref.~\cite{bringmann2017subset}.
Similar to our algorithm for sampling with replacement, this algorithm
is based on
grouping items into sets with similar weight. In each group, one can
use ordinary Bernoulli sampling in connection with rejection
sampling. Load balanced division between PEs can be done with a prefix
sum calculation over the weights in each group.

\begin{theorem}\label{thm:subset}
  Preprocessing for problem \PSubset can be done with work $\Oh{n}$
  and span $\Oh{\log n}$.  A query can then be implemented with expected
  $\Oh{W+1}$ \mbox{work and $\Oh{\log n}$ span}.
\end{theorem}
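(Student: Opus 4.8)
The plan is to parallelize the grouping-plus-rejection approach of Bringmann \etal~\cite{bringmann2017subset}. First I would sort the items into groups by weight: item $i$ with $w_i\in[2^{-(t+1)},2^{-t})$ goes into group $G_t$. Since all weights satisfy $w_i\leq 1$, the group index is $\floor{-\log w_i}=\floor{\log(1/w_i)}$, a nonnegative integer. The number of distinct groups that matter is $\Oh{\log n}$: groups with items of weight below roughly $1/n$ (\ie $t\gtrsim\log n$) together contribute negligible expected inclusion mass and can be handled by a single coarse ``tail'' treatment, or more precisely by the geometric-skip trick used for the unweighted case \cite{AhrDie85}, so only $\Oh{\log n}$ keys ever arise. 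This is exactly the regime the introduction highlights: integer sorting with $\Oh{\log n}$ keys admits linear work and logarithmic span, even deterministically on a CREW PRAM \cite[Lemma~3.1]{raja1989sort}. Computing prefix sums of the group weights (for later load balancing) is also linear work and logarithmic span \cite{Ble89}. This establishes the preprocessing bounds $\Oh{n}$ work and $\Oh{\log n}$ span.

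For a query, the key observation is that within a single group $G_t$ every item has inclusion probability in $[2^{-(t+1)},2^{-t})$, so these probabilities differ by at most a factor of two. I would therefore sample from $G_t$ by first drawing a candidate subset as if every item had the common upper probability $p_t\Def 2^{-t}$ --- this is ordinary (unweighted) Bernoulli sampling at rate $p_t$, which can be done output-sensitively in expected time linear in the number of candidates by generating geometrically distributed gaps \cite{AhrDie85}. Each drawn candidate $i$ is then accepted with the correction probability $w_i/p_t\in[1/2,1)$ (rejection sampling), so that its overall inclusion probability is exactly $p_t\cdot(w_i/p_t)=w_i$, as required. Because the acceptance probability is at least $1/2$, the expected number of candidates is at most twice the expected number of accepted items, so the total expected work across all groups is $\Oh{1+\sum_t p_t|G_t|}=\Oh{1+\sum_i w_i}=\Oh{1+W}$. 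The additive $+1$ absorbs the $\Oh{\log n}$ empty or candidate-free groups that must still be visited. Load balancing across PEs uses the precomputed per-group prefix sums: each PE is assigned a contiguous range of expected work $\Oh{(1+W)/p}$, and within its range it generates its geometric skips independently, giving span $\Oh{\log n}$ for the coordination (the prefix-sum lookups and tree descent) while the per-PE work stays proportional to its share.

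The main obstacle I anticipate is making the work bound genuinely output-sensitive rather than $\Oh{n}$ in the worst case, and this is precisely where the factor-two structure of the groups is essential. A direct per-item coin toss costs $\Oh{n}$; the savings come from \emph{never materializing} items that are not even candidates, which requires that the within-group sampling rate $p_t$ be a single value so that the geometric-gap generator applies unchanged. The care needed is in bounding the contribution of the very light groups (those with $w_i\ll 1/n$): one must argue that the expected total number of candidates they produce is $\Oh{1}$ in expectation (or fold them into a single tail group), so that the $\Oh{\log n}$ group count does not blow up the work. Chernoff bounds then give that the per-PE work concentrates around $\Oh{(1+W)/p}$ with high probability, and the $\Oh{\log n}$ span follows from the depth of the prefix-sum/assignment tree. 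The remaining details (ensuring the rejection correction $w_i/p_t$ is computed in constant time per candidate, and that different PEs draw independent randomness for their assigned gaps) are routine.
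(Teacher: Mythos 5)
Your proposal matches the paper's proof essentially step for step: the same power-of-two grouping with a single tail bucket for weights below $2^{-\ceil{\log n}}$, the same appeal to integer sorting with $\Oh{\log n}$ keys \cite[Lemma~3.1]{raja1989sort}, the same prefix sum over inclusion probabilities to assign contiguous item ranges of weight $W/p$ to PEs, and the same query via geometric-skip Bernoulli sampling at the group's maximal rate followed by rejection with acceptance probability at least $1/2$, with the tail bucket contributing only $\Oh{1}$ expected candidates and Chernoff bounds giving the $\Oh{\log n}$ span term. The only cosmetic difference is that you omit the paper's optional final prefix-sum step that compacts the per-PE output arrays into one contiguous result.
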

\begin{proof}
  Approximately sort the items into $L+1$ buckets with $L \Def \ceil{\log n}$,
  where bucket $i$ is
  $B_i \Def \setsmall{j \mid 2^{-i} \geq w_j \geq 2^{-(i+1)}}$ for $i \in 0..L-1$ and
  $B_L \Def \setsmall{j \mid 2^{-L} \geq w_j}$ contains all sufficiently improbable
  items.  This can be done with linear work and logarithmic span on a CREW
  PRAM \cite[Lemma~3.1]{raja1989sort}.  Let $\sigma$ denote a permutation of
  the items implied by this re-ordering.

  Next, precompute an assignment of consecutive ranges of permuted items to PEs.
  Bucket boundaries can be ignored since there are no dependencies between items.
  Thus, we compute a prefix sum
  $\hat{w}_i \Def \sum_{j\leq i}{w_{\sigma(j)}}$ over the inclusion
  probabilities in their new order.  PE~$i$ then handles the items whose
  $\hat w_j$ fall into the range $[(i-1)W/p,\ i\cdot W/p)$.  These can be found
  with linear work and constant span by checking for every neighboring pair
  of items whether a boundary falls between them, and if so, which PEs'.

  Sampling then proceeds on all buckets in parallel using the
  assignment calculated during preprocessing.  In bucket $i$, all
  items have weight at most
  $\overline{w}_i\Def \max_{j \in B_i}{w_j}\leq 2^{-i}$,
  and, with the exception of bucket $L$, at least
  $\overline{w}_i/2$.  PEs generate geometrically distributed skip
  values $v \Def \floor{\ln(\rand)/\ln(1-\overline{w}_i)}$
  and then consider the $j \Def v / \overline{w}_i$-th item.  The algorithm then
  uses rejection to output the item with probability
  $w_j / \overline{w}_i$.  This process is repeated
  until a PE exceeds its allotted item range.
  In all buckets $i<L$, the acceptance probability is at least
  $1/2$ for every item, leading to an efficient algorithm in expectation.
  In bucket $L$, the smaller acceptance probability is
  not a problem, as with high probability only a constant number of items is ever considered
  to begin with.  This is because the probability of being sampled is at most
  $1/n$ for items in bucket $L$. In total, this requires
  work $\Oh{1+W}$ in expectation.

  Although all PEs have to do about the same expected amount of work,
  there are some random fluctuations between the actual amount of work
  depending on the actual exponential variates that are computed.
  Using Chernoff bounds once more, these can be bounded to
  $\Oh{W/p+\log n}$ with high probability, leading to a (conservative)
  $\log n$ term for the span.
  Each PE returns
  an array containing its part of the sample.
  An additional
  prefix sum computation can be used to rearrange the output into one contiguous array.
  This requires work $\Oh{n}$ and span $\Oh{\log n}$.
\end{proof}

\subsection{Distributed Poisson Sampling}\label{subset:dist}

For the distributed setting, observe that problem \PSubset can be
solved entirely independently over disjoint subsets of the input if we
are not interested in load balancing. No communication is needed.
We thus can directly adapt the result of Ref.~\cite{bringmann2017subset}.
Note that for a PE with $n_i$ items, it suffices to sort them into $\log n_i$ categories now which is possible in linear time.
A query on a PE with total weight $W_i$ will take expected time $\Oh{1+W_i}$. As expected parallel query time we get $\Oh{\max_iW_i+\log n}$
using an argument analogous to the proof of \cref{thm:subset}.

Once more, we analyze load
balancing for the case that the items are distributed randomly.

\begin{theorem}\label{thm:subsetD}
  When items are distributed randomly,
  Poisson sampling (problem \PSubset) can
  be done in a communication-free way with expected preprocessing overhead
  $\Oh{n/p+\log p}$ and expected sampling time $\Oh{W/p+\log p}$.
\end{theorem}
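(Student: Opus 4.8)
The plan is to run the shared-memory subset sampling algorithm of Theorem~\ref{thm:subset} purely \emph{locally} on each PE, applied only to the items that PE owns, and then to control the local costs via the randomness of the item-to-PE assignment. As already observed at the start of \cref{subset:dist}, Problem \PSubset decomposes perfectly across disjoint item sets because the inclusion events are independent: no global quantity (in particular not $W$) enters the local algorithm, so the entire computation is communication-free. It therefore remains only to bound the local preprocessing and query costs under a random allocation, where PE $i$ owns the item set $E_i$ with $n_i=\card{E_i}$ items and total weight $W_i=\sum_{j\in E_i}w_j$.

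For preprocessing, PE $i$ sorts its $n_i$ items into $\Oh{\log n_i}\subseteq\Oh{\log n}$ weight buckets and computes the local prefix sums required by Theorem~\ref{thm:subset}, all in time $\Oh{n_i}$. Since items are assigned to PEs independently and uniformly, $n_i$ is a sum of indicator variables with $\expect{n_i}=n/p$, and a standard balls-into-bins Chernoff argument gives $n_i=\Oh{n/p+\log p}$ with high probability. Hence the expected preprocessing overhead is $\Oh{n/p+\log p}$.

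For the query, the local sampling time on PE $i$ has expectation $\Oh{1+W_i}$ — exactly the single-PE bound noted in \cref{subset:dist}, which is itself the sequential instance of Theorem~\ref{thm:subset}. The heart of the argument is to show that $\max_i W_i=\Oh{W/p+\log p}$ with high probability. The crucial structural fact is that every weight satisfies $w_j\le 1$, so $W_i$ is a sum of independent $[0,1]$-bounded contributions with $\expect{W_i}=W/p$, and Chernoff bounds for bounded summands yield the desired deviation bound; a union bound over the $p$ PEs requires per-PE failure probability $p^{-\Om{1}}$, i.e.\ slack $\Oh{\log p}$, which is absorbed into the additive $\log p$ term. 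Combining this with the concentration of the local sampling time about its mean $\Oh{1+W_i}$ — the same Chernoff argument as in the proof of Theorem~\ref{thm:subset}, now applied per PE and union-bounded over the $p$ PEs — bounds the expected parallel query time $\expect{\max_i T_i}$ by $\Oh{W/p+\log p}$. Note that the logarithmic term is $\log p$ rather than $\log n$ precisely because each PE processes only its $\Oh{n/p+\log p}$ local items and the maximum is taken over $p$ PEs.

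The step I expect to be the main obstacle is exactly the weight-balancing bound $\max_i W_i=\Oh{W/p+\log p}$. If weights were unbounded this would fail — a single dominant item landing on one PE would destroy load balance — so the argument genuinely relies on $w_j\le 1$ and on correctly combining the two layers of randomness (the random allocation that fixes each $W_i$, and the sampling process that concentrates about $\Oh{1+W_i}$). Ensuring that the union bound over the $p$ PEs contributes only an \emph{additive} $\log p$, rather than multiplying the leading $W/p$ term, is the delicate point; everything else follows the template already established for Theorem~\ref{thm:subset}.
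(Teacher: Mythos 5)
Your proposal is correct and follows the same overall template as the paper -- purely local execution exploiting the independence of inclusion events, Chernoff bounds on the per-PE item counts for preprocessing, and a maximum-weighted-load bound for the query -- but it differs in the key step establishing $\max_i W_i = \Oh{W/p+\log p}$. You prove this directly: $W_i$ is a sum of independent $[0,1]$-bounded contributions with mean $W/p$ (using $w_j\leq 1$), so a Chernoff--Hoeffding bound for bounded summands plus a union bound over the $p$ PEs gives the claim with an additive $\Oh{\log p}$ slack. The paper instead invokes an extremal (majorization) result from \cite{San96ahab}: among all weight vectors with $w_j\leq 1$ and total weight $W$, the \emph{expected maximum} load under random allocation is maximized when the weight is concentrated in $W$ unit-weight items, which reduces the problem to classical unweighted balls-into-bins. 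Your route is more self-contained and elementary, and it makes explicit exactly where $w_j\leq 1$ enters; the paper's route is a one-liner given the citation and, notably, directly yields a bound on the \emph{expectation} of the maximum, which is what the theorem asserts. That is the one small step you gloss over: your argument produces a with-high-probability bound, and converting $\expect{\max_i T_i}$ into $\Oh{W/p+\log p}$ still requires integrating the tail -- routine here, since $W_i\leq W$ deterministically and the failure probability can be driven to $p^{-c}$ for any constant $c$, so the tail contributes $\Oh{W p^{-c}}=\Oh{W/p}$, but it should be said. Your second concentration layer (the sampling time $T_i$ about its mean $\Oh{1+W_i}$, union-bounded over PEs) matches what the paper's proof of Theorem~\ref{thm:subset} does with its $\Oh{W/p+\log n}$ span bound, correctly adapted to a $\log p$ term since the maximum now ranges over $p$ PEs rather than requiring a polynomially small failure probability in $n$.
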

\begin{proof}
  The bound for the preprocessing follows by applying balls-into-bins bounds to
  the distribution of the number of items as in the proof of \cref{thm:pwithDR}.
  The query bound follows similarly by exploiting that the expected
  maximum load is maximized when all the weight is concentrated in $W$ items of
  weight 1 \cite{San96ahab}.
\end{proof}

\section{Sampling with a Reservoir}\label{res}

We adapt the streaming algorithm of \citet{efraimidis2006reservoir} to a
distributed mini-batch streaming model (also known as \emph{discretized
  streams}), where PEs process variable-size mini-batches of items one at a
time.  After processing a mini-batch, the task is to update the sample to be a
uniform (or weighted) random sample without replacement of size $\min(k, n')$ of
all $n'$ items seen so far, up to and including the items of the current batch.
The PEs' memory is too small to store previous mini-batches, only the current
mini-batch is available in memory.  This is a generalization of the traditional
data stream model and widely used in practice, \eg in Apache Spark Streaming
\cite{sparkstream2013}, where it is called \emph{discretized streams}.  The
basic idea is to keep the reservoir, \ie the data structure used to maintain the
sample, in a distributed priority queue \cite{HubSan16topk}.

\subsection{Mini-Batch Model}
Our model is a batched view on streaming algorithms.  Items arrive as a series
of \emph{mini-batches} (or \emph{batches} for short) on small time
intervals.  For example, each mini-batch
could be defined as the set of all items that arrived within a certain time
window (\eg in Apache Spark Streaming~\cite{sparkstream2013}).  Because memory
is limited, only the current mini-batch is available in memory at each point in
time.  This is a generalization of other models of streaming algorithms: on a
sequential underlying machine with batch size 1, we obtain the sequential
streaming model (see, \eg \cite{TT19parstream}).  In a distributed model with
$p$ sites (nodes) which exchange fixed-size messages with a coordinator, batch
size~1 yields the distributed streaming model, also known as the
\emph{continuous distributed monitoring} model~\cite{cormode2013monitoring}.  In
this paper, we use the distributed message-passing model described in
\cref{ss:models}.

Unless explicitly specified, we make no assumptions about the distribution
of mini-batch sizes or items across PEs or over time.
In the algorithm analysis, we denote by $b$ the maximum number of items in
the current batch at any PE, and by $B$ the sum of all PE's current batch sizes.
Thus, an algorithm for the mini-batch model can correctly handle arbitrarily
imbalanced inputs; however, load balance may
suffer if the number of items per PE differs widely.

\subsection{Weighted Reservoir Sampling}\label{res:alg}

The basic idea of our algorithm is to combine the exponential clocks method,
associating with each item an exponentially distributed key (\cf
\cref{prel:exp}), with a communication-efficient bulk priority queue to maintain
the set of the $k$ items with the smallest keys, \ie the sample.  Each PE is
solely responsible for the items that were seen in its input, and no PE gets a
special role (such as a coordinator node).  During each batch, a PE inserts into
its local reservoir all items whose key is smaller than the largest key of any
item in the sample (the global \emph{threshold}).  When a batch finishes, the
PEs perform a distributed selection for the $k$-th smallest key, which becomes
the new global threshold, and discard all items with larger keys.  The remaining
items form the new sample.  During a mini-batch, the threshold remains
unchanged.

First, we show how to adapt sampling with skip values (exponential jumps)
\cite[Section 4]{efraimidis2006reservoir} to the exponential variates described
in \cref{prel:exp}.  This allows for faster and numerically more efficient
generation in practice. The difference
is a simple $x \mapsto -\ln(x)$
mapping.  Because of the sign inversion, the reservoir $R$ now contains the
items with the \emph{smallest} associated keys.  Let $v_i$ denote the key of
item $i$, that is, the exponentially distributed variate associated with it, and
define $T \Def \max_{i \in R}v_i$ as the threshold value, \ie the largest key of
any item in the reservoir.  Then the amount of weight to be skipped is
exponentially distributed with rate $T$, which can be computed as
$X \Def -\ln(\rand)/T$, where $\rand$ is uniformly random in $(0,1]$.  The key
associated with the newly sampled item $j$ is then
$v_j \Def -\ln(\rand[e^{-Tw_j},1])/w_j$, with $\rand[a,b]\Def a+\rand(b-a)$.
The range of this variate has been chosen so that $v_j$ is less than $T$,
as it has already been determined that item $j$ must be part of the
reservoir.  In the sequential setting, item $j$ then replaces
the item with the largest key in the reservoir, and the threshold $T$ is updated
to the now-largest key.

We now flesh out the remaining parts of our algorithm.  Firstly, the
reservoir is maintained in a distributed fashion.  Each PE's
\emph{local reservoir} is a B+ tree that is augmented to also support
the \emph{split}, \emph{rank}, and \emph{select} operations in
logarithmic time (see, \eg \cite[Sections 7.3.2 and
7.5.2]{MehSanPar}).
Operation \emph{split$(T,x)$} outputs two trees
$T_{\leq}$ and $T_{>}$ such that $T_{\leq}$ contains
$\setGilt{e\in T}{e\leq x}$ and $T_{>}$ contains $T\setminus T_{\leq}$,
function \emph{rank}$(x)$ computes $\card{\setGilt{e\in T}{e\leq x}}$, and
function \emph{select}$(k)$ finds the element with rank $k$ in $T$.

The \emph{split} operation is used to quickly discard the
items that are no longer part of the sample at the end of a batch, while
\emph{rank} and \emph{select} are required for the threshold selection process.

Secondly, the algorithm \emph{globally} maintains the threshold $T$, which is
the same at all PEs and does not change during a mini-batch.  The PEs process
their items using the skip distance method described above, inserting the new
candidate sample items into their \emph{local} reservoirs.

Lastly, once all items of the mini-batch are processed, the PEs jointly select
the globally $k$-th smallest key (see \cref{prel:sel}) in the union of all local
reservoirs.  This key becomes the insertion threshold for the next batch.  Each
PE then discards all items with larger keys using a \emph{split} operation on its local reservoir.  The
remaining items in the union of all local reservoirs are a weighted sample
without replacement of size $k$ of all items seen so far.  Algorithm \ref{alg:w}
gives pseudocode. %

\begin{algorithm2e}[bt]
  \caption{Pseudocode for weighted reservoir sampling.}
  \label{alg:w}
  \KwIn{$A$, the local portion of the mini-batch consisting of items with weight
    $w\in\rplus$ and index $i\in\nat$; $T$, the previous mini-batch's threshold
    (initially $-\infty$); $R$, the local reservoir (initially empty), a B+ tree
    mapping keys from $\real$ to item IDs}
  \KwOut{$T'$, the updated threshold; $R'$, the updated local reservoir}
  \Function{\FuncSty{processBatch}$(A, T, R)$}{
    \If(\Commenti{fewer than $k$ items seen globally before this batch}){$T < 0$}{
      \ForEach{$(w,i) \in A$}{
        $R.\FuncSty{insert}(-\ln(\rand)/w, i)$ \Comment{exponentially distributed keys}
      }
    }\Else{
      $j \Def 0$ \Comment{1-based index of next item, initially invalid}
      \While{$j\leq\card{A}$}{
        $X \Def -\ln(\rand)/T$ \Comment{weight to be skipped}
        \While(\Commenti{skip $X$ amount of weight in total}){$X>0$}{
          $j \Def j+1$\;
          \lIf{$j > \card{A}$}{\Break from both loops}\label{c:break}
          $X \Def X-A[j].w$\;
        }
        $x \Def \exp(-T \cdot A[j].w)$\;
        $v \Def -\ln(\rand[x,1])/A[j].w$ \Comment{new key}
        $R.\FuncSty{insert}(v, A[j].i)$\;
      }
    }

    $(T',i) \Def \FuncSty{select}(R, k) \in \real \times \nat$
    \Comment{select $k$ globally smallest and new threshold}

    $(R', \_) \Def R.\FuncSty{splitAt}(i)$ \Comment{discard local items with larger keys}
    \Return $(T', R')$ \Comment{return new threshold and reservoir}
  }
\end{algorithm2e}

\begin{theorem}\label{thm:res}
  For weighted reservoir sampling with sample size $k$, processing a mini-batch
  of up to $b$ items per PE is possible in time $\Oh{b + (b^*+1)\log(b^*+k) + \Tsel}$,
  where $b^*\leq b$ is the maximum number of items from the mini-batch below the
  insertion threshold on any PE, and $\Tsel$ is the time for selection from
  sorted sequences of size at most $b^*+k$ per PE (see \cref{prel:sel}).
\end{theorem}
\begin{proof}
  By definition of $b^*$, the local insertions require time
  $\Oh{b^*\log(b^*+k)}$ in total because each local reservoir has size at most
  $k$ at the start of the batch.  Since we have to process each item's weight
  even when using skip distances, $\Oh{b}$ time is required to identify the
  items to be inserted into the reservoir.  The selection operation takes time
  $\Tsel$ which varies depending on the specifics of the input.  The number of
  candidate items per PE for the selection is clearly bounded by the local
  reservoir size of at most $k+b^*$.  The split operation to discard the items
  with keys exceeding the new threshold takes time logarithmic therein, \ie
  $\Oh{\log(k+b^*)}$.

  Now consider the implications of splitting the single stream of the sequential
  case into multiple independently-handled streams without carrying over any
  remaining skip weight $X$ at the end of the stream (line \ref{c:break}
  in \cref{alg:w}).  This maintains correctness because the process is designed
  so that each unit of weight has the same probability of spawning a sample,
  regardless of when the procedure was started.  Thus, partitioning the
  stream does not affect correctness.

  Keeping the threshold fixed for the duration of each mini-batch is correct because the set of all items above \emph{any} threshold always
  forms a weighted random sample (whose size is not known a priori).  Here, the
  threshold is chosen so that the sample size is guaranteed to be at least $k$,
  and the selection process determines a new threshold to restore a fixed sample
  size of $k$ items.
\end{proof}

\begin{theorem}\label{thm:load}
  If all items' weights are independently drawn from the same continuous
  distribution and all batches have the same number of items on every PE, then
  our algorithm inserts no more than
  $\Ohsmash{\frac kp \log\frac{n}{k} + \log p}$ items into \emph{any} local
  reservoir in expectation.
\end{theorem}

We first show a lemma on the number of insertions into
\emph{each} PE's local reservoir.

\begin{lemma}\label{lem:load}
  If the item weights are %
  independently drawn from a common
  continuous distribution %
  and all batches have the same number of items on every PE,
  then our algorithm inserts
  $\Ohsmall{\frac kp(1+\log\frac{n}{k})}$ items into each local reservoir in
  expectation.
\end{lemma}
\begin{proof}
  Efraimidis and Spirakis~\cite[Proposition 7]{efraimidis2006reservoir} show that if
  the $w_i$ are independent random variates from a common continuous
  distribution, their sequential reservoir sampling algorithm inserts
  $\Oh{k \log(n/k)}$ items into the reservoir in expectation.  We adapt this to
  mini-batches of $b$ items per PE. Let $X_i$ denote the number of insertions on
  a PE for batch~$i$.  We obtain a binomially distributed random variable with
  expectation
  \[
    \expect{X_i} = \sum_{j=1}^{b}{\prob{\text{item $j$
          is inserted}}} = b\cdot\frac{k}{n_\mathrm{pre}} \leq b\cdot\frac{k}{ipb}=\frac{k}{ip},
  \]
  where $n_\mathrm{pre}$ is the number of items seen globally before the batch
  began.  For the initial $i_0=\frac{k}{bp}$ iterations, this probability
  exceeds one, and we account for this with $b$ insertions per PE, \ie
  $b\cdot\frac{k}{bp}=k/p$ overall.  For mini-batches $i_0 \leq i < \frac{n}{pb}$ we
  obtain
  \begin{align*}
    \expect{\sum X_i} & \leq \! \sum_{\frac{k}{bp}\leq i\leq \frac{n}{bp}}{\frac{k}{ip}}
                        = \frac{k}{p} \! \sum_{\frac{k}{bp}\leq i\leq \frac{n}{bp}}{\frac1i}
                        = \frac{k}{p}\left(H_{\frac{n}{bp}}-H_{\frac{k}{bp}}\right)\\
                      & \leq \frac{k}{p}\left(1+\ln\frac{n}{bp}-\ln\frac{k}{bp}\right)
                        = \frac kp\left(1+\ln\frac{n}{k}\right),
  \end{align*}
  where $H_n$ is the $n$-th harmonic number.
\end{proof}

We can now use this lemma to prove the theorem.

\begin{proof}[Proof (\cref{thm:load})]
To obtain the maximum load over all PEs, we apply a normal approximation to the
bound on the $X_i$ from the proof of Lemma~\ref{lem:load}, obtaining
$Y_i \sim \mathcal{N}\left(\frac{k}{ip},
  \frac{k}{ip}\left(1-\frac{k}{ipb}\right)\right)$.  Summing these over the
mini-batches as above, we again obtain a normal distribution whose mean and
variance are the sum of its summands' means and variances.  We then apply a
bound on the maximum of i.i.d normal random variables obtained using the
Cramér-Chernoff method \cite[Chapter 2.5]{BLM13concineq},
$\expect{\max_{j=1..p}Z_j} \leq \mu + \sigma\sqrt{2\ln p}$ for
$Z_j \sim \mathcal{N}(\mu, \sigma^2)$.  Using the mean of
the $Y_i$ as an upper bound to their variance, we obtain
$\mu + \sqrt{2\mu\ln p}$ as an upper bound to the maximum per-PE load for
$\mu = \frac kp (1+\ln\frac nk)$.  Thus, the expected \emph{bottleneck} number
of insertions into any local reservoir is
$\Ohsmall{\frac kp \log \frac nk + \log p}$.
\end{proof}

\section{Experiments}\label{exp}
We now report experiments on alias tables (problem~\POne, \cref{s:alias}) and
the closely related problem of sampling with replacement
(problem~\PWith, Section~\ref{kwith}) as well as reservoir sampling (problem~\PRes, Section~\ref{res}).
We concentrate on these problems because our algorithms for problems~\PWithout, \PPerm, and \PSubset
are basically reductions to sorting and selection problems that
have been studied elsewhere (\PWithout also needs \PWith, which
we do study here).  The algorithm for problem \PPerm is quite simple and not
that different from the unweighted case studied in \citet{SandersLHSD18}.
We are also not aware of competing parallel approaches that we could compare to.

Our experiments are split into two parts.  First, we consider problems \POne and
\PWith in a shared-memory parallel setting.  Thereafter, we report distributed
experiments on \PRes.

\paragraph*{Experimental Platform}\label{exp:hw}%
We use machines with Intel and AMD processors in our experiments.  The Intel
machine has four Xeon Gold 6138 CPUs ($4 \times 20$ cores, 160 hyper-threads, of
which we use up to 158 to minimize the influence of system tasks on our
measurements) and 768\,GiB of DDR4-2666 main memory.  The AMD machine is a
single-socket system with a 32-core AMD EPYC 7551P CPU (64 hyper-threads, of
which we use up to 62) and 256\,GiB of DDR4-2666 RAM.  While single-socket, this
machine also has non-uniform memory access (NUMA) characteristics, as the CPU
consists of four dies (NUMA nodes) internally, with part of the memory attached to each die.
Both machines run Ubuntu 18.04.  The reservoir sampling experiments were
conducted on up to 256 nodes of ForHLR~II, an HPC cluster with two 10-core Intel
Xeon E5-2660~v3 CPUs per node.  Each core is one PE (20 PEs per node) and
communication uses MPI (OpenMPI 4.0).  All implementations are in C\texttt{++}
and compiled with GNU \texttt{g++} 9.2.0 (flags \texttt{-O3 -flto
  -march=native}).
Our measurements do not include time spent on memory allocation and mapping
where sizes are known in advance.

\paragraph*{Implementation Details}\label{exp:impl}%
We implemented alias table construction using our parallel splitting algorithm
(\psa, Algorithm~\ref{alg:psa}, as well as \psag, the optimization described in
\cref{se:opt}).  For \PWith, we implemented the distributed output-sensitive
sampling algorithm (\osens, \cref{kwith:dist}) in shared memory using a
sequential implementation of the algorithm of \cref{thm:swr} as base method.
This approach should help on
NUMA machines since it has better memory access locality.
\osnd is a variant of \osens described below.
We also
implemented the two-level alias table construction algorithm in shared memory (\twolvl, \cref{dist}, \cref{dist:replicated})
and sequential alias table using both Vose's method (Algorithm~\ref{alg:vose})
and our sweeping construction (Algorithm~\ref{alg:simpleSweep}).  The \twolvl
algorithm can use either of these as its base case.  When using Vose's method,
we refer to it as \twoclassic, and \twosweep when using our sweeping algorithm.
Our parallel
implementations distribute large arrays over the available NUMA nodes
and pin threads to NUMA nodes to maintain data locality.

For reservoir sampling, we implemented our algorithm of \cref{res} using the
approximate multisequence selection algorithm of \cite[Section
IV-C]{HubSan16topk}, using one (labeled \emph{``ours''}) or eight pivots
(labeled \emph{``ours-8''}) and exact bounds
($\kmin=\kmax=k$).\footnote{In this configuration, its asymptotical running time
  matches the non-approximate algorithm \cite[Section IV-B]{HubSan16topk}, but
  its pivot selection speeds up convergence in practice.} We compare it to a
centralized algorithm which uses the same thresholding procedure but gathers all
candidate items at a designated root PE, where it uses sequential selection to
maintain the sample, subsequently labeled \emph{gather}.
The local reservoirs are implemented as B+ trees, based on an implementation of
Bingmann~\cite{tlx} augmented with join/split operations
by Akhremtsev~\cite{akhremtsev2016tree} and rank/select support (see, \eg Ref.~\cite{MehSanPar}).

All pseudorandom numbers are generated with 64-bit Mersenne Twisters~\cite{MatNis98}, using the Intel
Math Kernel Library \cite{intel-mkl} on the Intel machine and the cluster, and
\texttt{dSFMT}\footnote{\url{http://www.math.sci.hiroshima-u.ac.jp/~m-mat/MT/SFMT/}, version 2.2.3}
on the AMD machine.

All of our implementations are publicly available as free software.  The code
and scripts for the first part of our experiments, problems \POne and \PWith,
can be found at \url{https://github.com/lorenzhs/wrs}, while the artifacts for
problem \PRes are available at \url{https://github.com/lorenzhs/reservoir}.

\paragraph*{Optimizations}\label{exp:opt}
For the output sensitive algorithm \osens, we use
an additional optimization that aborts the tree descent and uses the base case
bucket table when fewer than 128 samples are to be drawn
from at least half as many items.  The resulting items are then deduplicated
using a hash table to ensure that each item occurs only once in the output.
A variant without this deduplication is called \osnd and
can be useful for applications that allow the total multiplicity
of sampled items to be split over several
occurrences.

All alias table queries are implemented in a branchless manner.  To do so, we
store the bucket index and its alias in a temporary array $A$ of size two, and
use indexing with the conditional to return the correct item.  Using the
notation of \cref{prel:alias}, let $A[0]=r$ and $A[1]=a_r$.  The query then
returns $A$ indexed with whether the coin came up heads (0) or tails (1), which
uses the result of a comparison as an index into $A$ instead of a conditional
branch.  The result is an improvement of 20\,\% to 25\,\% in sequential query
times (measured for $n=10^8$ items).
In a parallel setting, this optimization yields a smaller improvement since memory access costs dominate.

\paragraph*{Input Data}
Unless specified otherwise, we use uniformly random weights from the interval
$(0,1]$ as inputs.  In our query experiments, we additionally use random
permutations of the weights $\set{1^{-s},2^{-s},\ldots,n^{-s}}$ for a parameter
$s$ to obtain more skewed ``power-law'' or Zipf-like distributions.  For
reservoir sampling, the weights were scaled to the interval $(0,100]$.

\subsection{Sequential Performance}\label{exp:seq}
Surprisingly, many common existing implementations of alias tables (\eg
\texttt{gsl\-\_ran\-\_discrete\-\_preproc} in the GNU Scientific Library (GSL)
\cite{GSL} or \texttt{sample} in the R project for statistical computing
\cite{Rman}) use a struct-of-arrays memory layout for the alias table data
structure.  By using an array of structs instead, we can
improve memory locality, incurring at most one instead of up to two cache misses per
query.  Combined with branchless choice inside the buckets and a faster
random number generator, our query is four times faster than that of
GSL version 2.5 (measured for $n=10^8$).  At the same time, alias table
construction using our implementation of Vose's method is 30\,\% faster than
GSL on the Intel machine and 44\,\% faster on the AMD machine.
Other popular statistics packages, such as NumPy (version 1.5.1, function
\texttt{np.choice}) or Octave (Statistics package version 1.4.0, function
\texttt{randsample}) employ algorithms with superlinear query time for \PWith.  We
therefore use our own implementation of Vose's algorithm as the baseline in our
evaluation.

Among our sequential implementations, construction with Vose's method is
20--25\,\% faster than our sweeping algorithm,\footnote{In the conference version
  of this paper, we reported a somewhat smaller gap.  We have since further
  optimized our implementation of Vose's method.  Specifically, the optimization
  to make queries branchless was previously implemented by storing both the
  index and the alias in the table, instead of using a temporary array of size
  two at query time.} with the gap being slightly
smaller on the AMD machine than on the Intel machine.
Including the time for
memory allocations narrows the gap by around two percentage points.  Thus, when
used purely sequentially, Vose's method remains the fastest, and we use it as
the sequential baseline for our speedup experiments in \cref{exp:cons}.

\def\basesi{1.14}
\def\basesibig{11.4}

\def\basesa{1.35}
\def\basesabig{13.5}

\def\basewi{114}

\def\basewa{136}

\subsection{Construction}\label{exp:cons}
\begin{figure}[p!]
\pgfplotsset{
  log basis x=2,
  log ticks with fixed point,
  legend pos=north west,
  xlabel={Number of threads $p$},
  width=7.4cm,
  height=6.0cm,
  cycle list name=cons,
  xmin=0.75,
}
\begin{subfigure}[t]{0.5\linewidth}
\begin{tikzpicture}
  \begin{semilogxaxis}[
    ylabel={Speedup},
    xmax=200,
    ymin=-3,
    ymax=45,
    xtick={1,2,4,8,16,32,64,128},
  ]
  \addplot coordinates { (1,0.967593) (2,1.91066) (4,3.73202) (8,7.30088) (12,10.5075) (16,13.4637) (24,17.2682) (32,19.9636) (40,20.2748) (52,24.0261) (64,26.7196) (80,26.5024) (104,28.5079) (128,29.982) (158,29.6385) };
  \addlegendentry{2lvl-classic};
  \addplot coordinates { (1,0.854823) (2,1.69423) (4,3.3423) (8,6.44313) (12,9.19046) (16,11.866) (24,16.7803) (32,20.9427) (40,22.6828) (52,23.1487) (64,27.8943) (80,29.7942) (104,33.8571) (128,38.9897) (158,39.0022) };
  \addlegendentry{2lvl-sweep};
  \addplot coordinates { (1,0.611324) (2,1.21174) (4,2.41496) (8,4.53411) (12,6.52944) (16,8.10064) (24,9.62773) (32,12.135) (40,13.8288) (52,18.3177) (64,21.3196) (80,22.0416) (104,26.1153) (128,28.808) (158,30.7472) };
  \addlegendentry{OS};
  \addplot coordinates { (1,0.560116) (2,1.05151) (4,1.96958) (8,3.5978) (12,3.82734) (16,4.95826) (24,6.62288) (32,7.87904) (40,8.80176) (52,9.68551) (64,10.0313) (80,10.0276) (104,10.6187) (128,10.6093) (158,9.46903) };
  \addlegendentry{PSA};
  \addplot coordinates { (1,0.864254) (2,1.68432) (4,3.27629) (8,5.92731) (12,8.26888) (16,10.1716) (24,10.6774) (32,13.607) (40,16.4039) (52,19.3844) (64,20.6767) (80,21.217) (104,23.5785) (128,24.292) (158,21.5195) };
  \addlegendentry{PSA+};

  \addplot+[mark=none, dashed, gray] coordinates {(80, -2) (80, 48)} node[right,pos=0.35] {HT};
  \addplot+[mark=none, dashed, black] coordinates {(0.65, 1) (200, 1)};

  \end{semilogxaxis}
\end{tikzpicture}
\caption{Strong scaling, Intel machine, $n=10^8$}\label{fig:strongintel}
\end{subfigure}
\hfill
\begin{subfigure}[t]{0.49\linewidth}
\begin{tikzpicture}
  \begin{semilogxaxis}[
    ylabel={},
    xmax=80,
    ymin=-2,
    ymax=32,
  ]
  \addplot coordinates { (1,1.02275) (2,2.03518) (4,4.01756) (6,5.88048) (8,7.74419) (12,9.92175) (16,10.9092) (20,11.838) (24,12.4425) (32,12.7044) (40,12.9839) (48,12.9648) (56,12.8986) (62,12.443) };
  \addlegendentry{2lvl-classic};
  \addplot coordinates { (1,0.752354) (2,1.50034) (4,2.97878) (6,4.4485) (8,5.867) (12,8.16534) (16,9.83203) (20,12.0977) (24,14.225) (32,18.1081) (40,20.6239) (48,23.7739) (56,26.3009) (62,27.2999) };
  \addlegendentry{2lvl-sweep};
  \addplot coordinates { (1,0.428983) (2,0.851082) (4,1.69902) (6,2.50934) (8,3.3237) (12,4.60698) (16,5.45725) (20,6.55722) (24,7.76595) (32,9.68526) (40,11.4063) (48,13.313) (56,14.3171) (62,15.0454) };
  \addlegendentry{OS};
  \addplot coordinates { (1,0.573779) (2,1.11683) (4,2.158) (6,2.94017) (8,3.86639) (12,4.74076) (16,5.46258) (20,6.00431) (24,6.30449) (32,6.24832) (40,6.18412) (48,6.31306) (56,6.30669) (62,6.07429) };
  \addlegendentry{PSA};
  \addplot coordinates { (1,0.726279) (2,1.4387) (4,2.83187) (6,2.26748) (8,5.52646) (12,7.44814) (16,9.3351) (20,11.4735) (24,13.5184) (32,17.1125) (40,19.6183) (48,22.3338) (56,24.5529) (62,24.6579) };
  \addlegendentry{PSA+};

  \addplot+[mark=none, dashed, gray] coordinates {(32, -2) (32, 39)} node[right,pos=0.44] {HT};
  \addplot+[mark=none, dashed, black] coordinates {(0.65, 1) (200, 1)};

  \end{semilogxaxis}
\end{tikzpicture}
\caption{Strong scaling, AMD machine, $n=10^8$}\label{fig:strongamd}
\end{subfigure}
\begin{subfigure}[t]{0.5\linewidth}
\vspace*{.88em}
\begin{tikzpicture}
  \begin{semilogxaxis}[
    ylabel={Speedup},
    xmax=200,
    ymin=-3,
    ymax=68,
    xtick={1,2,4,8,16,32,64,128},
  ]
  \addplot coordinates { (1,0.964468) (2,1.90834) (4,3.71209) (8,7.45221) (12,10.9244) (16,14.354) (24,18.8523) (32,22.6018) (40,25.1244) (52,27.9475) (64,29.4641) (80,29.9441) (104,30.1647) (128,31.1517) (158,30.747) };
  \addlegendentry{2lvl-classic};
  \addplot coordinates { (1,0.849894) (2,1.69799) (4,3.35658) (8,6.59044) (12,9.53164) (16,12.6116) (24,18.651) (32,24.1856) (40,28.9902) (52,33.5194) (64,38.8726) (80,43.2497) (104,47.0522) (128,54.0403) (158,60.8395) };
  \addlegendentry{2lvl-sweep};
  \addplot coordinates { (1,0.630765) (2,1.19323) (4,2.34903) (8,4.52272) (12,6.36461) (16,8.41488) (24,11.375) (32,14.1896) (40,15.7661) (52,17.6713) (64,19.3676) (80,19.6415) (104,20.1477) (128,20.7773) (158,20.0358) };
  \addlegendentry{OS};
  \addplot coordinates { (1,0.563327) (2,1.08051) (4,2.10175) (8,4.11682) (12,5.78637) (16,7.15121) (24,8.92957) (32,9.84743) (40,10.3996) (52,11.0196) (64,11.3398) (80,11.0754) (104,11.2782) (128,11.4974) (158,11.2901) };
  \addlegendentry{PSA};
  \addplot coordinates { (1,0.863596) (2,1.71431) (4,3.38762) (8,6.56797) (12,9.4928) (16,12.4954) (24,18.1501) (32,22.9259) (40,26.7022) (52,31.3237) (64,35.4575) (80,40.384) (104,45.0234) (128,50.2189) (158,54.9576) };
  \addlegendentry{PSA+};

  \addplot+[mark=none, dashed, gray] coordinates {(80, -3) (80, 70)} node[right,pos=0.54] {HT};
  \addplot+[mark=none, dashed, black] coordinates {(0.65, 1) (200, 1)};

  \end{semilogxaxis}
\end{tikzpicture}
\caption{Strong scaling, Intel machine, $n=10^9$}\label{fig:strongintelbig}
\end{subfigure}
\hfill
\begin{subfigure}[t]{0.49\linewidth}
\vspace*{.88em}
\begin{tikzpicture}
  \begin{semilogxaxis}[
    ylabel={},
    xmax=80,
    ymin=-2,
    ymax=35,
  ]
  \addplot coordinates { (1,1.02275) (2,2.03569) (4,4.08043) (6,5.99746) (8,7.95027) (12,10.1052) (16,11.1898) (20,12.0637) (24,12.6542) (32,12.527) (40,12.6205) (48,12.5907) (56,12.5549) (62,12.2963) };
  \addlegendentry{2lvl-classic};
  \addplot coordinates { (1,0.753549) (2,1.50328) (4,3.00876) (6,4.50497) (8,5.99781) (12,8.37671) (16,10.1688) (20,12.6174) (24,14.7623) (32,19.0669) (40,21.7058) (48,24.8704) (56,27.9819) (62,30.1386) };
  \addlegendentry{2lvl-sweep};
  \addplot coordinates { (1,0.428967) (2,0.84993) (4,1.69029) (6,2.43594) (8,3.25618) (12,4.39033) (16,5.27052) (20,6.22822) (24,7.08247) (32,8.69736) (40,9.83678) (48,10.5088) (56,10.9071) (62,10.975) };
  \addlegendentry{OS};
  \addplot coordinates { (1,0.577486) (2,1.1256) (4,2.22315) (6,3.03881) (8,3.93001) (12,4.89007) (16,5.51773) (20,6.07705) (24,6.30626) (32,6.20515) (40,6.12148) (48,6.29216) (56,6.30733) (62,6.13343) };
  \addlegendentry{PSA};
  \addplot coordinates { (1,0.729523) (2,1.45141) (4,2.89811) (6,2.3015) (8,5.78155) (12,8.05634) (16,9.81215) (20,12.1573) (24,14.3601) (32,18.4569) (40,21.4157) (48,24.6242) (56,27.6642) (62,28.9896) };
  \addlegendentry{PSA+};

  \addplot+[mark=none, dashed, gray] coordinates {(32, -2) (32, 42)} node[right,pos=0.4] {HT};
  \addplot+[mark=none, dashed, black] coordinates {(0.65, 1) (200, 1)};

  \end{semilogxaxis}
\end{tikzpicture}
\caption{Strong scaling, AMD machine, $n=10^9$}\label{fig:strongamdbig}
\end{subfigure}
\begin{subfigure}[b]{0.5\linewidth}
\vspace*{1em}
\begin{tikzpicture}
  \begin{semilogxaxis}[
    ylabel={Speedup},
    xmax=200,
    ymin=-3,
    ymax=68,
    xtick={1,2,4,8,16,32,64,128},
  ]
  \addplot coordinates { (1,0.96812) (2,1.84495) (4,3.6197) (8,7.16722) (12,10.6047) (16,13.8074) (24,18.3848) (32,22.0493) (40,24.7311) (52,27.5219) (64,29.2382) (80,29.9219) (104,30.0655) (128,31.0105) (158,30.6532) };
  \addlegendentry{2lvl-classic};
  \addplot coordinates { (1,0.851147) (2,1.63554) (4,3.25975) (8,6.35659) (12,9.25763) (16,12.1909) (24,18.0218) (32,23.2521) (40,28.0325) (52,33.0548) (64,38.2093) (80,42.2069) (104,46.8709) (128,53.7674) (158,60.8503) };
  \addlegendentry{2lvl-sweep};
  \addplot coordinates { (1,0.665333) (2,1.27612) (4,2.38325) (8,4.48003) (12,6.31625) (16,8.01749) (24,10.8078) (32,12.6502) (40,14.6798) (52,16.5532) (64,18.158) (80,19.1467) (104,20.1287) (128,20.9632) (158,21.4641) };
  \addlegendentry{OS};
  \addplot coordinates { (1,0.557225) (2,0.943105) (4,1.80666) (8,3.48984) (12,4.8667) (16,5.68857) (24,7.84522) (32,8.16234) (40,9.2092) (52,10.3204) (64,10.9284) (80,10.9336) (104,11.2361) (128,11.4811) (158,11.3232) };
  \addlegendentry{PSA};
  \addplot coordinates { (1,0.856819) (2,1.57088) (4,3.09261) (8,5.70201) (12,8.40062) (16,11.11) (24,15.8189) (32,19.1909) (40,23.5755) (52,27.4609) (64,31.4079) (80,38.9079) (104,45.1427) (128,51.8973) (158,56.8945) };
  \addlegendentry{PSA+};

  \addplot+[mark=none, dashed, gray] coordinates {(80, -3) (80, 75)} node[right,pos=0.5] {HT};
  \addplot+[mark=none, dashed, black] coordinates {(0.65, 1) (200, 1)};

  \end{semilogxaxis}
\end{tikzpicture}
\caption{Weak scaling, Intel machine, $n/p=10^7$}\label{fig:weakintel}
\end{subfigure}
\begin{subfigure}[b]{0.49\linewidth}
\vspace*{1em}
\begin{tikzpicture}
  \begin{semilogxaxis}[
    ylabel={},
    xmax=80,
    ymin=-2,
    ymax=35,
  ]
  \addplot coordinates { (1,1.02232) (2,2.01626) (4,3.97618) (6,5.84538) (8,7.70886) (12,9.99821) (16,11.1145) (20,12.1014) (24,12.8851) (32,12.7992) (40,13.0186) (48,12.9498) (56,12.9385) (62,12.5134) };
  \addlegendentry{2lvl-classic};
  \addplot coordinates { (1,0.754703) (2,1.50895) (4,2.96037) (6,4.42313) (8,5.87111) (12,8.1835) (16,10.0173) (20,12.4144) (24,14.7035) (32,19.1997) (40,21.8702) (48,24.9744) (56,28.0628) (62,30.1169) };
  \addlegendentry{2lvl-sweep};
  \addplot coordinates { (1,0.443326) (2,0.871623) (4,1.72103) (6,2.51472) (8,3.32081) (12,4.55991) (16,5.3756) (20,6.40245) (24,7.35861) (32,8.91064) (40,10.2612) (48,11.0648) (56,11.5304) (62,11.356) };
  \addlegendentry{OS};
  \addplot coordinates { (1,0.583875) (2,1.10788) (4,2.09721) (6,2.86325) (8,3.86183) (12,4.77007) (16,5.54058) (20,6.08879) (24,6.3732) (32,6.34424) (40,6.26037) (48,6.43734) (56,6.46017) (62,6.2464) };
  \addlegendentry{PSA};
  \addplot coordinates { (1,0.728502) (2,1.42246) (4,2.77022) (6,2.2425) (8,5.44639) (12,7.57959) (16,9.58029) (20,11.9511) (24,14.223) (32,18.2634) (40,21.4455) (48,24.6369) (56,27.5991) (62,28.6683) };
  \addlegendentry{PSA+};

  \addplot+[mark=none, dashed, gray] coordinates {(32, -2) (32, 44)} node[right,pos=0.4] {HT};
  \addplot+[mark=none, dashed, black] coordinates {(0.65, 1) (200, 1)};

  \end{semilogxaxis}
\end{tikzpicture}
\caption{Weak scaling, AMD machine, $n/p=10^7$}\label{fig:weakamd}
\end{subfigure}
\caption{Strong (top and middle) and weak (bottom) scaling evaluation of
  parallel alias table construction techniques. Strong scaling with input sizes
  $n=10^8$ (top) and $n=10^9$ (middle), weak scaling with $n/p=10^7$.  Speedups
  are measured relative to our optimized implementation of Vose's method
  (\cref{alg:vose}, \cref{prel:alias}).}
\label{fig:cons}
\end{figure}
Speedups compared to an optimized sequential implementation of Vose's alias
table construction algorithm are shown in \cref{fig:cons} (strong scaling with
fixed $n=10^8$ and $n=10^9$ as well as weak scaling with $n/p=10^7$ uniform random
variates).\footnote{As \cref{fig:cons} shows only speedup values, we
  here provide the running times of the sequential baseline with regard to
  which they are given.  These are, from (a) to (f), $\basesi\,s$, $\basesa\,s$,
  $\basesibig\,s$, $\basesabig\,s$, $\basewi\,ms$, and $\basewa\,ms$.}  %
Observe that for $n=10^8$, the per-thread input size is quite small
for the high thread counts achievable on the Intel machine, resulting in
somewhat noisy measurements (\cref{fig:strongintel}).  Note that \osnd is not
shown, as it only differs from \osens in the query phase, not during
construction.  Speedups do
not increase further once the machine's memory subsystem is saturated, limiting
the speedup that can be achieved with techniques that require multiple passes
over the data (\psa, \twoclassic).  Unlike \psa (and what \psag attempts to mitigate), \twolvl can be constructed
almost independently by the PEs and requires fewer passes over the data.
Sequentially, Vose's method is faster than our sweeping algorithm (see \cref{exp:seq}).
However, when used as base case of \twolvl, our algorithm scales much better to high thread counts
because it reduces the memory traffic and since hyper-threading (HT) helps to
hide the overhead of branch mispredictions.  For large per-thread inputs,
\twosweep achieves more than twice the speedup of \twoclassic.
On random inputs such as the ones used here, \psag can greedily process almost
the entire input, and thus achieves excellent speedups.
Preprocessing for \osens introduces some overhead but requires fewer passes over
memory than \psa and achieves approximately twice the speedups as a result.
Hyper-threading yields larger improvements for \osens construction on the AMD
machine than on the Intel machine.  We can also see that by processing each
group of each thread independently, \osens makes good use of the cache:
the large non-inclusive L3 caches of
the Intel machine gives it a boost for $n=10^8$ (\cref{fig:strongintel}).  Once
groups no longer fit into cache ($n=10^9$, \cref{fig:strongintelbig}), speedups
are somewhat lower.

In the weak scaling experiments (\cref{fig:weakintel,fig:weakamd}), we again see
clearly how \twoclassic and \psa are limited by memory access costs.  Using more
than two threads per available memory channel ($4\times6$ for the Intel machine,
8 for the AMD machine) yields nearly no additional benefit for these algorithms.
Meanwhile, \twosweep, \psag, and -- to a lesser extent -- \osens are not so much limited
by the available memory throughput as by the latency of memory accesses.  As a
result, they continue to scale well, even for the highest thread counts.

\def\basesu{215}
\def\basesp{220}
\def\basewu{21.3}
\def\basewp{21.4}

\begin{figure}[bt]
\pgfplotsset{
    log basis x=2,
    log ticks with fixed point,
    xlabel={Number of threads $p$},
    legend pos=north west,
    width=7.3cm, %
    height=6.2cm,
    xmin=0.75,
    xmax=200,
    xtick={1,2,4,8,16,32,64,128},
    cycle list name=query,
}
\begin{subfigure}[t]{0.5\linewidth}
\begin{tikzpicture}
  \begin{semilogxaxis}[
    ylabel={Speedup},
    ymin=-3,
    ymax=70,
  ]
  \addplot coordinates { (1,0.574025) (2,0.659624) (4,1.20684) (8,2.3535) (12,3.43796) (16,4.51309) (24,6.20992) (32,8.24555) (40,9.4572) (52,12.4872) (64,12.6376) (80,12.7824) (104,15.7699) (128,16.1047) (158,17.4098) };
  \addlegendentry{2lvl};
  \addplot coordinates { (1,0.414297) (2,0.821794) (4,1.5769) (8,2.98549) (12,4.26704) (16,5.42471) (24,6.57864) (32,8.45394) (40,10.3237) (52,13.0051) (64,14.8832) (80,15.3108) (104,17.3271) (128,19.4065) (158,18.4991) };
  \addlegendentry{OS};
  \addplot coordinates { (1,0.896728) (2,1.70751) (4,3.27685) (8,5.93828) (12,7.52067) (16,9.8265) (24,14.257) (32,18.7505) (40,22.2958) (52,27.8558) (64,30.7663) (80,29.2538) (104,33.1219) (128,36.2906) (158,35.6721) };
  \addlegendentry{OS-ND};
  \addplot coordinates { (1,1) (2,1.04577) (4,1.85955) (8,3.53556) (12,5.59581) (16,6.93401) (24,10.0623) (32,13.4704) (40,15.613) (52,18.7073) (64,20.1325) (80,19.4331) (104,24.7631) (128,25.837) (158,26.3432) };
  \addlegendentry{PSA};

  \addplot+[mark=none, dashed, gray] coordinates {(80, -3) (80, 70)} node[right,pos=0.7] {HT};
  \addplot+[mark=none, dashed, black] coordinates {(0.65, 1) (200, 1)};

  \end{semilogxaxis}
\end{tikzpicture}
\caption{Strong scaling, uniform input}\label{fig:qstronguni}
\end{subfigure}
\hfill
\begin{subfigure}[t]{0.477\linewidth}
\begin{tikzpicture}
  \begin{semilogxaxis}[
    ymin=-3,
    ymax=70,
  ]
  \addplot coordinates { (1,0.59113) (2,0.687315) (4,1.25008) (8,2.4168) (12,3.58068) (16,4.60086) (24,5.98706) (32,7.77616) (40,10.3635) (52,12.4172) (64,12.923) (80,13.1215) (104,16.3219) (128,16.6687) (158,17.8424) };
  \addlegendentry{2lvl};
  \addplot coordinates { (1,1.03944) (2,1.98146) (4,3.77392) (8,6.88611) (12,7.76013) (16,10.0467) (24,14.5243) (32,18.7521) (40,22.4201) (52,28.2256) (64,31.425) (80,34.1296) (104,38.2318) (128,40.953) (158,37.7571) };
  \addlegendentry{OS};
  \addplot coordinates { (1,2.09579) (2,3.90633) (4,7.31186) (8,10.9186) (12,16.0714) (16,20.8282) (24,30.5221) (32,36.8469) (40,44.6574) (52,52.9183) (64,58.7102) (80,60.2441) (104,63.1921) (128,63.2386) (158,54.8526) };
  \addlegendentry{OS-ND};
  \addplot coordinates { (1,1) (2,1.06704) (4,1.96044) (8,3.7198) (12,5.62601) (16,7.23745) (24,9.415) (32,12.6777) (40,17.0029) (52,17.1853) (64,20.4939) (80,19.8475) (104,25.0806) (128,26.1013) (158,26.7465) };
  \addlegendentry{PSA};

  \addplot+[mark=none, dashed, gray] coordinates {(80, -3) (80, 70)} node[right,pos=0.7] {HT};
  \addplot+[mark=none, dashed, black] coordinates {(0.65, 1) (200, 1)};

  \end{semilogxaxis}
\end{tikzpicture}
\caption{Strong scaling, power-law input with $s=1$}\label{fig:qstrongpow}
\end{subfigure}
\begin{subfigure}[b]{0.51\linewidth}
\vspace*{1em}
\begin{tikzpicture}
  \begin{semilogxaxis}[
    ylabel={Speedup},
    ymin=-10,
    ymax=160,
  ]
  \addplot coordinates { (1,0.588202) (2,0.664234) (4,1.18522) (8,2.36584) (12,3.48515) (16,4.25889) (24,6.60972) (32,8.59354) (40,10.5053) (52,13.1021) (64,15.9758) (80,19.1936) (104,20.8843) (128,21.3637) (158,22.18) };
  \addlegendentry{2lvl};
  \addplot coordinates { (1,0.392864) (2,0.742057) (4,1.47122) (8,2.89869) (12,4.28908) (16,5.8615) (24,8.6325) (32,12.0246) (40,14.6986) (52,16.2515) (64,20.1255) (80,23.3174) (104,30.1898) (128,37.373) (158,42.955) };
  \addlegendentry{OS};
  \addplot coordinates { (1,0.857293) (2,1.50505) (4,2.87508) (8,5.61739) (12,8.19541) (16,10.7929) (24,15.6013) (32,21.0845) (40,26.4775) (52,34.2178) (64,40.5061) (80,41.6654) (104,52.5623) (128,65.0187) (158,75.4106) };
  \addlegendentry{OS-ND};
  \addplot coordinates { (1,1.00019) (2,1.01615) (4,1.86538) (8,3.72618) (12,5.46965) (16,6.54452) (24,10.4681) (32,13.6915) (40,16.7828) (52,21.0511) (64,25.4605) (80,30.1018) (104,30.3229) (128,33.2073) (158,33.6486) };
  \addlegendentry{PSA};

  \addplot+[mark=none, dashed, gray] coordinates {(80, -10) (80, 160)} node[right,pos=0.6] {HT};
  \addplot+[mark=none, dashed, black] coordinates {(0.65, 1) (200, 1)};

  \end{semilogxaxis}
\end{tikzpicture}
\caption{Weak scaling, uniform input}\label{fig:qweakuni}
\end{subfigure}
\hfill
\begin{subfigure}[b]{0.48\linewidth}
\vspace*{1em}
\begin{tikzpicture}
  \begin{semilogxaxis}[
    ymin=-10,
    ymax=160,
  ]
  \addplot coordinates { (1,0.593817) (2,0.664233) (4,1.19911) (8,2.3792) (12,3.48504) (16,4.33038) (24,6.64933) (32,8.67056) (40,10.5877) (52,13.305) (64,16.1109) (80,19.4607) (104,21.1326) (128,21.9774) (158,22.5736) };
  \addlegendentry{2lvl};
  \addplot coordinates { (1,0.785537) (2,1.42431) (4,2.98996) (8,6.13629) (12,8.2356) (16,10.645) (24,17.1941) (32,22.5018) (40,29.0008) (52,39.8227) (64,46.7217) (80,51.5003) (104,65.1848) (128,75.8323) (158,86.8742) };
  \addlegendentry{OS};
  \addplot coordinates { (1,1.65133) (2,2.39996) (4,4.92692) (8,10.2483) (12,16.2678) (16,21.6673) (24,34.1653) (32,42.8887) (40,52.1572) (52,76.6922) (64,84.632) (80,91.7196) (104,113.587) (128,130.985) (158,143.444) };
  \addlegendentry{OS-ND};
  \addplot coordinates { (1,1.0002) (2,1.06126) (4,1.89671) (8,3.72404) (12,5.50111) (16,6.7803) (24,10.5768) (32,13.8033) (40,16.6222) (52,21.1746) (64,25.6563) (80,30.2082) (104,31.9895) (128,30.9837) (158,33.8866) };
  \addlegendentry{PSA};

  \addplot+[mark=none, dashed, gray] coordinates {(80, -10) (80, 160)} node[right,pos=0.6] {HT};
  \addplot+[mark=none, dashed, black] coordinates {(0.65, 1) (200, 1)};

  \end{semilogxaxis}
\end{tikzpicture}
\caption{Weak scaling, power-law input with $s=1$}\label{fig:qweakpow}
\end{subfigure}
\caption{Query strong and weak scaling for $n=10^9$ items. Sample size
  for strong scaling $k = 10^7$, per-thread sample size for weak scaling
  $k/p = 10^6$.  Speedups relative to seq. alias tables. Intel machine.}
\label{fig:queryscaling}
\end{figure}

\subsection{Queries}

We performed strong and weak scaling experiments for queries
(\cref{fig:queryscaling}) as well as throughput measurements for different
sample sizes (\cref{fig:query}).  Observe that the
different configurations of \twolvl and \psa/\psag, respectively, differ only in how
they approach construction, and have identical query behavior.  Thus, we do not
consider them separately in the query evaluation.

\paragraph*{Scaling} First, consider the scaling experiments of
\cref{fig:queryscaling}.\footnote{The sequential processing times with
  regard to which the speedups are reported are, from (a) to (d),
  $\basesu\,ms$, $\basesp\,ms$, $\basewu\,ms$, and $\basewp\,ms$.}
These experiments were conducted on the Intel machine,
as its highly non-uniform memory access characteristics highlight the
differences between the algorithms.  All speedups are given relative to sampling
sequentially from an alias table.  The strong scaling experiments
(\cref{fig:qstronguni,fig:qstrongpow}) deliberately use a small sample size to
show scaling to low per-thread sample counts ($\approx 64\,000$ samples per thread for 158
threads).  We can see that all algorithms have good scaling behavior.
Hyper-threading (marked ``HT'' in the plots) yields additional speedups, as it
helps to hide memory latency.  This already shows that the bottleneck is
random access latency to memory.  Sampling from \psa and \twolvl is done
completely independently by all threads, with no interaction apart from reading
from the same shared data structures.  Because the Intel machine
has four NUMA nodes, most queries have to access another NUMA node's
memory.  This limits the speedups achievable using alias tables (\psa, \twolvl).

In contrast, for \osens and \osnd, threads access only local memory after a
shared top-level sample assignment stage.  This
benefits scaling, especially on NUMA machines and machines with local
last-level caches, such as newer AMD CPUs.  As a result, \osnd
achieves the best speedups, despite this benchmark producing very few samples
with multiplicity greater than one (\cref{fig:qstronguni}, sample size is 1\,\%
of input size).  On the other hand, deduplication in the base case of \osens has
considerable overhead, making it roughly 25\,\% slower than sampling from an
alias table for such inputs, even sequentially (deduplication is still faster than fully
descending the tree, \ie omitting the optimization described at the start of this section).

The weak scaling experiments of \cref{fig:qweakuni,fig:qweakpow} show even
better speedups because many more samples are drawn here than in our
strong scaling experiment, reducing overheads.  Sampling from a
classical alias table (\psa) achieves a speedup of over 30 here, again limited by
memory accesses rather than computation.  Meanwhile, the
output-sensitive methods (\osens, \osnd) reap the benefits of accessing only
local memory.

Comparing the results for uniformly random weights
(\cref{fig:qstronguni,fig:qweakuni}) with those for power-law inputs
(\cref{fig:qstrongpow,fig:qweakpow}) clearly shows that the query performance of
alias tables (\psa, \twolvl) is independent of the input's weight distribution.
Meanwhile, for \osens and its variant \osnd, uniformly random weights are a
difficult input because they result in few items with multiplicity greater than
one (see the throughput measurements below for further details).  Thus, they
achieve significantly higher speedups for the power-law input.
\enlargethispage*{\baselineskip}
\clearpage

\begin{figure}[p]
\begin{subfigure}[t]{\linewidth}
\begin{tikzpicture}
  \pgfplotsset{group/.cd,
    group size=2 by 2,
    group name=plots,
    horizontal sep=1.87mm,
    vertical sep=1.87mm,
  }
  \begin{groupplot}[
    group style={y descriptions at=edge left},
    groupplot ylabel={Throughput ($\cdot10^9$ samples/s)},
  ]
  \pgfplotsset{
    xmode=log,
    log basis x=10,
    width=7cm,
    height=5cm,
    ymin=0,
    ymax=6.9,
    xmin=5e5,
    xmax=2e9,
    axis y line* = left,
    every axis plot post/.append style={error bars/.cd, y dir=both, y explicit},
    cycle list name=query,
  }
  \nextgroupplot[xticklabels={}]
  \addplot coordinates { (1000000,0.427671) +- (0,0.130853) (3162000,0.673509) +- (0,0.140627) (10000000,0.775444) +- (0,0.143511) (31623000,0.838966) +- (0,0.0797068) (1e+08,0.982941) +- (0,0.0523717) (3.16228e+08,1.01379) +- (0,0.0431118) (1e+09,1.03695) +- (0,0.0434435) };
  \addplot coordinates { (1000000,0.391466) +- (0,0.028266) (3162000,0.682413) +- (0,0.0656421) (10000000,0.851426) +- (0,0.101981) (31623000,1.1076) +- (0,0.0872345) (1e+08,1.74307) +- (0,0.106815) (3.16228e+08,2.41725) +- (0,0.115504) (1e+09,2.81983) +- (0,0.0342995) };
  \addplot coordinates { (1000000,0.52401) +- (0,0.0396964) (3162000,1.01142) +- (0,0.0688715) (10000000,1.75719) +- (0,0.16612) (31623000,1.9272) +- (0,0.138862) (1e+08,3.03053) +- (0,0.224603) (3.16228e+08,5.10147) +- (0,0.240184) (1e+09,6.51046) +- (0,0.204151) };
  \addplot coordinates { (1000000,0.501504) +- (0,0.0377331) (3162000,0.920984) +- (0,0.0601088) (10000000,1.23997) +- (0,0.13902) (31623000,1.33534) +- (0,0.0370468) (1e+08,1.49327) +- (0,0.0793024) (3.16228e+08,1.61975) +- (0,0.00946006) (1e+09,1.66195) +- (0,0.0100674) };

  \draw [mark=none,black] (1.5e6,3.5) node [right] {uniform input};

  \nextgroupplot[xticklabels={}]
  \addplot coordinates { (1000000,0.428666) +- (0,0.13313) (3162000,0.679719) +- (0,0.150812) (10000000,0.777603) +- (0,0.134748) (31623000,0.851488) +- (0,0.0795599) (1e+08,0.98511) +- (0,0.0476552) (3.16228e+08,1.022) +- (0,0.0413446) (1e+09,1.04208) +- (0,0.044175) };
  \addplot coordinates { (1000000,0.354849) +- (0,0.0305754) (3162000,0.610731) +- (0,0.0619728) (10000000,0.779624) +- (0,0.0969506) (31623000,1.00648) +- (0,0.072157) (1e+08,1.50318) +- (0,0.0959265) (3.16228e+08,2.20565) +- (0,0.133053) (1e+09,2.49636) +- (0,0.0456432) };
  \addplot coordinates { (1000000,0.475555) +- (0,0.0418205) (3162000,0.893369) +- (0,0.066395) (10000000,1.4253) +- (0,0.138693) (31623000,1.61076) +- (0,0.11448) (1e+08,2.51845) +- (0,0.181255) (3.16228e+08,4.00248) +- (0,0.178024) (1e+09,5.05132) +- (0,0.157299) };
  \addplot coordinates { (1000000,0.498393) +- (0,0.0367145) (3162000,0.925568) +- (0,0.0581609) (10000000,1.23057) +- (0,0.132309) (31623000,1.26578) +- (0,0.0895261) (1e+08,1.54919) +- (0,0.0124504) (3.16228e+08,1.61851) +- (0,0.0103433) (1e+09,1.65914) +- (0,0.0105511) };

  \draw [mark=none,black] (1.5e6,3.5) node [right] {power-law, $s=0.5$};

  \nextgroupplot[xlabel={Sample size}, ymode=log, ymin=0.17, ymax=4000, ytick={0.1,1,10,100,1000.01,10000},
      yticklabel=\pgfmathparse{exp(\tick)}\pgfmathprintnumber{\pgfmathresult}]
  \addplot coordinates { (1000000,0.434839) +- (0,0.137951) (3162000,0.674514) +- (0,0.143905) (10000000,0.789251) +- (0,0.130606) (31623000,0.851061) +- (0,0.0750289) (1e+08,0.999106) +- (0,0.0471698) (3.16228e+08,1.03978) +- (0,0.0433076) (1e+09,1.06553) +- (0,0.0397741) };
  \addplot coordinates { (1000000,0.447708) +- (0,0.0350118) (3162000,0.962515) +- (0,0.0801285) (10000000,1.69991) +- (0,0.177716) (31623000,2.15256) +- (0,0.339624) (1e+08,3.52927) +- (0,0.325459) (3.16228e+08,5.02418) +- (0,0.367296) (1e+09,7.82547) +- (0,0.405069) };
  \addplot coordinates { (1000000,0.475116) +- (0,0.0440152) (3162000,1.32222) +- (0,0.100582) (10000000,2.50792) +- (0,0.198164) (31623000,4.03865) +- (0,0.441528) (1e+08,5.81464) +- (0,0.592397) (3.16228e+08,8.52165) +- (0,0.530546) (1e+09,13.1128) +- (0,0.879432) };
  \addplot coordinates { (1000000,0.502719) +- (0,0.0380592) (3162000,0.926442) +- (0,0.0558101) (10000000,1.20332) +- (0,0.129736) (31623000,1.33327) +- (0,0.0412196) (1e+08,1.48589) +- (0,0.0888135) (3.16228e+08,1.61493) +- (0,0.0138065) (1e+09,1.65318) +- (0,0.0147159) };

  \draw [mark=none,black] (1.5e6,60) node [right] {power-law, $s=1$};

  \nextgroupplot[xlabel={Sample size}, ymode=log, ymin=0.17, ymax=4000, ytick={0.1,1,10,100,1000.01,10000}]
  \addplot coordinates { (1000000,0.450077) +- (0,0.0909795) (3162000,0.727412) +- (0,0.119777) (10000000,0.894186) +- (0,0.131727) (31623000,1.03908) +- (0,0.0561869) (1e+08,1.25466) +- (0,0.0514635) (3.16228e+08,1.3033) +- (0,0.0555404) (1e+09,1.32059) +- (0,0.064909) };
  \addplot coordinates { (1000000,0.548082) +- (0,0.0517963) (3162000,1.80932) +- (0,0.193829) (10000000,5.80291) +- (0,0.700492) (31623000,20.6354) +- (0,2.7179) (1e+08,72.0388) +- (0,6.32938) (3.16228e+08,241.233) +- (0,19.9653) (1e+09,686.169) +- (0,80.1739) };
  \addplot coordinates { (1000000,0.54381) +- (0,0.0547766) (3162000,1.79148) +- (0,0.187737) (10000000,5.92393) +- (0,0.780376) (31623000,20.7184) +- (0,2.75777) (1e+08,73.2781) +- (0,6.37585) (3.16228e+08,238.185) +- (0,16.4514) (1e+09,705.866) +- (0,73.0254) };
  \addplot coordinates { (1000000,0.502273) +- (0,0.037936) (3162000,0.928747) +- (0,0.0611639) (10000000,1.19775) +- (0,0.133272) (31623000,1.33884) +- (0,0.0307567) (1e+08,1.54999) +- (0,0.0114923) (3.16228e+08,1.61573) +- (0,0.0132463) (1e+09,1.65584) +- (0,0.0139405) };

  \draw [mark=none,black] (1.5e6,200) node [right] {power-law, $s=2$};
  \end{groupplot}

  \makeatletter
  \pgfplotsset{
    every groupplot y label/.style={
      rotate=90,
      at={($({\pgfplots@group@name\space c2r1.north}-|{\pgfplots@group@name\space c2r1.outer
east})!0.5!({\pgfplots@group@name\space c2r2.south}-|{\pgfplots@group@name\space c2r2.outer east})$)},
      anchor=center
    }
  }
  \makeatother
  \begin{groupplot}[
    group style={y descriptions at=edge right},
    groupplot ylabel={Output size / sample size}
  ]
  \pgfplotsset{
    xmode=log,
    log basis x=10,
    width=7cm,
    height=5cm,
    xtick=\empty,
    axis y line* = right,
    ymin=0.42,
    ymax=1.1,
    xmin=5e5,
    xmax=2e9,
    every axis plot post/.append style={dashed, mark options={solid,fill=white}},
    cycle list shift=1,
    cycle list name=query,
  }

  \nextgroupplot
  \addplot coordinates { (1000000,0.998678) (3162000,0.998147) (10000000,0.993558) (31623000,0.97935) (1e+08,0.936484) (3.16228e+08,0.818714) (1e+09,0.56759) };
  \addplot coordinates { (1000000,1.0) (3162000,1.0) (10000000,1.0) (31623000,1.0) (1e+08,1.0) (3.16228e+08,1.0) (1e+09,0.995616) };

  \nextgroupplot
  \addplot coordinates { (1000000,0.997814) (3162000,0.994479) (10000000,0.984199) (31623000,0.960056) (1e+08,0.901427) (3.16228e+08,0.777158) (1e+09,0.556695) };
  \addplot coordinates { (1000000,0.999796) (3162000,0.999477) (10000000,0.998712) (31623000,0.995023) (1e+08,0.983696) (3.16228e+08,0.950984) (1e+09,0.829843) };

  \nextgroupplot[ymode=log,ymin=2e-5,ymax=1,ytick={1e-5,1e-4,1e-3,1e-2,1e-1,1}]
  \addplot coordinates { (1000000,0.489139) (3162000,0.433757) (10000000,0.379911) (31623000,0.325611) (1e+08,0.271679) (3.16228e+08,0.217902) (1e+09,0.164526) };
  \addplot coordinates { (1000000,0.535932) (3162000,0.478146) (10000000,0.420591) (31623000,0.367875) (1e+08,0.314032) (3.16228e+08,0.257316) (1e+09,0.205605) };

  \nextgroupplot[ymode=log,ymin=2e-5,ymax=1,ytick={1e-5,1e-4,1e-3,1e-2,1e-1,1}]
  \addplot coordinates { (1000000,0.001252) (3162000,0.000740354) (10000000,0.0004626) (31623000,0.000262309) (1e+08,0.00013523) (3.16228e+08,7.58978e-05) (1e+09,4.4454e-05) };
  \addplot coordinates { (1000000,0.001504) (3162000,0.000927894) (10000000,0.0005833) (31623000,0.000336559) (1e+08,0.00018332) (3.16228e+08,0.000103742) (1e+09,5.9134e-05) };
  \end{groupplot}
  \begin{customlegend}[
    legend entries={2lvl,OS,OS-ND,PSA,\kern0.6em Right axis:,OS,OS-ND},
    legend style={at={{($(5.25,3.8)$)}}, anchor=center, column sep=2pt},
    legend columns=-1]

    \addlegendimage{mdred,every mark/.append style={fill=mlred},mark=*}%
    \addlegendimage{mdbrown,every mark/.append style={fill=mlbrown},mark=square*}%
    \addlegendimage{mdgreen,every mark/.append style={fill=mlgreen},mark=square*}%
    \addlegendimage{mdgray,every mark/.append style={fill=mlgray},mark=diamond*}%
    \addlegendimage{empty legend}
    \addlegendimage{mdbrown,every mark/.append style={fill=mlbrown},mark=square*,dashed, mark options={solid,fill=white}}%
    \addlegendimage{mdgreen,every mark/.append style={fill=mlgreen},mark=square*,dashed, mark options={solid,fill=white}}%
  \end{customlegend}
\end{tikzpicture}
\vspace*{-2.5mm}
\caption{Intel machine (158 threads).  Note the logarithmic $y$-axes for the
  bottom plots.}\label{fig:queryintel}
\end{subfigure}
\begin{subfigure}[b]{\linewidth}
\vspace*{1.2em}
\begin{tikzpicture}
  \pgfplotsset{group/.cd,
    group size=2 by 2,
    group name=plots,
    horizontal sep=1.87mm,
    vertical sep=1.87mm,
  }
  \begin{groupplot}[
    group style={y descriptions at=edge left},
    groupplot ylabel={Throughput ($\cdot10^9$ samples/s)},
  ]
  \pgfplotsset{
    xmode=log,
    log basis x=10,
    width=7cm,
    height=5cm,
    ymin=0,
    ymax=3.2,
    xmin=5e5,
    xmax=2e9,
    axis y line* = left,
    every axis plot post/.append style={error bars/.cd, y dir=both, y explicit},
    cycle list name=query,
  }
  \nextgroupplot[xticklabels={}]
  \addplot coordinates { (1000000,0.309685) +- (0,0.0521314) (3162000,0.37516) +- (0,0.0439745) (10000000,0.441681) +- (0,0.0154393) (31623000,0.516193) +- (0,0.0127639) (1e+08,0.516928) +- (0,0.00553201) (3.16228e+08,0.526499) +- (0,0.00576385) (1e+09,0.527964) +- (0,0.00478506) };
  \addplot coordinates { (1000000,0.233785) +- (0,0.0233743) (3162000,0.279331) +- (0,0.0232871) (10000000,0.353686) +- (0,0.0209631) (31623000,0.453702) +- (0,0.0216903) (1e+08,0.671774) +- (0,0.0111872) (3.16228e+08,1.01188) +- (0,0.0145573) (1e+09,1.21889) +- (0,0.0111839) };
  \addplot coordinates { (1000000,0.502167) +- (0,0.0471137) (3162000,0.643048) +- (0,0.0540772) (10000000,0.804597) +- (0,0.0603065) (31623000,1.00494) +- (0,0.0455916) (1e+08,1.35689) +- (0,0.0412836) (3.16228e+08,2.2072) +- (0,0.0957625) (1e+09,3.01955) +- (0,0.110211) };
  \addplot coordinates { (1000000,0.486943) +- (0,0.0496204) (3162000,0.626703) +- (0,0.0580914) (10000000,0.772377) +- (0,0.0445931) (31623000,0.853465) +- (0,0.0240337) (1e+08,0.849346) +- (0,0.0338731) (3.16228e+08,0.830187) +- (0,0.0193995) (1e+09,0.885354) +- (0,0.0150213) };

  \draw [mark=none,black] (1.5e6,1.5) node [right] {uniform input};

  \nextgroupplot[xticklabels={}]
  \addplot coordinates { (1000000,0.310224) +- (0,0.0556156) (3162000,0.374989) +- (0,0.0426513) (10000000,0.441995) +- (0,0.0155328) (31623000,0.517664) +- (0,0.0114562) (1e+08,0.515569) +- (0,0.00515549) (3.16228e+08,0.524921) +- (0,0.00240787) (1e+09,0.52751) +- (0,0.00355301) };
  \addplot coordinates { (1000000,0.211987) +- (0,0.021098) (3162000,0.262353) +- (0,0.0231793) (10000000,0.336058) +- (0,0.01762) (31623000,0.428128) +- (0,0.0164121) (1e+08,0.590123) +- (0,0.0115777) (3.16228e+08,0.925584) +- (0,0.00907342) (1e+09,1.09472) +- (0,0.00913401) };
  \addplot coordinates { (1000000,0.42072) +- (0,0.0388921) (3162000,0.545482) +- (0,0.0431704) (10000000,0.686775) +- (0,0.0473711) (31623000,0.888459) +- (0,0.0395262) (1e+08,1.16667) +- (0,0.0374562) (3.16228e+08,1.8118) +- (0,0.0604816) (1e+09,2.34011) +- (0,0.0542642) };
  \addplot coordinates { (1000000,0.492542) +- (0,0.0558899) (3162000,0.618165) +- (0,0.0530459) (10000000,0.769218) +- (0,0.0253639) (31623000,0.8499) +- (0,0.0360486) (1e+08,0.887584) +- (0,0.0225015) (3.16228e+08,0.917413) +- (0,0.0103045) (1e+09,0.861073) +- (0,0.0270686) };
  \draw [mark=none,black] (1e6,1.5) node [right] {power-law, $s=0.5$};

  \nextgroupplot[xlabel={Sample size}, ymode=log, ymin=0.17, ymax=4000, ytick={0.1,1,10,100,1000.01,10000}, %
      yticklabel=\pgfmathparse{exp(\tick)}\pgfmathprintnumber{\pgfmathresult}]
  \addplot coordinates { (1000000,0.309801) +- (0,0.0503313) (3162000,0.375425) +- (0,0.0421208) (10000000,0.442627) +- (0,0.0130986) (31623000,0.516475) +- (0,0.0114754) (1e+08,0.518139) +- (0,0.00464148) (3.16228e+08,0.527358) +- (0,0.00545214) (1e+09,0.529075) +- (0,0.00340926) };
  \addplot coordinates { (1000000,0.399702) +- (0,0.0339018) (3162000,0.591154) +- (0,0.057992) (10000000,0.7993) +- (0,0.0714094) (31623000,1.18456) +- (0,0.0618433) (1e+08,1.68324) +- (0,0.0580736) (3.16228e+08,2.28639) +- (0,0.0491066) (1e+09,3.38422) +- (0,0.0473888) };
  \addplot coordinates { (1000000,0.745431) +- (0,0.0815698) (3162000,1.15553) +- (0,0.118461) (10000000,1.55008) +- (0,0.14252) (31623000,2.11657) +- (0,0.177938) (1e+08,3.31972) +- (0,0.204977) (3.16228e+08,4.63146) +- (0,0.205566) (1e+09,6.66061) +- (0,0.24816) };
  \addplot coordinates { (1000000,0.49471) +- (0,0.0549625) (3162000,0.641267) +- (0,0.0645736) (10000000,0.768068) +- (0,0.0373383) (31623000,0.848571) +- (0,0.0435088) (1e+08,0.894103) +- (0,0.0198393) (3.16228e+08,0.850665) +- (0,0.0430075) (1e+09,0.906728) +- (0,0.0325908) };

  \draw [mark=none,black] (1.5e6,60) node [right] {power-law, $s=1$};

  \nextgroupplot[xlabel={Sample size}, ymode=log, ymin=0.17, ymax=4000, ytick={0.1,1,10,100,1000,10000}]
  \addplot coordinates { (1000000,0.292388) +- (0,0.0379803) (3162000,0.316837) +- (0,0.0312583) (10000000,0.307829) +- (0,0.0400729) (31623000,0.314356) +- (0,0.0106555) (1e+08,0.310192) +- (0,0.0109349) (3.16228e+08,0.312704) +- (0,0.0084313) (1e+09,0.313795) +- (0,0.00828756) };
  \addplot coordinates { (1000000,2.45149) +- (0,0.213333) (3162000,7.77983) +- (0,0.71097) (10000000,25.3183) +- (0,2.31232) (31623000,81.0596) +- (0,7.561) (1e+08,244.876) +- (0,21.5888) (3.16228e+08,672.933) +- (0,90.6535) (1e+09,1668.34) +- (0,293.467) };
  \addplot coordinates { (1000000,2.50165) +- (0,0.207952) (3162000,7.91883) +- (0,0.675405) (10000000,24.6551) +- (0,2.44719) (31623000,82.2777) +- (0,7.59333) (1e+08,277.065) +- (0,29.7864) (3.16228e+08,863.529) +- (0,82.6562) (1e+09,2460.23) +- (0,366.487) };
  \addplot coordinates { (1000000,0.48499) +- (0,0.0469573) (3162000,0.638542) +- (0,0.0632877) (10000000,0.76853) +- (0,0.0309809) (31623000,0.839367) +- (0,0.0539111) (1e+08,0.892648) +- (0,0.0397796) (3.16228e+08,0.901283) +- (0,0.0128959) (1e+09,0.851657) +- (0,0.0992777) };

  \draw [mark=none,black] (1e6,200) node [right] {power-law, $s=2$};
  \end{groupplot}

  \makeatletter
  \pgfplotsset{
    every groupplot y label/.style={
      rotate=90,
      at={($({\pgfplots@group@name\space c2r1.north}-|{\pgfplots@group@name\space c2r1.outer
east})!0.5!({\pgfplots@group@name\space c2r2.south}-|{\pgfplots@group@name\space c2r2.outer east})$)},
      anchor=center
    }
  }
  \makeatother
  \begin{groupplot}[
    group style={y descriptions at=edge right},
    groupplot ylabel={Output size / sample size}
  ]
  \pgfplotsset{
    xmode=log,
    log basis x=10,
    width=7cm,
    height=5cm,
    xtick=\empty,
    axis y line* = right,
    ymin=0.42,
    ymax=1.1,
    xmin=5e5,
    xmax=2e9,
    every axis plot post/.append style={dashed, mark options={solid,fill=white}},
    cycle list shift=1,
    cycle list name=query,
  }

  \nextgroupplot
  \addplot coordinates { (1000000,0.999231) (3162000,0.997945) (10000000,0.993278) (31623000,0.979326) (1e+08,0.936384) (3.16228e+08,0.818828) (1e+09,0.567658) };
  \addplot coordinates { (1000000,1.0) (3162000,1.0) (10000000,1.0) (31623000,1.0) (1e+08,1.0) (3.16228e+08,1.0) (1e+09,0.994201) };

  \nextgroupplot
  \addplot coordinates { (1000000,0.997561) (3162000,0.994316) (10000000,0.984592) (31623000,0.95999) (1e+08,0.901475) (3.16228e+08,0.777127) (1e+09,0.556818) };
  \addplot coordinates { (1000000,0.999779) (3162000,0.999536) (10000000,0.998685) (31623000,0.994773) (1e+08,0.983584) (3.16228e+08,0.95102) (1e+09,0.829871) };

  \nextgroupplot[ymode=log,ymin=2e-5,ymax=1,ytick={1e-5,1e-4,1e-3,1e-2,1e-1,1}]
  \addplot coordinates { (1000000,0.487117) (3162000,0.433014) (10000000,0.379651) (31623000,0.325766) (1e+08,0.271763) (3.16228e+08,0.217821) (1e+09,0.164543) };
  \addplot coordinates { (1000000,0.53185) (3162000,0.476974) (10000000,0.420139) (31623000,0.367765) (1e+08,0.314042) (3.16228e+08,0.257305) (1e+09,0.205627) };

  \nextgroupplot[ymode=log,ymin=2e-5,ymax=1,ytick={1e-5,1e-4,1e-3,1e-2,1e-1,1}]
  \addplot coordinates { (1000000,0.001379) (3162000,0.000802024) (10000000,0.0004438) (31623000,0.000248838) (1e+08,0.00013922) (3.16228e+08,7.71405e-05) (1e+09,4.346e-05) };
  \addplot coordinates { (1000000,0.001782) (3162000,0.00100285) (10000000,0.0005809) (31623000,0.00033093) (1e+08,0.00018935) (3.16228e+08,0.000104915) (1e+09,5.823e-05) };
  \end{groupplot}
\end{tikzpicture}
\vspace*{-2.5mm}
\caption{AMD machine (62 threads).  Note the logarithmic $y$-axes for the bottom
  plots.}\label{fig:queryamd}
\end{subfigure}
\caption{Query throughput of the different methods for $n=10^9$, using all
  available cores.  Top left: uniform inputs, top right: power-law with $s=0.5$,
  bottom left: power-law $s=1$, bottom right: power-law $s=2$.  Dashed lines on
  the right $y$-axis belong to the same-colored solid lines on the left $y$-axis
  and show fraction of output size over sample size for output-sensitive
  algorithms.}
\label{fig:query}
\end{figure}

\begin{figure}[p]
\begin{subfigure}[t]{\linewidth}
\begin{tikzpicture}
  \pgfplotsset{group/.cd,
    group size=2 by 2,
    group name=plots,
    horizontal sep=1.87mm,
    vertical sep=1.87mm,
  }
  \begin{groupplot}[
    group style={y descriptions at=edge left},
    groupplot ylabel={Time per sample ($ns$)}
  ]
  \pgfplotsset{
    xmode=log,
    log basis x=10,
    legend columns=4,
    legend style={anchor=south,at={(axis description cs:0,1)},yshift=1mm},
    width=7cm, %
    height=5cm,
    ymin=0,
    ymax=3.1,
    xmin=5e5,
    xmax=2e9,
    axis y line* = left,
    every axis plot post/.append style={error bars/.cd, y dir=both, y explicit},
    cycle list name=query,
  }
  \nextgroupplot[xticklabels={}]
  \addplot coordinates { (1000000,2.3398) (3162000,1.4879) (10000000,1.29821) (31623000,1.21724) (1e+08,1.08634) (3.16228e+08,1.20488) (1e+09,1.69889) };
  \addplot coordinates { (1000000,2.5562) (3162000,1.46848) (10000000,1.18235) (31623000,0.922013) (1e+08,0.612602) (3.16228e+08,0.505326) (1e+09,0.624741) };
  \addplot coordinates { (1000000,1.90963) (3162000,0.990792) (10000000,0.572895) (31623000,0.529898) (1e+08,0.35235) (3.16228e+08,0.239441) (1e+09,0.27059) };
  \addplot coordinates { (1000000,1.99533) (3162000,1.08809) (10000000,0.811864) (31623000,0.764762) (1e+08,0.715082) (3.16228e+08,0.754126) (1e+09,1.06) };

  \draw [mark=none,black] (3e6,1.8) node [right] {uniform input};

  \nextgroupplot[xticklabels={}]

  \addplot coordinates { (1000000,2.33775) (3162000,1.47979) (10000000,1.3063) (31623000,1.22349) (1e+08,1.12591) (3.16228e+08,1.25908) (1e+09,1.72352) };
  \addplot coordinates { (1000000,2.82405) (3162000,1.64695) (10000000,1.30292) (31623000,1.03508) (1e+08,0.737863) (3.16228e+08,0.583403) (1e+09,0.719463) };
  \addplot coordinates { (1000000,2.10725) (3162000,1.1259) (10000000,0.71268) (31623000,0.646767) (1e+08,0.440406) (3.16228e+08,0.321496) (1e+09,0.355559) };
  \addplot coordinates { (1000000,2.01069) (3162000,1.08673) (10000000,0.825462) (31623000,0.823042) (1e+08,0.715948) (3.16228e+08,0.795043) (1e+09,1.08252) };

  \draw [mark=none,black] (3e6,1.8) node [right,text width=2cm, align=right] {power-law, $s=0.5$};

  \nextgroupplot[xlabel={Sample size},ymode=log,ymin=0.2,ymax=4e4,ytick={0.1,1,10,100,1e3,1e4,1e5}]
  \addplot coordinates { (1000000,4.7142) (3162000,3.41799) (10000000,3.33713) (31623000,3.60832) (1e+08,3.68437) (3.16228e+08,4.41475) (1e+09,5.7039) };
  \addplot coordinates { (1000000,4.5787) (3162000,2.39527) (10000000,1.54939) (31623000,1.42662) (1e+08,1.04301) (3.16228e+08,0.913659) (1e+09,0.776656) };
  \addplot coordinates { (1000000,4.31457) (3162000,1.74364) (10000000,1.05021) (31623000,0.760378) (1e+08,0.633071) (3.16228e+08,0.538674) (1e+09,0.463494) };
  \addplot coordinates { (1000000,4.07766) (3162000,2.48853) (10000000,2.1888) (31623000,2.30329) (1e+08,2.47735) (3.16228e+08,2.84246) (1e+09,3.67637) };

  \draw [mark=none,black] (3e6,100) node [right] {power-law, $s=1$};

  \nextgroupplot[xlabel={Sample size},ymode=log,ymin=0.2,ymax=4e4,ytick={0.1,1,10,100,1e3,1e4,1e5}]

  \addplot coordinates { (1000000,1609.89) (3162000,1772.24) (10000000,2554.95) (31623000,3900.83) (1e+08,5750.39) (3.16228e+08,9852.86) (1e+09,17236.6) };
  \addplot coordinates { (1000000,1322.01) (3162000,712.504) (10000000,393.7) (31623000,196.423) (1e+08,100.151) (3.16228e+08,53.2317) (1e+09,33.1733) };
  \addplot coordinates { (1000000,1332.4) (3162000,719.6) (10000000,385.657) (31623000,195.636) (1e+08,98.4575) (3.16228e+08,53.9128) (1e+09,32.2476) };
  \addplot coordinates { (1000000,1442.59) (3162000,1388.05) (10000000,1907.41) (31623000,3027.44) (1e+08,4654.72) (3.16228e+08,7947.62) (1e+09,13746.8) };

  \draw [mark=none,black] (2e7,800) node [right] {power-law, $s=2$};
  \end{groupplot}

  \makeatletter
  \pgfplotsset{
    every groupplot y label/.style={
      rotate=90,
      at={($({\pgfplots@group@name\space c2r1.north}-|{\pgfplots@group@name\space c2r1.outer
east})!0.5!({\pgfplots@group@name\space c2r2.south}-|{\pgfplots@group@name\space c2r2.outer east})$)},
      anchor=center %
    }
  }
  \makeatother
  \begin{groupplot}[
    group style={y descriptions at=edge right},
    groupplot ylabel={Output size / sample size}
  ]
  \pgfplotsset{
    xmode=log,
    log basis x=10,
    width=7cm, %
    height=5cm,
    xtick=\empty,
    axis y line* = right,
    ymin=0.42,
    ymax=1.1,
    xmin=5e5,
    xmax=2e9,
    every axis plot post/.append style={dashed, mark options={solid,fill=white}},
    cycle list shift=1,
    cycle list name=query,
  }

  \nextgroupplot
  \addplot coordinates { (1000000,0.998678) (3162000,0.998147) (10000000,0.993558) (31623000,0.97935) (1e+08,0.936484) (3.16228e+08,0.818714) (1e+09,0.56759) };
  \addplot coordinates { (1000000,1.0) (3162000,1.0) (10000000,1.0) (31623000,1.0) (1e+08,1.0) (3.16228e+08,1.0) (1e+09,0.995616) };

  \nextgroupplot
  \addplot coordinates { (1000000,0.997814) (3162000,0.994479) (10000000,0.984199) (31623000,0.960056) (1e+08,0.901427) (3.16228e+08,0.777158) (1e+09,0.556695) };
  \addplot coordinates { (1000000,0.999796) (3162000,0.999477) (10000000,0.998712) (31623000,0.995023) (1e+08,0.983696) (3.16228e+08,0.950984) (1e+09,0.829843) };

  \nextgroupplot[ymode=log,ymin=1.5e-5,ymax=1,ytick={1e-5,1e-4,1e-3,1e-2,1e-1,1}]
  \addplot coordinates { (1000000,0.489139) (3162000,0.433757) (10000000,0.379911) (31623000,0.325611) (1e+08,0.271679) (3.16228e+08,0.217902) (1e+09,0.164526) };
  \addplot coordinates { (1000000,0.535932) (3162000,0.478146) (10000000,0.420591) (31623000,0.367875) (1e+08,0.314032) (3.16228e+08,0.257316) (1e+09,0.205605) };

  \nextgroupplot[ymode=log,ymin=1.5e-5,ymax=1,ytick={1e-5,1e-4,1e-3,1e-2,1e-1,1}]
  \addplot coordinates { (1000000,0.001252) (3162000,0.000740354) (10000000,0.0004626) (31623000,0.000262309) (1e+08,0.00013523) (3.16228e+08,7.58978e-05) (1e+09,4.4454e-05) };
  \addplot coordinates { (1000000,0.001504) (3162000,0.000927894) (10000000,0.0005833) (31623000,0.000336559) (1e+08,0.00018332) (3.16228e+08,0.000103742) (1e+09,5.9134e-05) };
  \end{groupplot}
  \begin{customlegend}[
    legend entries={2lvl,OS,OS-ND,PSA,\kern0.6em Right axis:,OS,OS-ND},
    legend style={at={{($(5.25,3.8)$)}}, anchor=center, column sep=2pt},
    legend columns=-1]

    \addlegendimage{mdred,every mark/.append style={fill=mlred},mark=*}%
    \addlegendimage{mdbrown,every mark/.append style={fill=mlbrown},mark=square*}%
    \addlegendimage{mdgreen,every mark/.append style={fill=mlgreen},mark=square*}%
    \addlegendimage{mdgray,every mark/.append style={fill=mlgray},mark=diamond*}%
    \addlegendimage{empty legend}
    \addlegendimage{mdbrown,every mark/.append style={fill=mlbrown},mark=square*,dashed, mark options={solid,fill=white}}%
    \addlegendimage{mdgreen,every mark/.append style={fill=mlgreen},mark=square*,dashed, mark options={solid,fill=white}}%
  \end{customlegend}
\end{tikzpicture}
\vspace*{-2.5mm}
\caption{Intel machine (158 threads). Note the logarithmic $y$-axes for the bottom plots.}
\label{fig:tpiintel}
\end{subfigure}
\begin{subfigure}[b]{\linewidth}
\vspace*{1.2em}
\begin{tikzpicture}
  \pgfplotsset{group/.cd,
    group size=2 by 2,
    group name=plots,
    horizontal sep=1.87mm,
    vertical sep=1.87mm,
  }
  \begin{groupplot}[
    group style={y descriptions at=edge left},
    groupplot ylabel={Time per sample ($ns$)}
  ]
  \pgfplotsset{
    xmode=log,
    log basis x=10,
    width=7cm, %
    height=5cm,
    ymin=0,
    ymax=5.3,
    xmin=5e5,
    xmax=2e9,
    axis y line* = left,
    every axis plot post/.append style={error bars/.cd, y dir=both, y explicit},
    cycle list name=query,
  }
  \nextgroupplot[xticklabels={}]
  \addplot coordinates { (1000000,3.23125) (3162000,2.67115) (10000000,2.27921) (31623000,1.97834) (1e+08,2.06562) (3.16228e+08,2.3199) (1e+09,3.33658) };
  \addplot coordinates { (1000000,4.2803) (3162000,3.58754) (10000000,2.84626) (31623000,2.25082) (1e+08,1.58949) (3.16228e+08,1.20708) (1e+09,1.44524) };
  \addplot coordinates { (1000000,1.9927) (3162000,1.55838) (10000000,1.25116) (31623000,1.01619) (1e+08,0.786929) (3.16228e+08,0.553381) (1e+09,0.583396) };
  \addplot coordinates { (1000000,2.055) (3162000,1.59902) (10000000,1.30336) (31623000,1.19654) (1e+08,1.25717) (3.16228e+08,1.47126) (1e+09,1.98971) };

  \draw [mark=none,black] (5e6,3.4) node [right] {uniform input};

  \nextgroupplot[xticklabels={}]
  \addplot coordinates { (1000000,3.2303) (3162000,2.68232) (10000000,2.29818) (31623000,2.01247) (1e+08,2.15118) (3.16228e+08,2.45127) (1e+09,3.40457) };
  \addplot coordinates { (1000000,4.72725) (3162000,3.83393) (10000000,3.02264) (31623000,2.43335) (1e+08,1.87941) (3.16228e+08,1.39017) (1e+09,1.64055) };
  \addplot coordinates { (1000000,2.3819) (3162000,1.84395) (10000000,1.47906) (31623000,1.17257) (1e+08,0.950641) (3.16228e+08,0.71019) (1e+09,0.767461) };
  \addplot coordinates { (1000000,2.03458) (3162000,1.62714) (10000000,1.32054) (31623000,1.22577) (1e+08,1.24955) (3.16228e+08,1.40256) (1e+09,2.0857) };

  \draw [mark=none,black] (3e6,3.36) node [right,text width=2cm,align=right] {power-law, $s\kern-.4em =\kern-.4em 0.5$};

  \nextgroupplot[xlabel={Sample size},ymode=log,ymin=0.2,ymax=2e5,ytick={0.1,1,10,100,1e3,1e4,1e5}]
  \addplot coordinates { (1000000,6.61888) (3162000,6.14188) (10000000,5.9509) (31623000,5.94603) (1e+08,7.1045) (3.16228e+08,8.70421) (1e+09,11.4874) };
  \addplot coordinates { (1000000,5.13015) (3162000,3.90054) (10000000,3.29542) (31623000,2.59251) (1e+08,2.18692) (3.16228e+08,2.00763) (1e+09,1.79589) };
  \addplot coordinates { (1000000,2.7508) (3162000,1.99546) (10000000,1.69928) (31623000,1.45092) (1e+08,1.10886) (3.16228e+08,0.991098) (1e+09,0.91248) };
  \addplot coordinates { (1000000,4.14492) (3162000,3.59572) (10000000,3.42942) (31623000,3.619) (1e+08,4.1171) (3.16228e+08,5.39605) (1e+09,6.70286) };

  \draw [mark=none,black] (5e6,600) node [right] {power-law, $s=1$};

  \nextgroupplot[xlabel={Sample size},ymode=log,ymin=0.2,ymax=2e5,ytick={0.1,1,10,100,1e3,1e4,1e5}]
  \addplot coordinates { (1000000,2474.39) (3162000,4058.69) (10000000,7428) (31623000,12940.9) (1e+08,23364.3) (3.16228e+08,41189.5) (1e+09,73102) };
  \addplot coordinates { (1000000,295.119) (3162000,165.291) (10000000,90.3125) (31623000,50.1858) (1e+08,29.5963) (3.16228e+08,19.1403) (1e+09,13.7496) };
  \addplot coordinates { (1000000,289.203) (3162000,162.39) (10000000,92.7415) (31623000,49.4428) (1e+08,26.1578) (3.16228e+08,14.9157) (1e+09,9.32394) };
  \addplot coordinates { (1000000,1491.75) (3162000,2013.87) (10000000,2975.23) (31623000,4846.56) (1e+08,8119) (3.16228e+08,14290.9) (1e+09,26934.6) };

  \draw [mark=none,black] (5e6,600) node [right] {power-law, $s=2$};
  \end{groupplot}

  \makeatletter
  \pgfplotsset{
    every groupplot y label/.style={
      rotate=90,
      at={($({\pgfplots@group@name\space c2r1.north}-|{\pgfplots@group@name\space c2r1.outer
east})!0.5!({\pgfplots@group@name\space c2r2.south}-|{\pgfplots@group@name\space c2r2.outer east})$)},
      anchor=center %
    }
  }
  \makeatother
  \begin{groupplot}[
    group style={y descriptions at=edge right},
    groupplot ylabel={Output size / sample size}
  ]
  \pgfplotsset{
    xmode=log,
    log basis x=10,
    width=7cm, %
    height=5cm,
    xtick=\empty,
    axis y line* = right,
    ymin=0.42,
    ymax=1.1,
    xmin=5e5,
    xmax=2e9,
    every axis plot post/.append style={dashed, mark options={solid,fill=white}},
    cycle list shift=1,
    cycle list name=query,
  }

  \nextgroupplot
  \addplot coordinates { (1000000,0.999231) (3162000,0.997945) (10000000,0.993278) (31623000,0.979326) (1e+08,0.936384) (3.16228e+08,0.818828) (1e+09,0.567658) };
  \addplot coordinates { (1000000,1.0) (3162000,1.0) (10000000,1.0) (31623000,1.0) (1e+08,1.0) (3.16228e+08,1.0) (1e+09,0.994201) };

  \nextgroupplot
  \addplot coordinates { (1000000,0.997561) (3162000,0.994316) (10000000,0.984592) (31623000,0.95999) (1e+08,0.901475) (3.16228e+08,0.777127) (1e+09,0.556818) };
  \addplot coordinates { (1000000,0.999779) (3162000,0.999536) (10000000,0.998685) (31623000,0.994773) (1e+08,0.983584) (3.16228e+08,0.95102) (1e+09,0.829871) };

  \nextgroupplot[ymode=log,ymin=1.5e-5,ymax=1,ytick={1e-5,1e-4,1e-3,1e-2,1e-1,1}]
  \addplot coordinates { (1000000,0.487117) (3162000,0.433014) (10000000,0.379651) (31623000,0.325766) (1e+08,0.271763) (3.16228e+08,0.217821) (1e+09,0.164543) };
  \addplot coordinates { (1000000,0.53185) (3162000,0.476974) (10000000,0.420139) (31623000,0.367765) (1e+08,0.314042) (3.16228e+08,0.257305) (1e+09,0.205627) };

  \nextgroupplot[ymode=log,ymin=1.5e-5,ymax=1,ytick={1e-5,1e-4,1e-3,1e-2,1e-1,1}]
  \addplot coordinates { (1000000,0.001379) (3162000,0.000802024) (10000000,0.0004438) (31623000,0.000248838) (1e+08,0.00013922) (3.16228e+08,7.71405e-05) (1e+09,4.346e-05) };
  \addplot coordinates { (1000000,0.001782) (3162000,0.00100285) (10000000,0.0005809) (31623000,0.00033093) (1e+08,0.00018935) (3.16228e+08,0.000104915) (1e+09,5.823e-05) };

  \end{groupplot}
\end{tikzpicture}
\vspace*{-2.5mm}
\caption{AMD machine (62 threads). Note the logarithmic $y$-axes for the bottom plots.}
\label{fig:tpiamd}
\end{subfigure}
\caption{Time per unique output item for the different methods for $n=10^9$
  using all available cores on
  the Intel (top) and AMD (bottom) machines. Top left:
  uniform inputs, top right: power-law with $s=0.5$, bottom left: power-law
  $s=1$, bottom right: power-law $s=2$.  Dashed lines on
  the right $y$-axis belong to the same-colored solid lines on the left $y$-axis
  and show fraction of output size over sample size for output-sensitive
  algorithms.}
\label{fig:timeperitem}
\end{figure}

\FloatBarrier

\paragraph*{Throughput}%
\cref{fig:query} shows the query throughput of the different approaches
(note the logarithmic $y$-axes in the lower plots).  We can
see that \twolvl suffers a significant slowdown compared to \psa on all inputs
since an additional query for a meta-item is needed (this is also clearly
visible in \cref{fig:queryscaling}).  Its throughput is around 40\,\% lower than
sampling from an alias table.  Nonetheless, both algorithms are limited by the
latency of random accesses to memory for the (bottom) alias tables on both machines.
As long as the sample contains few duplicates (\cf the dashed lines with the scale
on the right $y$-axis, which belong to the solid lines of the same color and
marker shape), the cost of base case deduplication in \osens exceeds the
benefits of increased memory locality.  On the AMD machine, where memory access
locality is less important, this results in
higher throughput for \twolvl than for \osens for small sample sizes when inputs
are not too skewed (\cref{fig:queryamd}). As expected, when there are many duplicates, the output
sensitive algorithms (\osens and \osnd) perform very well.  Omitting base case deduplication (\osnd)
doubles throughput for uniform inputs and does no harm for skewed inputs, making
\osnd the consistently fastest algorithm.  In comparison, adding sequential
deduplication to normal alias tables using a fast hash table (Google's
\texttt{dense\_hash\_map}%
\footnote{\url{https://github.com/sparsehash/sparsehash}, version 2.0.2}) takes
five to six times longer for uniform inputs ($n=10^8$, $10^7$ samples) compared to
simply storing samples in an array without deduplication.

Lastly, we observe that \twolvl and \psa throughput levels off after $10^{7.5}\approx 3\cdot 10^7$
samples on the AMD machine (\cref{fig:queryamd}), whereas it keeps increasing slightly on the Intel
machine.  This is likely due to the Intel machine's higher overall memory
bandwidth.

\paragraph*{Time per Unique Item} \cref{fig:timeperitem} shows the time per
unique item in the sample.  This metric penalizes algorithms for emitting an
item with multiplicity greater than one multiple times, which affects the methods
based on alias tables considerably when the number of unique samples is low.
We can see that the \twolvl and \psa approaches do well
as long as few items have multiplicity larger than one, \ie the dashed lines are
close to 1 (right scale).  In these cases, what
\osens gains from having higher locality of memory accesses is lost in base case
deduplication, especially on the AMD machine.  Because it may split items'
total multiplicity over several occurrences by omitting base case deduplication,
\osnd does not suffer from this and is the fastest algorithm.  The
same is true for the power-law inputs with $s=0.5$ and $s=1$ (observe that as in
\cref{fig:query}, the $y$-axes for the lower two plots are logarithmic).  For
power-law inputs with $s=2$, the running time of \osens and \osnd is nearly
constant regardless of sample size.
This is because the number of
unique items is very low for this input (measured in the low thousands), and
thus what little time is spent on sampling is dominated by thread
synchronization and scheduling overhead.  These overheads are particularly
problematic with the 158 threads on the Intel machine (\cref{fig:tpiintel}),
where they add up to several milliseconds, around ten times as much as on the AMD
machine (\cref{fig:tpiamd}).

\subsection{Reservoir Sampling}\label{exp:res}

\begin{figure}[bt]
\begin{tikzpicture}
  \pgfplotsset{every axis title/.append style={at={(0.5,-0.3)}}}
  \begin{groupplot}[
    group style = {group size = 3 by 2, horizontal sep = 1.5mm, vertical sep = 1.5mm},
    xmode=log,
    log basis x=2,
    log ticks with fixed point,
    legend pos=north west,
    width=5.6cm,
    height=5.4cm,
    domain=0.5:512,
    cycle list name=mylist,
    ymin=0,
    ymax=1,
    enlarge y limits=0.07, %
    xtick={1,2,4,8,16,32,64,128,256},
    xticklabels={1,,4,,16,,64,,256},
    ytick={0,0.5,1},
    minor ytick={0,0.1,0.2,0.3,0.4,0.5,0.6,0.7,0.8,0.9,1},
  ]
\nextgroupplot[xticklabels={}, ylabel={Rel. efficiency, $b=10^6$},
legend style = {
  column sep=3pt,
  legend columns = 3,
  legend to name = legendweak,
  /tikz/column 2/.style={column sep=12pt},
  /tikz/column 4/.style={column sep=12pt}
}]
  \addplot coordinates { (1,1.0) (2,0.985991) (4,0.950526) (8,0.940587) (16,0.928867) (32,0.88253) (64,0.85218) (128,0.842504) (256,0.81955) };
  \addlegendentry{ours};
  \addplot coordinates { (1,0.999468) (2,0.986394) (4,0.953691) (8,0.942245) (16,0.931659) (32,0.884781) (64,0.85834) (128,0.843592) (256,0.820089) };
  \addlegendentry{ours-8};
  \addplot coordinates { (1,0.99834) (2,0.990779) (4,0.96015) (8,0.950484) (16,0.941406) (32,0.894651) (64,0.869798) (128,0.84709) (256,0.823745) };
  \addlegendentry{gather};

  \addplot[overlay,gray,dashed,update limits=false]{1} node[left,pos=0.95] {ideal scaling};
\nextgroupplot[xticklabels={}, yticklabels={}]
  \addplot coordinates { (1,1.0) (2,0.982923) (4,0.944764) (8,0.926281) (16,0.915781) (32,0.866708) (64,0.818923) (128,0.821059) (256,0.780626) };
  \addlegendentry{ours};
  \addplot coordinates { (1,1.00002) (2,0.987998) (4,0.957425) (8,0.941338) (16,0.933978) (32,0.887116) (64,0.852154) (128,0.839073) (256,0.813771) };
  \addlegendentry{ours-8};
  \addplot coordinates { (1,0.987101) (2,0.981745) (4,0.955967) (8,0.936846) (16,0.850853) (32,0.814467) (64,0.766596) (128,0.713254) (256,0.563409) };
  \addlegendentry{gather};

  \legend{};
  \addplot[overlay,gray,dashed,update limits=false]{1} node[left,pos=0.95] {ideal scaling};
\nextgroupplot[xticklabels={}, yticklabels={}]
  \addplot coordinates { (1,1.0) (2,0.968062) (4,0.937348) (8,0.913221) (16,0.889287) (32,0.811545) (64,0.793322) (128,0.790034) (256,0.735468) };
  \addlegendentry{ours};
  \addplot coordinates { (1,1.00217) (2,0.977719) (4,0.959595) (8,0.938095) (16,0.921525) (32,0.864176) (64,0.847218) (128,0.830635) (256,0.793863) };
  \addlegendentry{ours-8};
  \addplot coordinates { (1,0.849084) (2,0.832245) (4,0.810451) (8,0.777305) (16,0.727868) (32,0.638372) (64,0.522743) (128,0.370263) (256,0.11592) };
  \addlegendentry{gather};

  \legend{};
  \addplot[overlay,gray,dashed,update limits=false]{1} node[left,pos=0.95] {ideal scaling};
  \nextgroupplot[ylabel={Rel. efficiency, $b=10^5$},xlabel={Nodes ($p/20$), $k=10^3$}]
  \addplot coordinates { (1,1.0) (2,0.892154) (4,0.719446) (8,0.653672) (16,0.576039) (32,0.456795) (64,0.3462) (128,0.339914) (256,0.27327) };
  \addlegendentry{ours};
  \addplot coordinates { (1,0.953833) (2,0.836863) (4,0.686118) (8,0.619913) (16,0.549203) (32,0.433297) (64,0.336545) (128,0.327744) (256,0.261581) };
  \addlegendentry{ours-8};
  \addplot coordinates { (1,1.07382) (2,1.04557) (4,0.908646) (8,0.820152) (16,0.73861) (32,0.591463) (64,0.492267) (128,0.454325) (256,0.38588) };
  \addlegendentry{gather};

  \legend{};
  \addplot[overlay,gray,dashed,update limits=false]{1} node[below left,pos=0.95] {ideal scaling};
  \nextgroupplot[xlabel={Nodes ($p/20$), $k=10^4$},yticklabels={}]
  \addplot coordinates { (1,1.0) (2,0.859135) (4,0.712258) (8,0.655194) (16,0.57311) (32,0.453512) (64,0.33581) (128,0.327844) (256,0.249446) };
  \addlegendentry{ours};
  \addplot coordinates { (1,0.957578) (2,0.822165) (4,0.705546) (8,0.648258) (16,0.573681) (32,0.445837) (64,0.354894) (128,0.34037) (256,0.26907) };
  \addlegendentry{ours-8};
  \addplot coordinates { (1,0.889663) (2,0.83523) (4,0.763866) (8,0.723702) (16,0.596291) (32,0.499612) (64,0.415783) (128,0.379837) (256,0.259124) };
  \addlegendentry{gather};

  \legend{};
  \addplot[overlay,gray,dashed,update limits=false]{1} node[below left,pos=0.95] {ideal scaling};
  \nextgroupplot[xlabel={Nodes ($p/20$), $k=10^5$},yticklabels={}]
  \addplot coordinates { (1,1.0) (2,0.793803) (4,0.630756) (8,0.555611) (16,0.456583) (32,0.29245) (64,0.287843) (128,0.253938) (256,0.174047) };
  \addlegendentry{ours};
  \addplot coordinates { (1,1.00228) (2,0.82912) (4,0.716015) (8,0.645291) (16,0.560559) (32,0.391281) (64,0.370456) (128,0.334912) (256,0.249597) };
  \addlegendentry{ours-8};
  \addplot coordinates { (1,0.193007) (2,0.189177) (4,0.183561) (8,0.179373) (16,0.168911) (32,0.154608) (64,0.13468) (128,0.106281) (256,0.0406049) };
  \addlegendentry{gather};

  \legend{};
  \addplot[overlay,gray,dashed,update limits=false]{1} node[below left,pos=0.95] {ideal scaling};
\end{groupplot}

\hidelinks
\node at ($(group c2r1) + (0,2.3)$) {\ref{legendweak}};
\restorelinks

\end{tikzpicture}
\vspace*{-8mm}
\caption{Reservoir sampling, weak scaling with different batch (rows) and sample sizes
  (columns).  Efficiency (speedup divided by number of PEs) is measured relative
  to our algorithm with single-pivot selection (\emph{ours}) on 1 node (20 cores).}
\label{fig:resweak}
\end{figure}

The results of a scaling experiment are shown in \cref{fig:resweak}, with plots
in two rows for per-PE mini-batch sizes $b=10^6$ (top) and $10^5$ (bottom),
and three columns with sample sizes $k$ of $10^3$, $10^4$,
and $10^5$ items, from left to right.\footnote{The single-node times per batch
  with regard to which the speedups are reported are, by row from top left to
  bottom right, %
  $3.10\,ms$, $3.14\,ms$, $3.24\,ms$; %
  $116\,\mu s$, $122\,\mu s$, $131\,\mu s$.}  %
The figure shows the efficiency of the algorithms, measured relative to our
algorithm with single-pivot selection (\emph{ours}) for the same batch and
sample sizes on a single node ($p=20$ cores/PEs).

We can see that our algorithm shows good scaling, especially for larger
mini-batch sizes.  Using multiple pivots for selection (\emph{ours-8}, in blue)
is especially beneficial for larger sample sizes (center and right columns),
where it reduces average recursion depth by a factor of around 2.5 -- from 7.3
to 2.7 for $k=10^5$ and from 4.3 to 1.8 for $k=10^4$ -- compared to a much
smaller improvement from 1.9 to 1.1 for $k=10^3$ (left column), where the
average recursion depth is already very low when using a single pivot.  This
results in selection running time improvements of up to $35\,\%$ for $k=10^5$
and around $20\,\%$ for $k=10^4$, with no significant improvement for $k=10^3$.
Because local processing is a significant part of overall processing time, the
actual overall running time improvement is only around $8\,\%$ ($k=10^5$ and
$b=10^6$). As expected, smaller samples (left column) achieve slightly better
speedups than larger ones (right column).  The causes for this lie in the
$\Ohsmall{\log k \log p}$ latency of the selection algorithm and increased local
processing time due to larger local reservoirs.

We also see clearly that the centralized algorithm (\emph{gather}, in green)
performs well only when the sample size is very small ($k=10^3$, left column) or
both sample size and batch size are rather small (bottom center).  The common
denominator of these settings is that few candidates need to be gathered per
batch.  It begins struggling even with $k=10^4$ for large batch sizes.  For
larger sample sizes ($k=10^5$, right column), it performs very poorly.  All
algorithms show better -- and in the case of our algorithm, near-optimal --
scaling for large batch sizes, as communication overhead is much less noticeable
than for small batches, where local processing is fast.

Overall, \emph{ours-8} is the most consistently fast algorithm, and using
multiple pivots (\emph{ours-8}) should be preferred over the version with
single-pivot selection (\emph{ours}).

Ref.~\cite[Chapter 3.8.4]{lorenzdiss} presents additional experimental results.  A running time composition analysis
confirms that for $k=10^5$, local processing dominates our algorithms' running
time for larger batch sizes, whereas the centralized algorithm spends most of
its time on selection and, as $p$ grows, gathering the candidate items.  Strong
scaling experiments in Ref.~\cite{lorenzdiss} confirm consistent scaling of our
method as long as per-PE batch sizes do not drop below around~$10^4$.

\section{Conclusions and Future Work}\label{concl}

We have presented parallel algorithms for a wide spectrum of weighted
sampling problems running on a variety of machine models. The
algorithms are at the same time efficient in theory and sufficiently
simple to be practically useful.
Our experiments show that alias table computation can be parallelized
efficiently.  For random inputs, the variant \psag combines the high
construction speedup of \twolvl with the query performance of a normal,
single-level alias table.  For skewed inputs, \osens and even more so its
variant \osnd show excellent query performance.

Since the preparation of this paper, we have supervised a Master's thesis
exploring alias table construction and querying on GPUs \cite{mahpl}.  With
careful consideration of GPUs' memory systems, significant speedups over CPUs
can be achieved, but table size is limited by the available memory.

Future work could consider further
implementations, such as the block-wise algorithm of
Section~\ref{se:block}.  For implementing \PWithout one
could perhaps obtain constant factor improvements by studying more
accurate estimators for the output size of a sample with replacement.
For \PPerm it would be interesting to see whether our transformation
to integer sorting in a small range is worthwhile in practice compared to
direct radix sorting or state-of-the-art comparison-based sorting
algorithms.  %

Our algorithms are formulated for
explicitly parallel models of computing. It could also be interesting
to adapt them to the more abstract, implicitly parallel models used in
data bases or big data tools such as MapReduce \cite{DeanGhemawat08},
Spark \cite{Spark16}, or Thrill \cite{sanders2016_1}.
Let us discuss this for problem \POne.
Thrill seems
particularly suited for data structures like alias tables since it
natively supports arrays. Thrill also supports a prefix sum operation
that could easily be used to build a rejection sampling based table
similar to the one used by \citet{BLK13} and outlined in
Section~\ref{ss:succinct}.  Other systems that are based on sets or relations
could emulate arrays by explicitly storing the array index as the first
component of a tuple.
Batched queries can be supported by join or
merge operations with a set of random indices. However, for this to be
efficient, the batch needs to have size $\Om{n}$ or the join operation
must keep the larger join partner in-memory allowing sublinear time
access to the data.
We can also consider a more coarse-grained approach emulating the
distributed-memory approach of Section~\ref{dist} with $p=\sqrt{n}$
PEs.  The table of $\sqrt{n}$ meta-items would be replicated while
each distributed object would store a small alias table for
$\sqrt{n}$ objects.

\bibliographystyle{plainnats}
\bibliography{diss}

\begin{thebibliography}{74}
\providecommand{\natexlab}[1]{#1}
\providecommand{\url}[1]{\texttt{#1}}
\expandafter\ifx\csname urlstyle\endcsname\relax
  \providecommand{\doi}[1]{doi: #1}\else
  \providecommand{\doi}{doi: \begingroup \urlstyle{rm}\Url}\fi

\bibitem[Ahrens et~al.(1985)Ahrens and Dieter]{AhrDie85}
Joachim~H. Ahrens and Ulrich Dieter.
\newblock Sequential random sampling.
\newblock \emph{ACM Transactions on Mathematical Software (TOMS)}, 11\penalty0
  (2):\penalty0 157--169, June 1985.

\bibitem[Akhremtsev et~al.(2016)Akhremtsev and Sanders]{akhremtsev2016tree}
Yaroslav Akhremtsev and Peter Sanders.
\newblock Fast parallel operations on search trees.
\newblock In \emph{23rd Intl. Conference on High Performance Computing (HiPC)},
  pages 291--300. IEEE, 2016.

\bibitem[Arratia(2002)]{arratia2002expclock}
Richard Arratia.
\newblock On the amount of dependence in the prime factorization of a uniform
  random integer.
\newblock \emph{Contemporary combinatorics}, 10:\penalty0 29--91, 2002.
\newblock Page 36.

\bibitem[Batcher(1968)]{Batcher1968sortnet}
Kenneth~E. Batcher.
\newblock Sorting networks and their applications.
\newblock In \emph{American Federation of Information Processing Societies
  ({AFIPS}) Conference}, volume~32, pages 307--314, 1968.
\newblock \doi{10.1145/1468075.1468121}.

\bibitem[Berenbrink et~al.(2020)]{berenbrink2020dynalias}
Petra Berenbrink et~al.
\newblock Simulating population protocols in sub-constant time per interaction.
\newblock In \emph{28th European Symposium on Algorithms (ESA)}, volume 173 of
  \emph{LIPIcs}, 2020.
\newblock \doi{10.4230/LIPIcs.ESA.2020.16}.

\bibitem[Bingmann(2018)]{tlx}
Timo Bingmann.
\newblock {TLX}: Collection of sophisticated {C}++ data structures, algorithms,
  and miscellaneous helpers, 2018.
\newblock {\url{https://panthema.net/tlx}}.

\bibitem[{Bingmann} et~al.(2016)]{sanders2016_1}
Timo {Bingmann} et~al.
\newblock Thrill: High-performance algorithmic distributed batch data
  processing with {C$++$}.
\newblock In \emph{2016 IEEE International Conference on Big Data}, pages
  172--183. IEEE, 2016.

\bibitem[Blelloch(1989)]{Ble89}
Guy~E. Blelloch.
\newblock Scans as primitive parallel operations.
\newblock \emph{IEEE Transactions on Computers}, 38\penalty0 (11):\penalty0
  1526--1538, Nov 1989.

\bibitem[Boucheron et~al.(2013)Boucheron, Lugosi, and Massart]{BLM13concineq}
St{\'e}phane Boucheron, G{\'a}bor Lugosi, and Pascal Massart.
\newblock \emph{Concentration inequalities: A nonasymptotic theory of
  independence}.
\newblock Oxford university press, 2013.

\bibitem[Braverman et~al.(2015)Braverman, Ostrovsky, and
  Vorsanger]{braverman2015float}
Vladimir Braverman, Rafail Ostrovsky, and Gregory Vorsanger.
\newblock Weighted sampling without replacement from data streams.
\newblock \emph{Information Processing Letters}, 115\penalty0 (12):\penalty0
  923--926, 2015.

\bibitem[Brewer et~al.(1983)Brewer and Hanif]{brewer1983unequal}
Ken~RW Brewer and Muhammad Hanif.
\newblock \emph{Sampling with unequal probabilities}, volume~15 of
  \emph{Lecture Notes in Statistics}.
\newblock Springer Science \& Business Media, 1983.

\bibitem[Bringmann et~al.(2013)Bringmann and Larsen]{BLK13}
Karl Bringmann and Kasper~Green Larsen.
\newblock Succinct sampling from discrete distributions.
\newblock In \emph{45th ACM Symposium on Theory of Computing (STOC)}, pages
  775--782. ACM, 2013.

\bibitem[Bringmann et~al.(2017)Bringmann and Panagiotou]{bringmann2017subset}
Karl Bringmann and Konstantinos Panagiotou.
\newblock Efficient sampling methods for discrete distributions.
\newblock \emph{Algorithmica}, 79\penalty0 (2):\penalty0 484--508, 2017.

\bibitem[Chao(1982)]{chao1982reservoir}
M.~T. Chao.
\newblock A general purpose unequal probability sampling plan.
\newblock \emph{Biometrika}, 69\penalty0 (3):\penalty0 653--656, 1982.

\bibitem[Chung et~al.(2003)Chung and Lu]{chung2003randgraph}
Fan R.~K. Chung and Linyuan Lu.
\newblock The average distance in a random graph with given expected degrees.
\newblock \emph{Internet Mathematics}, 1\penalty0 (1):\penalty0 91--113, 2003.
\newblock \doi{10.1080/15427951.2004.10129081}.

\bibitem[Chung et~al.(2016)Chung, Tirthapura, and Woodruff]{chung2016simple}
Yung-Yu Chung, Srikanta Tirthapura, and David~P. Woodruff.
\newblock A simple message-optimal algorithm for random sampling from a
  distributed stream.
\newblock \emph{IEEE Transactions on Knowledge and Data Engineering},
  28\penalty0 (6):\penalty0 1356--1368, 2016.

\bibitem[Cohen et~al.(2007)Cohen and Kaplan]{cohen2007bottomk}
Edith Cohen and Haim Kaplan.
\newblock Summarizing data using bottom-$k$ sketches.
\newblock In \emph{26th Annual {ACM} Symposium on Principles of Distributed
  Computing ({PODC} 2007)}, pages 225--234. {ACM}, 2007.
\newblock \doi{10.1145/1281100.1281133}.

\bibitem[Cole(1988)]{Col88}
Richard Cole.
\newblock Parallel merge sort.
\newblock \emph{SIAM Journal on Computing}, 17\penalty0 (4):\penalty0 770--785,
  1988.

\bibitem[Cormode(2013)]{cormode2013monitoring}
Graham Cormode.
\newblock The continuous distributed monitoring model.
\newblock \emph{ACM SIGMOD Record}, 42\penalty0 (1):\penalty0 5--14, 2013.

\bibitem[Cormode et~al.(2010)Cormode, Muthukrishnan, Yi, and
  Zhang]{cormode2010optimal}
Graham Cormode, S~Muthukrishnan, Ke~Yi, and Qin Zhang.
\newblock Optimal sampling from distributed streams.
\newblock In \emph{29th ACM Symposium on Principles of Database Systems (PODS
  '10)}, pages 77--86. ACM, 2010.

\bibitem[Cormode et~al.(2012)Cormode, Muthukrishnan, Yi, and
  Zhang]{cormode2012continuous}
Graham Cormode, S~Muthukrishnan, Ke~Yi, and Qin Zhang.
\newblock Continuous sampling from distributed streams.
\newblock \emph{Journal of the ACM (JACM)}, 59\penalty0 (2):\penalty0 10, 2012.

\bibitem[Dean et~al.(2008)Dean and Ghemawat]{DeanGhemawat08}
Jeffrey Dean and Sanjay Ghemawat.
\newblock Mapreduce: simplified data processing on large clusters.
\newblock \emph{Commun. ACM}, 51:\penalty0 107--113, January 2008.
\newblock ISSN 0001-0782.
\newblock \doi{http://doi.acm.org/10.1145/1327452.1327492}.
\newblock URL \url{http://doi.acm.org/10.1145/1327452.1327492}.

\bibitem[Devroye(1986)]{devroye1986}
Luc Devroye.
\newblock \emph{Non-Uniform Random Variate Generation}.
\newblock Springer, 1986.

\bibitem[Efraimidis(2015)]{efraimidis2015stream}
Pavlos~S Efraimidis.
\newblock Weighted random sampling over data streams.
\newblock In \emph{Algorithms, Probability, Networks, and Games: Scientific
  Papers and Essays Dedicated to Paul G. Spirakis on the Occasion of His 60th
  Birthday}, pages 183--195. Springer, 2015.

\bibitem[Efraimidis et~al.(1999)Efraimidis and Spirakis]{efraimidis1999par}
Pavlos~S Efraimidis and Paul~G Spirakis.
\newblock Fast parallel weighted random sampling.
\newblock Technical Report TR99.04.02, CTI Patras, 1999.

\bibitem[Efraimidis et~al.(2006)Efraimidis and
  Spirakis]{efraimidis2006reservoir}
Pavlos~S Efraimidis and Paul~G Spirakis.
\newblock Weighted random sampling with a reservoir.
\newblock \emph{Information Processing Letters}, 97\penalty0 (5):\penalty0
  181--185, 2006.

\bibitem[Fan et~al.(1962)Fan, Muller, and Rezucha]{fan62res}
CT~Fan, Mervin~E Muller, and Ivan Rezucha.
\newblock Development of sampling plans by using sequential (item by item)
  selection techniques and digital computers.
\newblock \emph{Journal of the American Statistical Association}, 57\penalty0
  (298):\penalty0 387--402, 1962.

\bibitem[Galassi et~al.(2009)Galassi, Davies, Theiler, Gough, Jungmann, Alken,
  Booth, Rossi, and Ulerich]{GSL}
Mark Galassi, Jim Davies, James Theiler, Brian Gough, Gerard Jungmann, Patrick
  Alken, Michael Booth, Fabrice Rossi, and Rhys Ulerich.
\newblock \emph{GNU scientific library: reference manual}.
\newblock Network Theory, 3 edition, 2009.
\newblock ISBN 0954612078.

\bibitem[Galerne et~al.(2012)Galerne, Lagae, Lefebvre, and
  Drettakis]{GLLD2012gabornoise}
Bruno Galerne, Ares Lagae, Sylvain Lefebvre, and George Drettakis.
\newblock Gabor noise by example.
\newblock \emph{{ACM} Trans. Graph.}, 31\penalty0 (4):\penalty0 73:1--73:9,
  2012.
\newblock \doi{10.1145/2185520.2185569}.

\bibitem[Gottlieb et~al.(1983)]{gottlieb1983ultracomp}
Allan Gottlieb et~al.
\newblock The {NYU} ultracomputer -- designing an {MIMD} shared memory parallel
  computer.
\newblock \emph{IEEE Transactions on Computers}, 32\penalty0 (2):\penalty0
  175--189, 1983.
\newblock \doi{10.1109/TC.1983.1676201}.

\bibitem[Hagerup(1991)]{Hag91}
Torben Hagerup.
\newblock Fast parallel generation of random permutations.
\newblock In \emph{18th International Colloquium on Automata, Languages and
  Programming (ICALP)}, pages 405--416. Springer, 1991.

\bibitem[Hagerup et~al.(1993)Hagerup, Mehlhorn, and Munro]{hagerup1993dynamic}
Torben Hagerup, Kurt Mehlhorn, and J~Ian Munro.
\newblock Maintaining discrete probability distributions optimally.
\newblock In \emph{20th International Colloquium on Automata, Languages, and
  Programming (ICALP)}, pages 253--264. Springer, 1993.

\bibitem[H{\'a}jek(1964)]{hajek1964poisson}
Jaroslav H{\'a}jek.
\newblock Asymptotic theory of rejective sampling with varying probabilities
  from a finite population.
\newblock \emph{The Annals of Mathematical Statistics}, pages 1491--1523, 1964.

\bibitem[Hansen et~al.(1943)Hansen and Hurwitz]{hansen1943estimator}
Morris~H Hansen and William~N Hurwitz.
\newblock On the theory of sampling from finite populations.
\newblock \emph{The Annals of Mathematical Statistics}, 14\penalty0
  (4):\penalty0 333--362, 1943.

\bibitem[H{\"u}bschle-Schneider et~al.(2016)H{\"u}bschle-Schneider and
  Sanders]{HubSan16topk}
Lorenz H{\"u}bschle-Schneider and Peter Sanders.
\newblock Communication efficient algorithms for top-$k$ selection problems.
\newblock In \emph{30th International Parallel and Distributed Processing
  Symposium (IPDPS)}, pages 659--668. IEEE, 2016.

\bibitem[H{\"{u}}bschle{-}Schneider et~al.(2019)H{\"{u}}bschle{-}Schneider and
  Sanders]{HubSan19c}
Lorenz H{\"{u}}bschle{-}Schneider and Peter Sanders.
\newblock Parallel weighted random sampling.
\newblock In \emph{27th European Symposium on Algorithms (ESA)}, 2019.

\bibitem[Hübschle-Schneider(2020)]{lorenzdiss}
Lorenz Hübschle-Schneider.
\newblock \emph{Communication-Efficient Probabilistic Algorithms: Selection,
  Sampling, and Checking}.
\newblock PhD thesis, Karlsruher Institut für Technologie (KIT), 2020.

\bibitem[Hübschle-Schneider et~al.(2020)Hübschle-Schneider and
  Sanders]{HubSan20}
Lorenz Hübschle-Schneider and Peter Sanders.
\newblock Communication-efficient weighted reservoir sampling from fully
  distributed data streams.
\newblock In \emph{32nd ACM Symp. on Parallelism in Algorithms and
  Architectures (SPAA)}, 2020.

\bibitem[Hübschle-Schneider et~al.(2015)Hübschle-Schneider, Sanders, and
  Müller]{topkTR}
Lorenz Hübschle-Schneider, Peter Sanders, and Ingo Müller.
\newblock Communication efficient algorithms for top-$k$ selection problems.
\newblock \emph{Computing Research Repository (CoRR)}, 2 2015.

\bibitem[Intel(2019)]{intel-mkl}
Intel.
\newblock \emph{Intel Math Kernel Library 2019}.
\newblock Intel, 2019.
\newblock \url{https://software.intel.com/en-us/mkl-reference-manual-for-c}.

\bibitem[J{\'a}{J}{\'a}(1992)]{Jaj92}
Joseph J{\'a}{J}{\'a}.
\newblock \emph{An Introduction to Parallel Algorithms}.
\newblock Addison Wesley, 1992.

\bibitem[Jayaram et~al.(2019)Jayaram, Sharma, Tirthapura, and
  Woodruff]{JSTW19wres}
Rajesh Jayaram, Gokarna Sharma, Srikanta Tirthapura, and David~P. Woodruff.
\newblock Weighted reservoir sampling from distributed streams.
\newblock In \emph{38th ACM Symposium on Principles of Database Systems (PODS
  '19)}, pages 218--235. ACM, 2019.
\newblock ISBN 978-1-4503-6227-6.

\bibitem[Kumar et~al.(1994)Kumar, Grama, Gupta, and Karypis]{KumEtAl94}
Vipin Kumar, Ananth Grama, Anshul Gupta, and George Karypis.
\newblock \emph{Introduction to Parallel Computing. Design and Analysis of
  Algorithms}.
\newblock Benjamin/Cummings, 1994.

\bibitem[Lang(2014)]{lang2014riffle}
Kevin~J Lang.
\newblock Practical algorithms for generating a random ordering of the elements
  of a weighted set.
\newblock \emph{Theory of Computing Systems}, 54\penalty0 (4):\penalty0
  659--688, 2014.

\bibitem[Lehmann(2020)]{mahpl}
Hans-Peter Lehmann.
\newblock Weighted random sampling: Alias tables on the gpu.
\newblock Master's thesis, Karlsruhe Institute of Technology (KIT), 2020.

\bibitem[Li(1994)]{li1994reservoir}
Kim-Hung Li.
\newblock Reservoir-sampling algorithms of time complexity
  {$\Oh{n(1+\log(N/n))}$}.
\newblock \emph{ACM Transactions on Mathematical Software (TOMS)}, 20\penalty0
  (4):\penalty0 481--493, 1994.

\bibitem[Marsaglia et~al.(2004)Marsaglia, Tsang, Wang,
  et~al.]{marsaglia2004fixedbase}
George Marsaglia, Wai~Wan Tsang, Jingbo Wang, et~al.
\newblock Fast generation of discrete random variables.
\newblock \emph{Journal of Statistical Software}, 11\penalty0 (3):\penalty0
  1--11, 2004.

\bibitem[Matias et~al.(2003)Matias, Vitter, and Ni]{vitter2013theory}
Yossi Matias, Jeffrey~Scott Vitter, and Wen-Chun Ni.
\newblock Dynamic generation of discrete random variates.
\newblock \emph{Theory of Computing Systems}, 36\penalty0 (4):\penalty0
  329--358, 2003.

\bibitem[Matsumoto et~al.(1998)Matsumoto and Nishimura]{MatNis98}
M.~Matsumoto and T.~Nishimura.
\newblock Mersenne twister: {A} 623-dimensionally equidistributed uniform
  pseudo-random number generator.
\newblock \emph{ACMTMCS: ACM Transactions on Modeling and Computer Simulation},
  8:\penalty0 3--30, 1998.

\bibitem[Maue et~al.(2007)Maue and Sanders]{MauSan07}
Jens Maue and Peter Sanders.
\newblock Engineering algorithms for approximate weighted matching.
\newblock In \emph{6th Workshop on Experimental Algorithms ({WEA})}, pages
  242--255. Springer, 2007.

\bibitem[Motwani et~al.(1995)Motwani and Raghavan]{motwani1995randomized}
Rajeev Motwani and Prabhakar Raghavan.
\newblock \emph{Randomized Algorithms}.
\newblock Cambridge University Press, 1995.

\bibitem[M{\"u}ller(2016)]{muller2016eval}
Kirill M{\"u}ller.
\newblock Accelerating weighted random sampling without replacement.
\newblock \emph{Arbeitsberichte Verkehrs-und Raumplanung}, 1141, 2016.

\bibitem[Olken et~al.(1995)Olken and Rotem]{olken1995survey}
Frank Olken and Doron Rotem.
\newblock Random sampling from databases: a survey.
\newblock \emph{Statistics and Computing}, 5\penalty0 (1):\penalty0 25--42,
  1995.

\bibitem[{R Core Team}(2019)]{Rman}
{R Core Team}.
\newblock \emph{R: A Language and Environment for Statistical Computing}.
\newblock R Foundation for Statistical Computing, Vienna, Austria, 2019.
\newblock URL \url{https://www.R-project.org}.

\bibitem[Raab et~al.(1998)Raab and Steger]{raab1998balls}
Martin Raab and Angelika Steger.
\newblock “balls into bins”—a simple and tight analysis.
\newblock In \emph{International Workshop on Randomization and Approximation
  Techniques in Computer Science}, pages 159--170. Springer, 1998.

\bibitem[Rajasekaran et~al.(1989)Rajasekaran and Reif]{raja1989sort}
Sanguthevar Rajasekaran and John~H Reif.
\newblock Optimal and sublogarithmic time randomized parallel sorting
  algorithms.
\newblock \emph{SIAM Journal on Computing}, 18\penalty0 (3):\penalty0 594--607,
  1989.

\bibitem[Ranade(1991)]{ranade91}
Abhiram~G Ranade.
\newblock How to emulate shared memory.
\newblock \emph{Journal of Computer and System Sciences}, 42\penalty0
  (3):\penalty0 307--326, 1991.

\bibitem[Sanders(1996)]{San96ahab}
Peter Sanders.
\newblock On the competitive analysis of randomized static load balancing.
\newblock In S.~Rajasekaran, editor, \emph{First Workshop on Randomized
  Parallel Algorithms}, Honolulu, Hawaii, 1996.
\newblock {\url{http://algo2.iti.kit.edu/sanders/papers/rand96.pdf}}.

\bibitem[Sanders(1998)]{San98c}
Peter Sanders.
\newblock Random permutations on distributed, external and hierarchical memory.
\newblock \emph{Information Processing Letters}, 67\penalty0 (6):\penalty0
  305--310, 1998.

\bibitem[Sanders et~al.(2013)Sanders, Schlag, and M{\"u}ller]{SSM13}
Peter Sanders, Sebastian Schlag, and Ingo M{\"u}ller.
\newblock Communication efficient algorithms for fundamental big data problems.
\newblock In \emph{2013 IEEE International Conference on Big Data}, pages
  15--23. IEEE, 2013.

\bibitem[Sanders et~al.(2018)Sanders, Lamm, H{\"{u}}bschle{-}Schneider,
  Schrade, and Dachsbacher]{SandersLHSD18}
Peter Sanders, Sebastian Lamm, Lorenz H{\"{u}}bschle{-}Schneider, Emanuel
  Schrade, and Carsten Dachsbacher.
\newblock Efficient random sampling -- parallel, vectorized, cache-efficient,
  and online.
\newblock \emph{{ACM} Transactions on Mathematical Sofware ({TOMS})},
  44\penalty0 (3):\penalty0 29:1--29:14, 2018.

\bibitem[Sanders et~al.(2019)Sanders, Mehlhorn, Dietzfelbinger, and
  Dementiev]{MehSanPar}
Peter Sanders, Kurt Mehlhorn, Martin Dietzfelbinger, and Roman Dementiev.
\newblock \emph{{Sequential and Parallel Algorithms and Data Structures -- The
  Basic Toolbox}}.
\newblock Springer, 2019.

\bibitem[Shun(2017)]{Shun2017improved}
Julian Shun.
\newblock Improved parallel construction of wavelet trees and rank/select
  structures.
\newblock In \emph{2017 Data Compression Conference (DCC)}, pages 92--101.
  IEEE, 2017.

\bibitem[Sunter(1977)]{sunter1977list}
AB~Sunter.
\newblock List sequential sampling with equal or unequal probabilities without
  replacement.
\newblock \emph{Journal of the Royal Statistical Society: Series C (Applied
  Statistics)}, 26\penalty0 (3):\penalty0 261--268, 1977.

\bibitem[Tangwongsan et~al.(2019)Tangwongsan and Tirthapura]{TT19parstream}
Kanat Tangwongsan and Srikanta Tirthapura.
\newblock Parallel streaming random sampling.
\newblock In \emph{Euro-Par 2019: Parallel Processing}, pages 451--465.
  Springer, 2019.
\newblock ISBN 978-3-030-29400-7.

\bibitem[Till{\'e}(2006)]{tille2006sampling}
Yves Till{\'e}.
\newblock \emph{Sampling algorithms}.
\newblock Springer Series in Statistics. Springer, 2006.

\bibitem[Tirthapura et~al.(2011)Tirthapura and Woodruff]{woodruff2011optimal}
Srikanta Tirthapura and David~P. Woodruff.
\newblock Optimal random sampling from distributed streams revisited.
\newblock In \emph{25th International Symposium on Distributed Computing (DISC
  '11)}, pages 283--297. Springer, 2011.

\bibitem[{U.S. Census Bureau}(2021)]{uscensusasm}
{U.S. Census Bureau}.
\newblock Annual survey of manufactures methodology, 2021.
\newblock URL
  \url{https://www.census.gov/programs-surveys/asm/technical-documentation/methodology.html}.

\bibitem[Vitter(1985)]{Vit85}
Jeffrey~S. Vitter.
\newblock Random sampling with a reservoir.
\newblock \emph{ACM Transactions on Mathematical Software (TOMS)}, 11\penalty0
  (1):\penalty0 37--57, March 1985.

\bibitem[Vose(1991)]{vose1991alias}
Michael~D. Vose.
\newblock A linear algorithm for generating random numbers with a given
  distribution.
\newblock \emph{IEEE Transactions on Software Engineering (TSE)}, 17\penalty0
  (9):\penalty0 972--975, 1991.

\bibitem[Walker(1977)]{walker1977alias}
Alastair~J Walker.
\newblock An efficient method for generating discrete random variables with
  general distributions.
\newblock \emph{ACM Transactions on Mathematical Software (TOMS)}, 3\penalty0
  (3):\penalty0 253--256, 1977.

\bibitem[Wong et~al.(1980)Wong and Easton]{wong1980norep}
Chak-Kuen Wong and Malcolm~C. Easton.
\newblock An efficient method for weighted sampling without replacement.
\newblock \emph{SIAM Journal on Computing}, 9\penalty0 (1):\penalty0 111--113,
  1980.

\bibitem[Zaharia et~al.(2013)]{sparkstream2013}
Matei Zaharia et~al.
\newblock Discretized streams: Fault-tolerant streaming computation at scale.
\newblock In \emph{24th ACM Symposium on Operating Systems Principles (SOSP)},
  pages 423--438. ACM, 2013.

\bibitem[Zaharia et~al.(2016)]{Spark16}
Matei Zaharia et~al.
\newblock Apache {S}park: a unified engine for big data processing.
\newblock \emph{Communications of the ACM}, 59\penalty0 (11):\penalty0 56--65,
  2016.

\end{thebibliography}

\end{document}